\documentclass[aps,twocolumn,
 superscriptaddress,amsmath,amssymb,nofootinbib]{revtex4-2}
\usepackage{amsfonts}
\usepackage{amsmath}
\usepackage{amsthm}
\usepackage{bbm}
\usepackage{amssymb}
\usepackage{graphicx}%
\usepackage{footmisc}
\usepackage{bm}
\usepackage{braket}
\usepackage{hyperref}
\usepackage{cleveref}
\usepackage{longtable,booktabs}
\usepackage{subfigure}
\usepackage{cancel}
\newtheorem{lemma}{Lemma}
\newtheorem{proposition}{Proposition} 
\newtheorem{corollary}{Corollary}
\theoremstyle{remark}
\newtheorem*{remark}{Remark}

\def\XXint#1#2#3{{\setbox0=\hbox{$#1{#2#3}{\int}$}
    \vcenter{\hbox{$#2#3$}}\kern-.5\wd0}}

\usepackage{xeCJK}

\newcommand{\Tr}{\text{Tr}}

\begin{document}

\title{A Multi-Resolvent Hierarchy for the ETH Smooth Function}

\author{Zhiqiang Huang (黄志强)}
\email{zqhuang@hubu.edu.cn}
\affiliation{School of Physics, Hubei University, Wuhan 430062, China.}

\date{\today}

\begin{abstract}
The eigenstate thermalization hypothesis (ETH) provides a statistical
description of thermalization in isolated quantum many-body systems,
yet the phenomenological smooth function $f_O(\bar{E},\omega)$---which
controls the energy dependence of off-diagonal matrix elements---lacks
a systematic microscopic foundation.
We develop a multi-resolvent hierarchy for the correlation corrections
entering the ETH smooth function.
Using recursive projection identities together with a diagonal closure
approximation (DCA), the hierarchy organizes multi-channel interference
processes by the number of interacting bath channels, replacing the
uncontrolled neglect of higher-order correlations with a systematically
improvable expansion.
The ETH smooth function is thereby obtained as
$f_{ji}^2 = D_{ji} + \sum_{r\ge 2} g_{ji}^{(r)}$,
where the diagonal baseline $D_{ji}$ and each correlation level
$g_{ji}^{(r)}$ are expressed entirely through diagonal spectral
functions and microscopic interaction couplings, providing a unified,
closed at the mean level, and systematically improvable microscopic
theory of the smooth function.
A rigorous projector sum rule constrains the entire hierarchy: the
integrated off-diagonal correlation carries a negative bias of order
unity, a consequence of projector idempotency.
The hierarchy further reveals a parity structure in which
even-$r$ sectors carry even parity under $\omega\to-\omega$ while
the $r=3$ sector generates the first odd-parity (skewness)
contribution---inaccessible within the parity-preserving
single-resolvent closures considered here---suggesting
experimentally testable signatures in quantum many-body systems.
A companion first-principles analysis (Appendix) shows that the
entropy normalization and the diagonal--off-diagonal trade-off of
the ETH ansatz are exact counting theorems, and that the fluctuation
sector of the same hierarchy governs the statistics of the ETH
matrix elements: in the decorrelated (chaotic) regime it selects the
Gaussian, self-averaging limit and fixes the
diagonal--off-diagonal fluctuation ratio, while the failure of
decorrelation in integrable systems accounts for the observed
enhancement of diagonal fluctuations.
\end{abstract}


\maketitle

\section{Introduction}
\label{INTRO}

Understanding how isolated quantum many-body systems
approach thermal equilibrium remains a central problem in
statistical physics.
For generic nonintegrable systems, the eigenstate
thermalization hypothesis (ETH) provides the most successful
framework for explaining thermalization directly at the
level of individual many-body eigenstates
\cite{Deu91,Sre94,Rig08,DKPR16}.

The ETH ansatz states that matrix elements of a local
observable $\hat O$ in the energy eigenbasis take the form
\begin{equation}
O_{nm}
=
O(\bar E)\,\delta_{nm}
+
e^{-S(\bar E)/2}\,
f_O(\bar E,\omega)\,
R_{nm},
\end{equation}
where $S(\bar E)$ is the thermodynamic entropy,
$f_O(\bar E,\omega)$ is a smooth structure function,
and $R_{nm}$ is a fluctuating quantity with unit variance.
This formulation successfully explains thermalization,
linear response, and spectral statistics in a broad class
of quantum chaotic systems
\cite{DKPR16}.

Despite its success, the ETH ansatz remains largely
phenomenological.
The smooth function $f_O(\bar E,\omega)$ and the statistical
properties of the fluctuation variable $R_{nm}$ are usually
introduced as assumptions rather than derived from a
microscopic framework.
While recent developments have explored higher-order
generalizations of ETH and non-Gaussian fluctuations
\cite{FK19},
a systematic microscopic theory that determines the smooth
function $f_O(\bar{E},\omega)$---and the correlation
corrections that enrich it---remains an open challenge.

A natural language for addressing such questions is
provided by Green's functions and resolvents
\cite{Eco06,Mah00}.
In interacting many-body systems, resolvents encode the
full spectral information of the Hamiltonian and form the
basis of self-energy and projection-operator approaches
\cite{Fes58}.
However, conventional self-consistent approximations,
such as self-consistent Born-type closures, typically
truncate the hierarchy at the level of a single diagonal
resolvent.
As a consequence, correlations generated by simultaneous
propagation through multiple interacting channels are
either neglected or absorbed into effective parameters.

In this work we develop a nonperturbative multi-resolvent
hierarchy that reorganizes the diagonal Green's-function
expansion into a systematic expansion of explicitly
correlated propagation processes.
Starting from exact projection identities and the spectral
representation of the resolvent, we derive a recursive
decomposition in which off-diagonal propagation is
systematically expressed through products of diagonal
resolvents.
This construction naturally generates a hierarchy of
multi-channel interference contributions involving
increasing numbers of interacting bath channels.

This hierarchy yields a microscopic theory of the ETH
smooth function $f_{ji}(E,\omega)$, decomposing $f_{ji}^2$ into a
diagonal overlap baseline $D_{ji}$, determined by single-resolvent
self-consistency, and a systematically improvable multi-resolvent
series $g_{ji} = \sum_{r\ge 2} g_{ji}^{(r)}$, where each level
$g_{ji}^{(r)}$ encodes interference among $r$ distinct bath
channels.  The hierarchy reveals a decisive parity structure: the
leading level $g_{ji}^{(2)}$ carries strictly even parity under
$\omega \to -\omega$, while $g_{ji}^{(3)}$ introduces the first
odd-parity (skewness) contribution, a qualitatively new signature
inaccessible within the parity-preserving single-resolvent closures
considered here.

We do not attempt to derive the ETH ansatz itself from first
principles.  Taking the universal entropy normalization as given,
we determine its remaining dynamical object, the smooth function
$f_O(\bar E,\omega)$, from microscopic resolvent theory through
\begin{equation}
f_{ji}^2 = D_{ji} + g_{ji},
\qquad
g_{ji} = \sum_{r\ge 2} g_{ji}^{(r)} .
\label{eq:intro_decomp}
\end{equation}
The smooth function is the mean-sector manifestation of a broader
resolvent hierarchy whose fluctuation sector, the resolvent
covariance, controls the statistics of the ETH matrix elements;
the two sectors are analysed in
Secs.~\ref{sec:expansion}--\ref{sec:eth_ansatz} and in
Appendix~\ref{app:first_principles}, respectively.

The remainder of the paper is organized as follows.
Sec.~\ref{sec:bridge} introduces the system-bath setup
and exactly decomposes the off-diagonal ETH variance
into a diagonal overlap baseline and a reduced-correlation
contribution, establishing off-diagonal resolvents as the
natural framework for the latter.
Secs.~\ref{sec:expansion}--\ref{sec:DCA} then express
this reduced correlation through a hierarchy of
multi-resolvent interference processes organised by
the number of interaction vertices---prefaced by the scale-matched
replacement principle that underlies both the approximation
replacement of Sec.~\ref{sec:approx} and the DCA---and formulate the
DCA that controls the
resulting expansion.
Sec.~\ref{sec:eth_ansatz} shows how this hierarchy
provides a microscopic theory of the ETH smooth function
$f_{ji}$, expressing it as $f_{ji}^2 = D_{ji} + \sum_{r\ge2}
g_{ji}^{(r)}$ and unifying the approximation replacement,
the reduced-correlation picture, and the higher-order ETH
framework within a single organising principle; its
closing subsection introduces the fluctuation sector of the
hierarchy and the exact diagonal--off-diagonal trade-off it
encodes.
Sec.~\ref{sec:connections} discusses
connections to OTOCs, Krylov complexity, fluctuation
theorems, and open-system dynamics.
Finally, Sec.~\ref{sec:conclusion} summarises the
unifying principles of the multi-resolvent framework
and outlines directions for future work.
Appendix~\ref{app:first_principles} then establishes the
first-principles status of the fluctuation sector: the entropy
normalization and the diagonal--off-diagonal trade-off are exact
counting theorems, while the decorrelated (chaotic) regime selects
the Gaussian, self-averaging limit of the resolvent covariance
hierarchy, whose failure in integrable systems quantitatively
accounts for the enhancement of diagonal fluctuations observed in
Ref.~\cite{HCZ25}.

\section{From Overlaps to Off-Diagonal Resolvents}
\label{sec:bridge}

\subsection{System-bath setup and diagonal overlaps}
\label{sec:setup}

Consider a system \(S\) and a bath \(B\) with unperturbed Hamiltonian 
\(H_0 = H_S + H_B\). The unperturbed eigenstates are product states
\begin{equation}
    \ket{\phi_{\mu i}} = \ket{\phi_i^S} \otimes \ket{\phi_\mu^B},
\quad
    H_0 \ket{\phi_{\mu i}} = a_{\mu i} \ket{\phi_{\mu i}},
\quad
    a_{\mu i} = E_i + \epsilon_\mu .
\end{equation}
Introducing an interaction \(V\) couples the subsystems:
\begin{equation}
    H = H_0 + V,
\qquad
    H \ket{\psi_n} = \lambda_n \ket{\psi_n}.
\end{equation}
The central microscopic objects are the overlap amplitudes 
\(\braket{\psi_n|\phi_{\mu i}}\). Their squared moduli define the probability 
distribution of unperturbed states among the exact eigenstates:
\begin{equation}
    p^{\mu i}_n := |\braket{\psi_n|\phi_{\mu i}}|^2,
\qquad
    \sum_n p^{\mu i}_n = 1 .
\end{equation}
Under the eigenstate thermalization hypothesis (ETH), these overlaps become 
smooth functions of energy when coarse-grained over small energy windows, 
described by a distribution \(f^{\mu i}(\lambda_n)\) via
\begin{equation}
    \mathbb{E}(p^{\mu i}_n) = e^{-S(\lambda_n)}f^{\mu i}(\lambda_n),
    \label{eq:p_exp}
\end{equation}
where \(e^{S(\lambda)}\) is the density of states of the total Hamiltonian 
\(H\). For weakly interacting systems, self-consistent treatments
\cite{HHTG24,HC25} yield a Lorentzian profile for \(f^{\mu i}\), while 
nonperturbative resolvent methods~\cite{HC26} reveal richer structures 
including Gaussian tails and spectral skewness.

The overlap distribution admits an exact spectral representation through 
the diagonal resolvent:
\begin{equation}
    \mathcal{R}_{\mu i}(z) 
    := \bra{\phi_{\mu i}} \frac{1}{z-H} \ket{\phi_{\mu i}}
    = \sum_n \frac{p^{\mu i}_n}{z - \lambda_n},
    \label{eq:Rdiag}
\end{equation}
from which the overlaps are recovered via
\begin{equation}
    p^{\mu i}_n = \frac{1}{\pi} 
    \lim_{\eta \to 0^+} \Im \mathcal{R}_{\mu i}(\lambda_n - i\eta).
    \label{eq:spir}
\end{equation}
This identity underpins the resolvent-based approach to ETH developed in 
Ref.~\cite{HC26}: the self-energy of \(\mathcal{R}_{\mu i}\) is reorganized 
into an exact multi-resolvent hierarchy, yielding systematic nonperturbative 
corrections beyond Lorentzian (SCBA-type) approximations.

\subsection{ETH matrix elements and the insufficiency of diagonal overlaps}
\label{sec:insufficiency}

While the overlaps \(p^{\mu i}_n\) fully characterize the diagonal ETH matrix 
elements \(\sigma_{nn}^{ji}\), the off-diagonal elements
\begin{equation}
    \sigma_{nm}^{ji} 
    := \bra{\phi_j^S} \Tr_B\bigl(\ket{\psi_n}\bra{\psi_m}\bigr) \ket{\phi_i^S}
\end{equation}
require a more fundamental building block. Define the transition amplitude 
product
\begin{equation}
    \Gamma_{nm,\mu}^{ji} 
    := \braket{\phi_{\mu j}|\psi_n} \braket{\psi_m|\phi_{\mu i}},
    \label{eq:Gamma_def}
\end{equation}
which satisfies \(\sigma_{nm}^{ji} = \sum_\mu \Gamma_{nm,\mu}^{ji}\).
The quantities \(\Gamma\) obey the exact composition rules
\begin{equation}
    \Gamma_{nm,\mu}^{ji} \, \Gamma_{ml,\mu}^{ik} 
    = p^{\mu i}_m \, \Gamma_{nl,\mu}^{jk},
\qquad
    \Gamma_{nn,\mu}^{jj} = p^{\mu j}_n,
    \label{eq:Gamma_rules}
\end{equation}
which follow directly from the completeness of the exact eigenstates.

The squared modulus of an off-diagonal ETH matrix element decomposes as
\begin{equation}
    |\sigma_{nm}^{ji}|^2 
    = \sum_{\mu} p^{\mu i}_m p^{\mu j}_n 
    + \mathcal{C}_{nmn}^{jij},
    \label{eq:decomp}
\end{equation}
where the \textbf{correlation term}
\begin{equation}
    \mathcal{C}_{nmn}^{jij} 
    := \sum_{\mu,\nu \neq \mu} 
    \Gamma_{nm,\mu}^{ji} \, \Gamma_{mn,\nu}^{ij}
    \label{eq:Cdef}
\end{equation}
involves products of transition amplitudes carrying \textbf{different bath 
indices} (\(\mu \neq \nu\)).

It is here that the standard overlap-based approach encounters a 
fundamental limitation. The diagonal part \(\sum_\mu p^{\mu i}_m p^{\mu j}_n\) 
is expressible entirely through the smooth functions \(f^{\mu i}\) and the 
density of states. The correlation term \(\mathcal{C}_{nmn}^{jij}\), by 
contrast, couples \textbf{different bath channels} and depends on the 
\textbf{coherent interference} of amplitudes with \(\mu \neq \nu\). No 
amount of knowledge of the diagonal overlaps \(p^{\mu i}_n\) alone can 
reconstruct this term.

That this correlation is not a negligible fluctuation follows from the
exact sum rule implied by the idempotency $\rho_n^2 = \rho_n$ of the
pure-state projector $\rho_n := \ket{\psi_n}\bra{\psi_n}$.
Introduce the system projection operators
\begin{equation}
    P_i := I_B \otimes \ket{\phi_i^S}\bra{\phi_i^S},
    \qquad
    P_i^2 = P_i,
\end{equation}
which select the $i$-th system basis state.  Using $\sum_m \ket{\psi_m}\bra{\psi_m} = I$, 
a direct expansion gives
\begin{equation}
    \sum_m |\sigma_{nm}^{ji}|^2
    = \bra{\psi_n} P_j \ket{\psi_n}
    = \sum_\mu p_n^{\mu j}.
    \label{eq:P_identity}
\end{equation}
Inserting the decomposition~\eqref{eq:decomp} and evaluating the
geometric right-hand side yields the projector sum rule:
\begin{equation}
   \sum_m \mathcal{C}_{nmn}^{jij} = 0.
    \label{eq:sumrule_zero}
\end{equation}
For the diagonal case $j=i$, the self-correlation
$\mathcal{C}_{nnn}^{iii} = \sum_{\mu\neq\nu} p^{\mu i}_n p^{\nu i}_n$
is manifestly positive.  Hence
\begin{equation}
    \sum_{m\neq n} \mathcal{C}_{nmn}^{iii}
    = -\,\mathcal{C}_{nnn}^{iii} < 0.
    \label{eq:sumrule_negative}
\end{equation}

\textbf{Theorem~1 (Projector sum rule).}
The exact sum rule $\sum_m \mathcal{C}_{nmn}^{jij} = 0$ and the
negativity $\sum_{m\neq n} \mathcal{C}_{nmn}^{iii} < 0$ follow
rigorously from the idempotency $\rho_n^2 = \rho_n$ of the pure-state
projector.  The negative sign is a \emph{geometric} consequence of
projector idempotency, independent of any ETH, chaos, or
random-matrix assumption.

Under ETH delocalization, $\sum_\mu (p^{\mu i}_n)^2 = O(e^{-S})$,
hence $\mathcal{C}^{iii}_{nnn} = 1 - O(e^{-S})$.  Consequently the
integrated off-diagonal correlation $\sum_{m\neq n}
\mathcal{C}^{iii}_{nmn} = -\mathcal{C}^{iii}_{nnn}$ is an $O(1)$
negative quantity, although each individual contribution
$\mathcal{C}^{iii}_{nmn}$ ($m\neq n$) is exponentially small in
$S$.  The correlation is systematic, predominantly negative, and
tied to the population $\sigma^{ii}_{nn}$ of the system's reduced
density matrix.

\subsection{Approximation replacement and reduced correlation}
\label{sec:approx}

The projector sum rule~\eqref{eq:sumrule_zero} reveals that the
correlation term $\mathcal{C}_{nmn}^{jij}$ carries a systematic
negative bias of order unity.
To isolate this bias, we seek a subtracted correlation whose integrated
off-diagonal part is exponentially suppressed rather than of order unity.
The required subtraction is constructed from the operator identity
$\sum_m \sigma^{ji}_{nm} \sigma^{ik}_{ml} = \sigma^{jk}_{nl}$,
which follows from the completeness of the exact eigenstates.
Setting $j=k$ and $l=n$ gives
$\sum_m \sigma^{ji}_{nm} \sigma^{ii}_{mn} = \sigma^{ji}_{nn}$.
This identity suggests that the product $\sigma^{ji}_{nn} \sigma^{ii}_{nm}$
captures the systematic bias carried by the diagonal channel $i$.
Generalising to the four-index object required for the correlation sector,
one defines the \emph{reduced correlation}
\begin{equation}
C'^{\,jik}_{nml}
:= C^{jik}_{nml}
+ (1-\sigma^{ii}_{nn})^{-1} \,
   \sigma^{ji}_{nn} \, \sigma^{ii}_{nm} \, \sigma^{ik}_{ml},
\label{eq:Cprime_def}
\end{equation}
where the prefactor $(1-\sigma^{ii}_{nn})^{-1}$ ensures that the subtraction
term integrates to the correct projector-derived offset.
Summing Eq.~(\ref{eq:Cprime_def}) over $m\neq n$ yields
\begin{equation}
\sum_{m\neq n} C'^{\,jij}_{nmn}
= \sum_{\mu} p^{\mu i}_n p^{\mu j}_n
= O(e^{-S}),
 \label{eq:Cprime_sum}
\end{equation}
so that the systematic $O(1)$ offset is suppressed exponentially
compared with $\sum_{m\neq n} C^{jij}_{nmn} = -C^{jij}_{nnn} = O(1)$.
This structure---an $O(1)$ bias absorbed by a scale-matched
representative, leaving an exponentially suppressed residual---is
the first instance of the scale-matched replacement principle
formulated in Sec.~\ref{sec:scale_matched}, of which the DCA
replacement of Sec.~\ref{sec:DCA_statement} is the resolvent-level
counterpart.

Combining the definition (\ref{eq:Cprime_def}) with the basic decomposition (11)
yields the exact identity.  Starting from (11):
\[
|\sigma^{ji}_{nm}|^{2}
= \sum_{\mu} p^{\mu i}_{m} p^{\mu j}_{n}
+ C^{jij}_{nmn}.
\]
Solving (\ref{eq:Cprime_def}) for $C^{jij}_{nmn}$ with $k=j$, $l=n$ gives
\[
C^{jij}_{nmn}
= C'^{\,jij}_{nmn}
- (1-\sigma^{ii}_{nn})^{-1} \,
   \sigma^{ji}_{nn} \, \sigma^{ii}_{nm} \, \sigma^{ij}_{mn}.
\]
Using $\sigma^{ij}_{mn} = (\sigma^{ji}_{nm})^{*}$ and
the operator identity $\sum_{m} \sigma^{ii}_{nm} \sigma^{ij}_{mn}
= \sigma^{ij}_{nn}$, we rewrite the subtraction as
\[
(1-\sigma^{ii}_{nn})^{-1} \sigma^{ji}_{nn} \,
   \sigma^{ii}_{nm} \, \sigma^{ij}_{mn}
= \sigma^{ji}_{nn} \Bigl(
   \sum_{\mu} p^{\mu i}_{m} \Gamma^{ij}_{nn,\mu}
   + C'^{\,iij}_{nmn}
   \Bigr),
\]
where we have again used (\ref{eq:Cprime_def}) to express the product
$\sigma^{ii}_{nm}\sigma^{ij}_{mn}$ in terms of overlaps and
reduced correlations.  Substituting back yields the exact identity
\begin{equation}
|\sigma^{ji}_{nm}|^{2}
= \sum_{\mu} p^{\mu i}_{m} p^{\mu j}_{n}
+ C'^{\,jij}_{nmn}
- \sigma^{ji}_{nn}
   \Bigl(
   \sum_{\mu} p^{\mu i}_{m} \Gamma^{ij}_{nn,\mu}
   + C'^{\,iij}_{nmn}
   \Bigr).
\label{eq:exact_decomp}
\end{equation}
Neglecting the reduced correlation ${\mathcal{C}'}$ gives the
\textbf{approximation replacement}
\begin{equation}
    |\sigma_{nm}^{ji}|^2
    \;\approx\; \sum_{\mu} p^{\mu i}_m
               \bigl(p^{\mu j}_n
                    - \sigma_{nn}^{ji}\,\Gamma_{nn,\mu}^{ij}\bigr),
    \label{eq:approx_replace}
\end{equation}
which expresses the off-diagonal ETH variance entirely through
the diagonal overlap functions $p^{\mu i}_n$ and the system
populations $\sigma_{nn}^{ji}$.  When $|\sigma_{nn}^{j\neq i}|
\ll |\sigma_{nn}^{ii}|$, the cavity subtraction is dominated by
the diagonal channel.

The correlation can be further resolved by isolating a specific
bath index:
\begin{equation}
    \mathcal{B}_{nml,\mu}^{jik}
    := \sum_{\nu\neq\mu}
       \Gamma_{nm,\nu}^{ji}\,\Gamma_{ml,\mu}^{ik},
    \label{eq:B_def}
\end{equation}
so that $\mathcal{C}_{nml}^{jik}
= \sum_{\mu}\mathcal{B}_{nml,\mu}^{jik}$.
An analogous reduced version ${\mathcal{B}'}$ satisfies
$\sum_{m\neq n}{\mathcal{B}'}_{nml,\mu}^{jik}
= p_n^{\mu i}\,\Gamma_{nl,\mu}^{jk}$.

Higher-order products of ETH matrix elements admit a systematic
decomposition:
\begin{align}
    \sigma^{i_1i_2}_{n_1n_2}\cdots\sigma^{i_k i_1}_{n_kn_1}
    = \sum_{\mu} \prod_{\alpha=1}^{k} p_{n_\alpha}^{\mu i_\alpha}
        \notag\\
      +\;\text{(partial contractions)}\; +\; \mathcal{C}^{i_1\cdots i_k i_1}_{n_1\cdots n_k n_1},
    \label{eq:HO_decomp}
\end{align}
where the fully connected term involves $k$ distinct bath channels
and satisfies
$\sum_{n_2}\mathcal{C}^{i_1\cdots i_{k+1}}_{n_1\cdots n_1}=0$.
At each order, neglecting the connected correlation yields an
overlap-based approximation analogous to~\eqref{eq:approx_replace}.

The decomposition~\eqref{eq:exact_decomp} suggests organising
all corrections to the approximation
replacement~\eqref{eq:approx_replace} through a single object.
Define the \textbf{total correlation correction}
\begin{equation}
    g_{ji}(E^+,\omega)
    := \mathbb{E}\bigl(\mathcal{C}_{nmn}^{jij}\bigr)\,
       e^{S(E^+)},
    \label{eq:g_def_early}
\end{equation}
where $E^+ = (\lambda_n+\lambda_m)/2$ and
$\omega = \lambda_n-\lambda_m$.
To express this in standard ETH notation, introduce the
\textbf{(squared) smooth envelope} of the off-diagonal variance
\begin{equation}
    f_{ji}^2(E^+,\omega)
    := \mathbb{E}\bigl(|\sigma_{nm}^{ji}|^2\bigr)\,
       e^{S(E^+)},
    \label{eq:f2_def}
\end{equation}
and its \textbf{diagonal baseline}
\begin{equation}
    D_{ji}(E^+,\omega)
    := \mathbb{E}\Bigl(\sum_{\mu} p^{\mu i}_m p^{\mu j}_n\Bigr)\,
       e^{S(E^+)},
    \label{eq:Dji_def}
\end{equation}
which retains only the single-channel ($\mu=\nu$) contribution
to the overlap sum in Eq.~\eqref{eq:decomp}.
The decomposition~\eqref{eq:decomp} then implies the exact
identity
\begin{equation}
    f_{ji}^2 = D_{ji} + g_{ji},
    \label{eq:f2_eq_D_plus_g}
\end{equation}
or, equivalently,
$ g_{ji} = f_{ji}^2 - D_{ji}$.
Thus $g_{ji}$ measures the total deviation of the exact ETH
variance from the diagonal overlap product $D_{ji}$.
The diagonal baseline $D_{ji}$ corresponds to $g_{ji}=0$
and serves as the zeroth-order reference of the multi-resolvent
hierarchy developed below.
The approximation replacement~\eqref{eq:approx_replace}
is obtained by neglecting $\mathcal{C}'$ in
Eq.~\eqref{eq:exact_decomp}; it retains the full
channel-diagonal structure including the cavity subtraction
term $-\sigma_{nn}^{ji}\Gamma_{nn,\mu}^{ij}$ and therefore
goes beyond $D_{ji}$ alone.

The central result of Secs.~\ref{sec:expansion}--\ref{sec:DCA}
is that $g_{ji}$ admits a hierarchy
\begin{equation}
    g_{ji} = g_{ji}^{(2)} + g_{ji}^{(3)}
          + g_{ji}^{(4)} + \cdots,
    \label{eq:g_hierarchy_early}
\end{equation}
generated by the two-frequency correlation kernel
$\mathcal{K}_{\mu\nu}^{ji}
 = \mathcal{R}_{\mu j,\nu j}\,\mathcal{R}_{\nu i,\mu i}$.
Each level $g_{ji}^{(r)}$ involves $r$ interaction vertices.
The lowest sector $g_{ji}^{(2)}$ is strictly even under
$\omega\to-\omega$, while odd-parity (skewness) contributions
first appear in $g_{ji}^{(3)}$.
The parity structure of higher-order sectors depends on
the number of Hilbert-transform factors in the corresponding
multi-resolvent kernel.
The following subsection introduces the off-diagonal resolvents
that constitute the essential building blocks of this hierarchy.

\subsection{Off-diagonal resolvents as the natural framework}
\label{sec:offdiag_res}

The structure of \(\mathcal{C}_{nmn}^{jij}\) reveals why a description 
confined to diagonal overlaps is insufficient: each factor 
\(\Gamma_{nm,\mu}^{ji}\) carries \textbf{two eigenstate indices} (\(n\) and 
\(m\)) but only \textbf{one bath index} (\(\mu\)). The correlation term 
couples \(\mu \neq \nu\), thereby linking \textbf{four} overlap amplitudes 
across \textbf{two} bath channels and \textbf{two} eigenstates. This is 
intrinsically a \textbf{two-frequency object}, irreducible to a single 
spectral density.

To express this structure through resolvents, regroup the four-amplitude 
product in \(\mathcal{C}_{nmn}^{jij} = \sum_{\mu,\nu\neq\mu} 
\Gamma_{nm,\mu}^{ji}\,\Gamma_{mn,\nu}^{ij}\) by commuting the scalar 
factors:
\begin{align}
    \Gamma_{nm,\mu}^{ji}\,\Gamma_{mn,\nu}^{ij}
    = \braket{\phi_{\mu j}|\psi_n}\braket{\psi_m|\phi_{\mu i}}
      \braket{\phi_{\nu i}|\psi_m}\braket{\psi_n|\phi_{\nu j}}\notag\\
    = \underbrace{\braket{\phi_{\mu j}|\psi_n}\braket{\psi_n|\phi_{\nu j}}}
        _{\text{involves }n,\;\text{fixed }j}
      \;\times\;
      \underbrace{\braket{\phi_{\nu i}|\psi_m}\braket{\psi_m|\phi_{\mu i}}}
        _{\text{involves }m,\;\text{fixed }i}.
    \label{eq:regrouping}
\end{align}
The first factor couples bath channels \(\mu,\nu\) at fixed system index 
\(j\); the second couples \(\nu,\mu\) at fixed system index \(i\).  
Each factor is precisely the numerator of an \textbf{off-diagonal 
resolvent} that connects \emph{different bath indices at the same system 
index}:
\begin{equation}
    \mathcal{R}_{\mu j, \nu j}(z) 
    := \bra{\phi_{\mu j}} \frac{1}{z-H} \ket{\phi_{\nu j}}
    = \sum_n 
      \frac{\braket{\phi_{\mu j}|\psi_n}
            \braket{\psi_n|\phi_{\nu j}}}
           {z - \lambda_n}.
    \label{eq:Roffdiag}
\end{equation}
The spectral representation~\eqref{eq:Roffdiag} immediately implies an
exact integral constraint on the off-diagonal resolvent.  Defining the
spectral density
\(\rho_{\mu j,\nu j}(\lambda) := \frac{1}{\pi}\Im\mathcal{R}_{\mu j,\nu j}(\lambda-i0^+)
= \sum_n \braket{\phi_{\mu j}|\psi_n}\braket{\psi_n|\phi_{\nu j}}\,
   \delta(\lambda-\lambda_n)\),
the completeness of the exact eigenstates,
\(\sum_n \ket{\psi_n}\bra{\psi_n}=I\), together with the orthonormality
of the unperturbed basis, \(\braket{\phi_{\mu j}|\phi_{\nu j}}=\delta_{\mu\nu}\),
yields the integrated sum rule
\begin{align}
    \int\! d\lambda\; \rho_{\mu j,\nu j}(\lambda) = \delta_{\mu\nu},
    \qquad\text{or equivalently}\notag\\
    \int\! d\lambda\; \mathcal{R}_{\mu j,\nu j}(\lambda-i0^+)
    = i\pi\,\delta_{\mu\nu}.
    \label{eq:orthogonality_sumrule}
\end{align}
Equation~\eqref{eq:orthogonality_sumrule} states that the
off-diagonal resolvent ($\mu\neq\nu$) carries zero integrated
spectral weight.  This constraint and the projector sum
rule~\eqref{eq:sumrule_zero} are closely related consequences of
completeness: the former constrains the off-diagonal resolvent
through basis orthogonality $\braket{\phi_\mu|\phi_\nu}=0$, while
the latter constrains the ETH correlation through projector
idempotency $\rho_n^2=\rho_n$.

For $\mu = \nu$, $\mathcal{R}_{\mu j, \nu j}(z)$ reduces to the diagonal resolvent
$\mathcal{R}_{\mu j}(z)$. For $\mu \neq \nu$, the numerator
$\braket{\phi_{\mu j}|\psi_n}\braket{\psi_n|\phi_{\nu j}}$ is the
\textbf{coherent} product of two overlap amplitudes sharing the same
eigenstate index $n$ and the same system index $j$---precisely the
type of object that the diagonal overlaps
$p^{\mu j}_n = |\braket{\phi_{\mu j}|\psi_n}|^2$ cannot access.

It is this integrated orthogonality that any approximation to the
projection hierarchy should preserve.  The following sections
demonstrate that this property survives the DCA through
complementary algebraic and spectral arguments.

Equation~\eqref{eq:regrouping} motivates the 
\textbf{two-frequency correlation kernel}
\begin{equation}
    \boxed{\mathcal{K}_{\mu\nu}^{ji}(z_1, z_2) 
    := \mathcal{R}_{\mu j, \nu j}(z_1) \; 
       \mathcal{R}_{\nu i, \mu i}(z_2)} .
    \label{eq:Kdef}
\end{equation}
Its spectral representation reads
\begin{align}
    \mathcal{K}_{\mu\nu}^{ji}(z_1, z_2) 
    &= \sum_{n,m} 
       \frac{\braket{\phi_{\mu j}|\psi_n}
             \braket{\psi_n|\phi_{\nu j}}
             \braket{\phi_{\nu i}|\psi_m}
             \braket{\psi_m|\phi_{\mu i}}}
            {(z_1 - \lambda_n)(z_2 - \lambda_m)} .
    \label{eq:Kspec}
\end{align}
Taking the double imaginary part (boundary values 
\(z_1 \to \lambda - i0^+\), \(z_2 \to \lambda' - i0^+\)) extracts the 
\textbf{joint spectral density}:
\begin{widetext}
    \begin{equation}\label{eq:rhomunu_exp}
         \rho_{\mu\nu}^{ji}(\lambda, \lambda') 
    := \frac{1}{\pi^2} \,
       \Im_{z_1} \Im_{z_2} \, 
       \mathcal{K}_{\mu\nu}^{ji}(z_1, z_2)
       \Big|_{z_1 = \lambda - i0^+, \; z_2 = \lambda' - i0^+}
    = \sum_{n,m} 
       \braket{\phi_{\mu j}|\psi_n}
       \braket{\psi_n|\phi_{\nu j}}
       \braket{\phi_{\nu i}|\psi_m}
       \braket{\psi_m|\phi_{\mu i}}
       \delta(\lambda - \lambda_n)  \delta(\lambda' - \lambda_m).
    \end{equation}
\end{widetext}
Crucially, the contraction in Eq.~\eqref{eq:rhomunu_exp} pairs 
amplitudes as 
\((\mu j, n)(n, \nu j) \times (\nu i, m)(m, \mu i)\), which coincides 
\textbf{exactly} with the grouping~\eqref{eq:regrouping} of the 
original definition~\eqref{eq:Cdef}.  The kernel 
\eqref{eq:Kdef} is an \textbf{exact resolvent identity} for 
\(\mathcal{C}_{nmn}^{jij}\), requiring no statistical equivalence 
hypothesis.

Summing over all distinct bath channels yields the \textbf{ETH correlation 
spectral density}
\begin{equation}
    \rho^{ji}(\lambda, \lambda') 
    := \sum_{\mu \neq \nu} \rho_{\mu\nu}^{ji}(\lambda, \lambda').
    \label{eq:rho_tot}
\end{equation}
Its integral over infinitesimal energy bins centered at 
\((\lambda_n, \lambda_m)\) yields the resolvent representation of the 
correlation term:
\begin{align}
    \mathcal{C}_{nmn}^{jij}
    &= \lim_{\Delta \to 0}
       \int_{\lambda_n - \Delta/2}^{\lambda_n + \Delta/2} \!\!d\lambda
       \int_{\lambda_m - \Delta/2}^{\lambda_m + \Delta/2} \!\!d\lambda' \;
       \rho^{ji}(\lambda, \lambda') .
    \label{eq:C_from_rho}
\end{align}
This is an exact identity---no statistical averaging is required.  
The diagonal (\(\mu=\nu\)) part of the full sum \(\sum_{\mu,\nu}\) 
recovers the overlap product \(\sum_{\mu} p^{\mu i}_m p^{\mu j}_n\) 
of Eq.~\eqref{eq:decomp} through the product of diagonal resolvents 
\(\mathcal{R}_{\mu j}(z_1)\,\mathcal{R}_{\mu i}(z_2)\), confirming 
the decomposition~\eqref{eq:decomp} within the resolvent framework.

Equation~\eqref{eq:Kdef} is the central bridge of this work. It provides 
an \textbf{exact} resolvent-based representation of the correlation 
sector---the very object that the overlap-only approach declares 
irreducible. The remainder of this paper is devoted to evaluating 
the product \(\mathcal{R}_{\mu j,\nu j}(z_1)\,\mathcal{R}_{\nu i,\mu i}(z_2)\) 
nonperturbatively, using the multi-resolvent expansion developed in 
Ref.~\cite{HC26}.

\section{Multi-Resolvent Expansion of ETH Correlations}
\label{sec:expansion}

\subsection{Scale-matched replacement: the organising principle of the
hierarchy}
\label{sec:scale_matched}

The approximation replacement of Sec.~\ref{sec:approx} and the DCA
replacement formulated below are not independent technical steps:
they are two instances of a single principle that governs every
truncation in this work.  Both involve quantities of the form
$X=\sum_a x_a$ in which each individual term is entropy-suppressed,
$x_a=O(e^{-S})$, but the number of channels is exponentially large,
$N_a=O(e^{S})$, so that the \emph{summed} contribution is
$X=O(1)$: the smallness of individual terms does not by itself
justify their neglect.  A controlled approximation therefore
proceeds in three steps:
\begin{equation}
\begin{array}{c}
\mathcal X_{\rm exact}
= \mathcal X_{\rm matched}
+ \Delta_{\rm cav}
+ \mathcal X_{\rm irr}\\[6pt]
\begin{array}{l}
\mathcal X_{\rm matched} \sim \mathcal X_{\rm exact}
\;\text{(same channel-summed scaling),}\\
\Delta_{\rm cav}\;\text{restores the constraint violated by the replacement,}\\
\mathcal X_{\rm irr}\;\text{is the irreducible residual,}\\
\text{organised by the multi-resolvent hierarchy.}
\end{array}
\end{array}
\label{eq:scale_matched}
\end{equation}

\noindent\textbf{Instance 1: the approximation replacement of
Sec.~\ref{sec:approx}.}
Here $\mathcal X_{\rm exact}=\mathcal C_{nmn}^{jij}$: each term is
exponentially small, yet the integrated correlation carries the
$O(1)$ bias of Theorem~1, $\sum_{m\neq n}\mathcal C_{nmn}^{iii}
=-\mathcal C_{nnn}^{iii}$.  The subtraction term of
Eq.~\eqref{eq:Cprime_def} is the scale-matched representative
$\Delta_{\rm cav}$ that absorbs exactly this bias, leaving the
residual $\mathcal X_{\rm irr}=\mathcal C'$ exponentially suppressed:
$\sum_{m\neq n}\mathcal C'^{jij}_{nmn}=\sum_\mu p_n^{\mu i}p_n^{\mu j}
=O(e^{-S})$ [Eq.~\eqref{eq:Cprime_sum}].  For the diagonal case
$j=i$ the residual is the \emph{purity}:
\begin{equation}
\sum_{m\neq n}\mathcal C'^{iii}_{nmn}
= \sum_\mu (p_n^{\mu i})^2,
\label{eq:Cprime_purity}
\end{equation}
whose expectation $\mathbb{E}\sum_\mu(p_n^{\mu i})^2
=q\langle f^2\rangle_B e^{-S-S_S}$ is controlled by the
Porter--Thomas factor $q$ of
Appendix~\ref{app:first_principles}.  The scale-matched suppression
of the residual is therefore not a postulate but a \emph{measurable}
quantity, and it weakens precisely when $q$ grows in structured
(non-ergodic) systems---an enhancement of non-Gaussian intensity
fluctuations rather than an automatic witness of integrability:
numerically, the residual takes the value $7.1\times10^{-4}$
in the chaotic model versus $5.8\times10^{-2}$ in the structured
model at $D=1024$ (Sec.~\ref{app:num_fp}), a factor of order
$q_{\rm structured}/q_{\rm GOE}\sim 50$, while the $O(1)$ bias
$|\mathcal C_{nnn}|$ remains model-independent \emph{in scale}
($0.062$ versus $0.045$ at $D=1024$).  The content of the residual
is made precise by the following bridge between the mean and
fluctuation sectors of the hierarchy.

\begin{proposition}[Residual--fluctuation correspondence]
\label{prop:residual_fluct}
For every eigenstate $n$ and fixed system state $i$:
\begin{enumerate}
\item \emph{Exact identity.}
$\sum_{m\neq n}\mathcal C'^{iii}_{nmn} = P_n$,
with $P_n:=\sum_\mu(p_n^{\mu i})^2$ the channel purity.
\item \emph{Sector decomposition.}
Under the smoothness postulate of
Appendix~\ref{app:first_principles}, the ensemble average splits into
a disconnected mean-sector and a connected fluctuation-sector part,
\begin{equation}
\mathbb{E} P_n
= \underbrace{\langle f^2\rangle_B\,e^{-S-S_S}}_{\text{mean sector}}
+ \underbrace{(q-1)\langle f^2\rangle_B\,e^{-S-S_S}}
_{\text{connected intensity fluctuation}},
\label{eq:residual_split}
\end{equation}
where the fluctuation part is the diagonal ($n=m$) spectral
projection of the same-channel resolvent covariance,
$\sum_\mu\mathrm{Var}(p_n^{\mu i})$.
\end{enumerate}
\end{proposition}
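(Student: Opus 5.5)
For part (1) I would work directly from the definition \eqref{eq:Cprime_def} with $j=k=i$ and $l=n$, summed over $m\neq n$. There are two sums to evaluate.

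\emph{First sum.} By the projector sum rule \eqref{eq:sumrule_zero}, $\sum_{m\neq n}\mathcal C^{iii}_{nmn}=-\mathcal C^{iii}_{nnn}$. Moreover
\[
\mathcal C^{iii}_{nnn}=\sum_{\mu\neq\nu}p_n^{\mu i}p_n^{\nu i}=(\sigma^{ii}_{nn})^2-P_n,
\]
using $\sigma^{ii}_{nn}=\sum_\mu p^{\mu i}_n$.

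\emph{Second sum.} The subtraction term gives $(1-\sigma^{ii}_{nn})^{-1}\sigma^{ii}_{nn}\sum_{m\neq n}|\sigma^{ii}_{nm}|^2$. Here I use \eqref{eq:P_identity}, $\sum_m|\sigma^{ii}_{nm}|^2=\sigma^{ii}_{nn}$, so that
\[
\sum_{m\neq n}|\sigma^{ii}_{nm}|^2=\sigma^{ii}_{nn}(1-\sigma^{ii}_{nn}).
\]
The prefactor cancels, leaving $(\sigma^{ii}_{nn})^2$.

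\emph{Combining.} The total is $-(\sigma^{ii}_{nn})^2+P_n+(\sigma^{ii}_{nn})^2=P_n$. This is just \eqref{eq:Cprime_sum} specialised to $j=i$, i.e.\ \eqref{eq:Cprime_purity}. The only caveat is the degenerate case $\sigma^{ii}_{nn}=1$. In that case I would either exclude it or define the term by continuity, since then $\sigma^{ii}_{nm}=0$ for all $m\neq n$.

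For part (2) I would take expectations channel by channel and write
\[
\mathbb E(p_n^{\mu i})^2=\bigl(\mathbb E p_n^{\mu i}\bigr)^2+\mathrm{Var}(p_n^{\mu i}).
\]
\begin{itemize}
\item \emph{Disconnected piece.} Insert the smoothness postulate \eqref{eq:p_exp}, $\mathbb E p^{\mu i}_n=e^{-S}f^{\mu i}(\lambda_n)$. Then convert the sum over bath channels $\mu$ (at fixed $i$) into $e^{S-S_S}$ times a bath average, assuming the appendix's counting of channels at fixed system state. This gives $\langle f^2\rangle_B e^{-S-S_S}$.
\item \emph{Connected piece.} Invoke the appendix's definition of the Porter--Thomas factor $q$, via $\mathbb E(p^2)=q(\mathbb E p)^2$. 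The variance term is then $(q-1)$ times the disconnected term, which yields \eqref{eq:residual_split}.
\item \emph{Spectral interpretation.} To identify the fluctuation part with the $n=m$ projection of the same-channel resolvent covariance, use \eqref{eq:spir}. The product $\Im\mathcal R_{\mu i}(\lambda-i0^+)\Im\mathcal R_{\mu i}(\lambda'-i0^+)$, binned as in \eqref{eq:C_from_rho} and restricted to $\lambda,\lambda'$ in the same eigenvalue bin, gives $(p^{\mu i}_n)^2$. Subtracting the product of means leaves its covariance.
\end{itemize}

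The main obstacle is part (2). Its content depends on precisely how the appendix normalises $q$ and the channel-counting factor $e^{S-S_S}$, and on whether $q$ is $\mu$-independent. If it is not, I would define $q$ as the $f^2$-weighted bath average of the ratio $\mathbb E(p^2)/(\mathbb E p)^2$, so that the split holds by construction. Part (1), by contrast, is purely algebraic.
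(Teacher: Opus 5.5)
Your proposal is correct and follows the paper's proof essentially step for step. Part (1) combines Theorem~1, the identity $\mathcal C^{iii}_{nnn}=(\sigma^{ii}_{nn})^2-P_n$, and the $j=i$ row sum rule (the paper cites this as Lemma~\ref{lem:fp_tradeoff}, which is the same identity you take from Eq.~\eqref{eq:P_identity}); part (2) is the split $\mathbb{E}[(p_n^{\mu i})^2]=\mathrm{Var}(p_n^{\mu i})+\mathbb{E}(p_n^{\mu i})^2$ with $q$ defined as the ratio of these moments, and your extra caveats (the degenerate case $\sigma^{ii}_{nn}=1$ and an $f^2$-weighted $q$ for channel-dependent ratios) are refinements the paper leaves implicit rather than departures from its argument.
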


\begin{proof}
(i) Inserting $j=i=k$, $l=n$ into Eq.~\eqref{eq:Cprime_def} and
summing over $m\neq n$ gives
$\sum_{m\neq n}\mathcal C'=\sum_{m\neq n}\mathcal C
+(1-D_n)^{-1}D_n\sum_{m\neq n}(\sigma_{nm}^{ii})^2
=-\mathcal C_{nnn}^{iii}+D_n^2=P_n$, where Theorem~1 provides
$\sum_{m\neq n}\mathcal C_{nmn}^{iii}=-\mathcal C_{nnn}^{iii}$,
the complementarity of
Lemma~\ref{lem:fp_tradeoff} provides
$\sum_{m\neq n}(\sigma_{nm}^{ii})^2=D_n(1-D_n)$, and
$\mathcal C_{nnn}^{iii}=D_n^2-P_n$ by definition.  (ii) follows from
$\mathbb{E}[(p_n^{\mu i})^2]=\mathrm{Var}(p_n^{\mu i})
+\mathbb{E}(p_n^{\mu i})^2$ together with the definition
$q:=\mathbb{E}[(p_n^{\mu i})^2]/\mathbb{E}(p_n^{\mu i})^2$, which
identifies the variance part as
$(q-1)\sum_\mu\mathbb{E}(p_n^{\mu i})^2=(q-1)\langle f^2\rangle_B
e^{-S-S_S}$.
\end{proof}

The fluctuation sector thus carries the fraction $(q-1)/q$ of the
residual---$0.67$ in the chaotic model versus $0.99$ in the
structured model at $D=1024$, the former reproducing the
Porter--Thomas value $(q-1)/q=2/3$ to three digits---so the
scale-matched residual provides the natural entry point for the
connected intensity-fluctuation sector of ETH.  The hierarchy
thereby acquires three levels: the scale-matched baseline $D_{ji}$
(Level~0), the irreducible mean correlation
$g_{ji}=\sum_{r\ge2}g_{ji}^{(r)}$ generating the smooth envelope
$f_{ji}$ (Level~1), and the fluctuation covariance $\mathcal C_{ab}$
generating $q$, $\bar{C}_i$ and $V_{\rm dg}/V_{\rm off}$
(Level~2, Appendix~\ref{app:first_principles}).

A scale-separation remark sharpens the division of labour between
these levels.  The one-resolvent self-consistent equation determines
the \emph{smooth spectral envelope}: its boundary value
$\mathrm{Im}\,\bar{\mathcal{R}}_a(E+i\eta)$ is a Lorentzian
convolution of the overlaps with width $\eta$, so that for
$\eta\gg\Delta$ it averages over exponentially many eigenstates and
is intrinsically insensitive to fluctuations at \emph{fixed}
eigenstate energy $\lambda_n$.  One-resolvent smoothing therefore
does not destroy the ETH fluctuations---it \emph{transfers} them
from the one-point sector to the connected multi-point sector:
\emph{spectral coarse-graining is not statistical coarse-graining}.
The exponentially small
eigenstate-to-eigenstate fluctuations---the defining content of the
ETH ansatz beyond the envelope---first enter through the
connected multi-resolvent sector, whose singular coincident-eigenstate
ridge $\delta(\lambda-\lambda')\mathcal W_{ab}(\lambda)$
retains them at any $\eta$: the ridge preserves precisely the
information that one-resolvent smoothing loses
(Sec.~\ref{app:cov_hierarchy}).  The resolvent
hierarchy is in this sense a multiscale hierarchy of ETH
information, in which one resolvent determines the smooth ETH
envelope, two resolvents the ETH fluctuations, and three or more the
non-Gaussian corrections.

\noindent\textbf{Instance 2: the DCA replacement.}
The projected resolvent is replaced by its full counterpart,
$\mathcal R_{\alpha}^{(\mathcal S)}\approx\mathcal R_\alpha$
[Eq.~\eqref{eq:DCA}], with the same scale-matching logic: the full
resolvent carries the same channel multiplicity and leading entropy
scaling as the cavity one.  The replacement error is not an
unspecified higher-order correction but the exact return-path
identity of Eq.~\eqref{eq:cavity_correction},
\begin{equation}
\Delta\mathcal R_{\rm cav}
= \mathcal R_{\nu j,\mu i}(z)\,
\bra{\phi_{\mu i}} H\, G^{(\mu i)}(z) \ket{\phi_{\nu j}},
\label{eq:cavity_reminder}
\end{equation}
whose individual magnitude is $O(e^{-S})$ under ETH delocalization
(a \emph{conditional} scaling estimate), while its summed effect over
the hierarchy is $O(1)$-relevant and is precisely what cavity
subtraction restores, as quantified by the Restoration
Identity~\eqref{eq:restoration_identity}.  The correspondence
between the two instances is structural: the systematic sector
$\mathcal C_{nnn}$ of the overlap decomposition and the
cavity-return sector of the resolvent expansion both encode the
constraint violation of the replacement and are removed by
$\Delta_{\rm cav}$; the irreducible residuals $\mathcal C'$ and the
multi-resolvent sectors $g_{ji}^{(r)}$ are then organised by the
hierarchy of Secs.~\ref{sec:proj_exp}--\ref{sec:DCA}.  The
three-level status is: (i) rigorous---the sum rules and the cavity
identity; (ii) conditional on ETH scaling---the $e^{-S}$ magnitude
of individual cavity corrections; (iii) an assumption---the
combinatorial suppression of higher sectors, labelled Assumption~A
in Sec.~\ref{sec:cavity_reinterpretation}.

\subsection{Projection expansion of off-diagonal resolvents}
\label{sec:proj_exp}

The diagonal resolvent \(\mathcal{R}_{\mu i}(z)\) obeys the exact Feshbach-type 
projection identity~\cite{HC26}
\begin{equation}
    \mathcal{R}_{\mu i}(z) 
    = \frac{1}{z - a_{\mu i} - V_{\mu i} - \mathcal{G}_{\mu i}(z)},
    \label{eq:feshbach}
\end{equation}
where the self-energy \(\mathcal{G}_{\mu i}(z)\) encodes all couplings to the 
rest of the Hilbert space.  The real part of the self-energy generates a 
dispersive shift of the spectral centre.  Following the self-consistent 
framework of Ref.~\cite{HC26}, we define the \textbf{renormalized centre energy}
\begin{equation}
    \tilde{a}_{\mu i} := a_{\mu i} + \Delta_{\mu i},
    \qquad
    \Delta_{\mu i} = V_{\mu i} + \mathbb{E}\bigl[\Re\,\mathcal{G}_{\mu i}(\lambda)\bigr],
    \label{eq:renorm_center}
\end{equation}
where \(\mathbb{E}[\cdot]\) denotes an appropriate statistical average over 
the ETH coarse-graining window.  At the mean-field (SCBA) level, 
\(\Delta_{\mu i}\) is determined self-consistently \cite{HC26}:
\(\chi_{\mu i} = \sum_{\nu j \neq \mu i} |V_{\mu i,\nu j}|^2 \,
\chi_{\nu j}/(\delta\lambda_{\nu j}^2 + \chi_{\nu j}^2)\),
\(\Delta_{\mu i} - V_{\mu i} = \sum_{\nu j \neq \mu i} |V_{\mu i,\nu j}|^2 \,
\delta\lambda_{\nu j}/(\delta\lambda_{\nu j}^2 + \chi_{\nu j}^2)\).
At the Lorentzian-ansatz level of Ref.~\cite{HC26}, the smooth spectral 
function \(f^{\alpha}(\lambda)\) is symmetric about \(\tilde{a}_{\alpha}\) 
rather than about the bare \(a_{\alpha}\).

The correlation kernel~\eqref{eq:Kdef} requires \textbf{two families} of 
off-diagonal resolvents, each connecting \emph{different bath indices at 
the same system index}:
\begin{align}
    \mathcal{R}_{\mu j,\nu j}(z)
    := \bra{\phi_{\mu j}} \frac{1}{z-H} \ket{\phi_{\nu j}},\notag\\
    \mathcal{R}_{\nu i,\mu i}(z)
    := \bra{\phi_{\nu i}} \frac{1}{z-H} \ket{\phi_{\mu i}} .
    \label{eq:Roffdiag_pair}
\end{align}
Both share the same structural type---the system label is identical on 
bra and ket, while the bath labels differ---and their projection 
expansions are obtained by the same recursive procedure.  We develop 
the expansion for the generic object 
\(\mathcal{R}_{\alpha j,\beta j}(z)\) with \(\alpha \neq \beta\); the 
second family follows by relabelling.

Isolating the target bra state gives
\begin{align}
    \mathcal{R}_{\alpha j,\beta j}(z) 
    &= \bra{\phi_{\alpha j}} \frac{1}{z-H} \ket{\phi_{\beta j}} \notag\\
    &= \delta_{\alpha\beta}\,\mathcal{R}_{\alpha j}(z)
       + \bra{\phi_{\alpha j}} \Phi_{\beta j} 
         \frac{1}{z-H} \ket{\phi_{\beta j}},
    \label{eq:Roff_proj}
\end{align}
where \(\Phi_{\beta j} = I - \ket{\phi_{\beta j}}\bra{\phi_{\beta j}}\). 
The second term isolates propagation that necessarily leaves 
\(\ket{\phi_{\beta j}}\) and reaches \(\ket{\phi_{\alpha j}}\).
To expand this term, we generalize the recursive projection identity of 
Ref.~\cite{HC26}. Inserting the resolution of the identity 
\(I = \sum_{\xi k} (\Phi_{\xi k} + \Pi_{\xi k})\) into the projected 
propagator yields the \textbf{path expansion}:
\begin{align}
    \bra{\phi_{\alpha j}} \Phi_{\beta j} \frac{1}{z-H} \ket{\phi_{\beta j}}
    = \mathcal{R}^{(\beta j)}_{\alpha j}(z) \, 
       \mathcal{R}_{\beta j}(z) \, V_{\alpha j,\beta j} \notag\\
    \quad + \sum_{\xi k \neq \beta j}
       \bra{\phi_{\alpha j}} \Phi_{\beta j} \Phi_{\xi k} 
       \frac{1}{z - \Phi_{\beta j}H\Phi_{\beta j}} \ket{\phi_{\xi k}}
       \, \mathcal{R}_{\beta j}(z) \, V_{\xi k,\beta j},
    \label{eq:offRexp_general}
\end{align}
where \(\mathcal{R}^{(\beta j)}_{\alpha j}(z) := 
\bra{\phi_{\alpha j}} (z - \Phi_{\beta j}H\Phi_{\beta j})^{-1} \ket{\phi_{\alpha j}}\) 
is the projected diagonal resolvent. Equation~(\ref{eq:offRexp_general}) is obtained by repeated application of the
Feshbach projection identity~\cite{HC26} and is exact, not a
perturbative expansion---each projection step isolates one further
intermediate state.
Applying the same projection step recursively to the remainder term 
generates contributions of increasing order. The resulting expansion is 
\textbf{exact} when expressed in terms of projected diagonal resolvents:
\begin{equation}
   \mathcal{R}_{\alpha j,\beta j}(z) 
    = \sum_{\ell=1}^{\infty} 
      \widetilde{\mathcal{R}}^{(\ell)}_{\alpha j,\beta j}(z),
    \label{eq:R_hierarchy_exact}
\end{equation}
where \(\widetilde{\mathcal{R}}^{(\ell)}\) denotes the exact \(\ell\)-th level 
containing \(\ell\) projected diagonal resolvents and \(\ell-1\) interaction 
matrix elements. The first three levels read
\begin{subequations}
\begin{align}
    \widetilde{\mathcal{R}}^{(1)}_{\alpha j,\beta j}(z) 
    &= \delta_{\alpha\beta}\; \mathcal{R}_{\alpha j}(z),
    \label{eq:R1_exact}\\
    \widetilde{\mathcal{R}}^{(2)}_{\alpha j,\beta j}(z) 
    &= \mathcal{R}^{(\beta j)}_{\alpha j}(z)  V_{\alpha j,\beta j} 
       \mathcal{R}_{\beta j}(z),
    \label{eq:R2_exact}\\
    \widetilde{\mathcal{R}}^{(3)}_{\alpha j,\beta j}(z) 
    &= \sum_{\xi k \neq \alpha j,\beta j}
       \mathcal{R}^{(\beta j,\xi k)}_{\alpha j}(z)  V_{\alpha j,\xi k} 
       \mathcal{R}^{(\beta j)}_{\xi k}(z)  V_{\xi k,\beta j} 
       \mathcal{R}_{\beta j}(z).
    \label{eq:R3_exact}
\end{align}
\end{subequations}
Under the DCA, whose validity in 
the ETH regime is established in Ref.~\cite{HC26} and \cref{sec:DCA}, the 
projected diagonal resolvents are replaced by their full counterparts:
\(\mathcal{R}^{(\mathcal{S})}_{\alpha} \approx \mathcal{R}_{\alpha}\). 
This yields the DCA-reduced expansion
\begin{equation}
    \mathcal{R}_{\alpha j,\beta j}(z) 
    \approx \sum_{\ell=1}^{\infty} 
      \mathcal{R}^{(\ell)}_{\alpha j,\beta j}(z),
    \label{eq:R_hierarchy}
\end{equation}
with the first three DCA levels given by
\begin{subequations}
\begin{align}
    \mathcal{R}^{(1)}_{\alpha j,\beta j}(z) 
    &= \delta_{\alpha\beta} \mathcal{R}_{\alpha j}(z),
    \label{eq:R1}\\[4pt]
    \mathcal{R}^{(2)}_{\alpha j,\beta j}(z) 
    &= \mathcal{R}_{\alpha j}(z)  V_{\alpha j,\beta j}  
       \mathcal{R}_{\beta j}(z),
    \label{eq:R2}\\[4pt]
    \mathcal{R}^{(3)}_{\alpha j,\beta j}(z) 
    &= \sum_{\xi k \neq \alpha j,\beta j}
       \mathcal{R}_{\alpha j}(z)  V_{\alpha j,\xi k} 
       \mathcal{R}_{\xi k}(z)  V_{\xi k,\beta j}  
       \mathcal{R}_{\beta j}(z),
    \label{eq:R3}
\end{align}
\end{subequations}
and the general \(\ell\)-th DCA level as
\begin{equation}
    \mathcal{R}^{(\ell)}_{\alpha j,\beta j}(z) 
    = \sum_{\substack{\gamma_1 \neq \cdots  \\
                      \neq \alpha j,\beta j}}
       \mathcal{R}_{\alpha j}(z)  
       V_{\alpha j,\gamma_1} 
       \mathcal{R}_{\gamma_1}(z) 
      \cdots 
       V_{\gamma_{\ell-1},\beta j} 
       \mathcal{R}_{\beta j}(z).
    \label{eq:Rell}
\end{equation}
The exclusion $\gamma_1\neq\cdots\neq\alpha j,\beta j$ is not an
additional assumption but a direct consequence of the projection
recursion: each intermediate projector $\Phi_{\gamma}$ removes
the basis state $|\varphi_\gamma\rangle$ from the Hilbert space
accessible to the cavity propagator, so a path cannot revisit
any previously projected state (Backward Krylov return processes,
which could generate repeated indices, are precisely the cavity
corrections suppressed under the DCA).  
Each level \(\ell\) in the DCA hierarchy contains exactly \(\ell\) 
diagonal resolvents and \(\ell-1\) interaction matrix elements. 
Equation~\eqref{eq:R_hierarchy} has a transparent physical interpretation: 
level \(\ell\) describes a propagation path that leaves \(\ket{\phi_{\beta j}}\), 
visits \(\ell-2\) intermediate unperturbed states (which may carry 
arbitrary system indices \(k\)), and arrives at 
\(\ket{\phi_{\alpha j}}\). Unlike conventional perturbation theory, the expansion 
is organized by the multiplicity of diagonal resolvents rather than by powers 
of \(V\); each \(\mathcal{R}_{\gamma}(z)\) is a \emph{full} (nonperturbative) 
resolvent that already resums all interaction processes involving the 
corresponding basis state.
Organising the expansion by resolvent multiplicity rather than by
powers of $V$ has a decisive advantage: each level
$\mathcal{R}^{(\ell)} \sim z^{-\ell}$ at large $z$, so a truncation
at any $\ell\ge 2$ automatically satisfies the orthogonality
constraint $\mathcal{R}_{\alpha\beta}(z) = O(z^{-2})$ for
$\alpha\neq\beta$.  More generally, the $\ell$-th level contributes
to the Laurent coefficient $z^{-(k+1)}$ only when $\ell\le k+1$,
so the first $\ell_{\max}-1$ spectral moments
$M_k^{\alpha\beta} = \bra{\phi_\alpha}H^k\ket{\phi_\beta}$ are
determined entirely by the lowest $\ell_{\max}$ levels of the
hierarchy.  This moment-by-moment organisation is the central
structural result of Appendix~\ref{app:moment_preservation}; the
physical consequences for the DCA are discussed below.
The expansion for 
\(\mathcal{R}_{\nu i,\mu i}(z)\) is obtained from 
Eqs.~\eqref{eq:R_hierarchy}--\eqref{eq:Rell} by the replacements 
\((\alpha,\beta,j) \to (\nu,\mu,i)\).  Throughout, intermediate indices 
\(\xi k\) run over all bath--system pairs, enabling the propagation to 
traverse arbitrary system sectors.

A necessary condition for the projection expansion~\eqref{eq:R_hierarchy_exact}
to be consistent with the exact spectral representation~\eqref{eq:Roffdiag}
is that the integrated sum rule~\eqref{eq:orthogonality_sumrule} be
preserved level by level.  For \(\alpha\neq\beta\), the \(\ell=1\)
term vanishes by the Kronecker delta, and the exact off-diagonal
resolvent satisfies \(\int d\lambda\,\mathcal{R}_{\alpha j,\beta j}
= 0\).  Hence the entire hierarchy of off-diagonal path contributions
must obey
\begin{equation}
    \sum_{\ell\ge 2}
    \int\! d\lambda\;
    \widetilde{\mathcal{R}}^{(\ell)}_{\alpha j,\beta j}(\lambda-i0^+)
    = 0,
    \qquad (\alpha\neq\beta).
    \label{eq:hierarchy_orthogonality}
\end{equation}
Although each individual path contribution
\(\widetilde{\mathcal{R}}^{(\ell)}_{\alpha j,\beta j}\) may carry
non-zero integrated spectral weight, the complete projection
hierarchy is guaranteed to preserve the exact orthogonality of the
unperturbed basis.  Equation~\eqref{eq:hierarchy_orthogonality}
provides a practical consistency check on any truncation of the
expansion and plays a role complementary to the projector sum
rule~\eqref{eq:sumrule_zero}: the latter constrains the integrated
ETH correlation \(\mathcal{C}_{nmn}^{jij}\), while the former
constrains the integrated off-diagonal resolvent---the very building
block from which \(\mathcal{C}_{nmn}^{jij}\) is constructed via the
correlation kernel~\eqref{eq:Kdef}.

\subsection{Spectral representation of off-diagonal orthogonality}
\label{sec:Hilbert_ortho}

Equation~\eqref{eq:orthogonality_sumrule} established the exact
integrated orthogonality condition $\int\rho_{\alpha\beta}=0$ for
the full off-diagonal resolvent.  Individual levels
$\mathcal{R}^{(\ell)}_{\alpha\beta}$ of the projection
hierarchy~\eqref{eq:R_hierarchy_exact} need not satisfy it
separately---inter-level cancellations may combine to recover the
exact sum rule.  We now show how the leading DCA level realizes the
same constraint spectrally.

Under the DCA, the leading non-diagonal contribution
$\mathcal{R}^{(2)}_{\alpha\beta}=V_{\alpha\beta}\mathcal{R}_\alpha \mathcal{R}_\beta$ (Eq.~\eqref{eq:R2}) yields,
via the boundary-value identity
$\mathcal{R}_\alpha(\lambda-i0^+)/\pi = H[f^\alpha](\lambda) + i f^\alpha(\lambda)$
(Eq.~\eqref{eq:kk}), the spectral density
\begin{equation}
    \rho^{(2)}_{\alpha\beta}(\lambda)
    = V_{\alpha\beta}\,
      \bigl(H_\alpha f_\beta + f_\alpha H_\beta\bigr).
    \label{eq:rho2_ortho}
\end{equation}
For the leading DCA contribution, the corresponding zeroth-moment condition is
represented as
\begin{equation}
    \int\! d\lambda\;
           \bigl[H_\alpha(\lambda)\,f_\beta(\lambda)
                + f_\alpha(\lambda)\,H_\beta(\lambda)\bigr]
           = 0.
    \label{eq:Hf_ortho}
\end{equation}
Equation~\eqref{eq:Hf_ortho} is an equivalent spectral representation of the
same moment constraint encoded in the absence of the $1/z$ Laurent coefficient
($\mathcal{R}_\alpha \mathcal{R}_\beta = O(z^{-2})$, Lemma~\ref{lem:app_scaling}):
both express $M_0=0$, the former through the Kramers--Kronig antisymmetry
$\int H_\alpha f_\beta = -\int f_\alpha H_\beta$, the latter through the
high-frequency asymptotics.
Neither is the ``cause'' of the other; they are complementary manifestations
of the analyticity of the resolvent.

This spectral perspective provides a physically transparent interpretation of
the moment-preservation results proven algebraically in
Appendix~\ref{app:moment_preservation}: the DCA preserves $M_0=0$ because the
analytic structure $\mathcal{R}_\alpha \mathcal{R}_\beta \sim (H_\alpha+if_\alpha)(H_\beta+if_\beta)$
inherits the Hilbert-transform antisymmetry required for integrated orthogonality.
The replacement $\mathcal{R}^{(\alpha)}_\beta\to \mathcal{R}_\beta$ does not alter this structure,
so the leading DCA truncation automatically respects the spectral manifestation
of the geometric constraint $\braket{\phi_\alpha|\phi_\beta}=0$.

In summary: the exact projection hierarchy satisfies
Eq.~\eqref{eq:orthogonality_sumrule} identically; the DCA truncation
preserves the first three spectral moments
(Appendix~\ref{app:moment_preservation}) and therefore respects
orthogonality at the leading level.  The DCA therefore inherits,
rather than establishes, the orthogonality constraint.

\subsection{Resolvent correlation kernel hierarchy}
\label{sec:kernel_hierarchy}

Insert the DCA path expansions of both resolvent families into the 
two-frequency correlation kernel 
\(\mathcal{K}_{\mu\nu}^{ji}(z_1,z_2) 
= \mathcal{R}_{\mu j,\nu j}(z_1) \, \mathcal{R}_{\nu i,\mu i}(z_2)\). 
The product generates a double expansion in the orders of the two 
off-diagonal resolvents:
\begin{equation}
    \mathcal{K}_{\mu\nu}^{ji}(z_1,z_2) 
    = \sum_{\ell_1,\ell_2 \ge 1} 
      \mathcal{K}^{[\ell_1,\ell_2]}_{\mu\nu}(z_1,z_2),
    \label{eq:K_double}
\end{equation}
where
\begin{equation}
    \mathcal{K}^{[\ell_1,\ell_2]}_{\mu\nu}(z_1,z_2)
    := \mathcal{R}^{(\ell_1)}_{\mu j,\nu j}(z_1) \;
       \mathcal{R}^{(\ell_2)}_{\nu i,\mu i}(z_2).
    \label{eq:K_ell12}
\end{equation}
The term \(\mathcal{K}^{[\ell_1,\ell_2]}\) contains \(\ell_1 + \ell_2\) 
diagonal resolvents and \((\ell_1-1) + (\ell_2-1)\) interaction matrix 
elements in total.  Crucially, the two resolvent factors are built from 
\emph{different} diagonal-resolvent sets: the \(z_1\)-factor involves 
\(\{\mathcal{R}_{\mu j},\mathcal{R}_{\nu j},\ldots\}\) (system sector~\(j\)), 
while the \(z_2\)-factor involves 
\(\{\mathcal{R}_{\nu i},\mathcal{R}_{\mu i},\ldots\}\) (system sector~\(i\)).

For bath-non-diagonal correlations (\(\mu \neq \nu\)), the diagonal 
level \(\ell = 1\) vanishes because \(\mathcal{R}^{(1)}_{\mu j,\nu j} 
\propto \delta_{\mu\nu}\). Consequently, \(\ell_1, \ell_2 \ge 2\) and the 
hierarchy begins at \([\ell_1,\ell_2] = [2,2]\). The excluded \((\ell_1,\ell_2)=(1,1)\) sector 
is not discarded---it is precisely the 
diagonal baseline \(D_{ji}\) 
(Eq.~\eqref{eq:Dji_def}). 
Indeed, 
\(\mathcal{K}^{[1,1]}_{\mu\nu} 
= \delta_{\mu\nu}
\mathcal{R}_{\mu j}(z_1)
\mathcal{R}_{\mu i}(z_2)\); 
summing over \(\mu,\nu\) and taking the 
double imaginary part reproduces 
\(\sum_\mu p_n^{\mu j}p_m^{\mu i}\), 
which, after statistical averaging and 
multiplication by \(e^{S}\), yields 
\(D_{ji}\). 
Thus the full decomposition 
\(f_{ji}^2 = D_{ji} + g_{ji}\) 
[Eq.~\eqref{eq:f2_eq_D_plus_g}] 
acquires a unified resolvent interpretation:
\[
f_{ji}^2 \;\leftrightarrow\;
\underbrace{\mathcal{K}^{[1,1]}}_{D_{ji}}
\;+\;
\underbrace{\sum_{r\ge2}\mathcal{K}^{(r)}}_{g_{ji}} .
\]
 It is convenient to label 
contributions by the total V-count \(r = \ell_1 + \ell_2 - 2\):
\begin{equation}
    \mathcal{K}_{\mu\nu}^{ji}(z_1,z_2) 
    = \sum_{r=2}^{\infty} 
      \mathcal{K}^{(r)}_{\mu\nu}(z_1,z_2),
    \qquad (\mu \neq \nu)
    \label{eq:K_hierarchy}
\end{equation}
where \(\mathcal{K}^{(r)}\) groups all pairs \([\ell_1,\ell_2]\) with 
\(\ell_1 + \ell_2 = r + 2\) and \(\ell_1, \ell_2 \ge 2\). The first two 
nontrivial levels read:

\medskip
\noindent\textbf{\(r=2\) ([2,2]) --- lowest-order bath-non-diagonal coupling.}
\begin{align}
    \mathcal{K}^{(2)}_{\mu\nu}(z_1,z_2) 
    = \mathcal{R}^{(2)}_{\mu j,\nu j}(z_1) 
       \mathcal{R}^{(2)}_{\nu i,\mu i}(z_2) \notag\\
    = V_{\mu j,\nu j}\,V_{\nu i,\mu i} 
       \mathcal{R}_{\mu j}(z_1) 
       \mathcal{R}_{\nu j}(z_1) 
       \mathcal{R}_{\nu i}(z_2) 
       \mathcal{R}_{\mu i}(z_2).
    \label{eq:K2}
\end{align}
This term contains four diagonal resolvents---\(\mathcal{R}_{\mu j}\), 
\(\mathcal{R}_{\nu j}\), \(\mathcal{R}_{\nu i}\), \(\mathcal{R}_{\mu i}\)---and 
two interaction matrix elements. The resolvents belong to two distinct 
system sectors (\(j\) and \(i\)), while the interaction vertices 
\(V_{\mu j,\nu j}\) and \(V_{\nu i,\mu i}\) each couple bath states 
\emph{within} a single system sector. 

\medskip
\noindent\textbf{\(r=3\) ([2,3] + [3,2]) --- three-resolvent interference.}
\begin{widetext}
\begin{align}
    \mathcal{K}^{(3)}_{\mu\nu}(z_1,z_2) 
    = \mathcal{R}^{(2)}_{\mu j,\nu j}(z_1) \;
       \mathcal{R}^{(3)}_{\nu i,\mu i}(z_2)
       + \mathcal{R}^{(3)}_{\mu j,\nu j}(z_1) \;
       \mathcal{R}^{(2)}_{\nu i,\mu i}(z_2) \notag\\
    =  \sum_{\xi k \neq \nu i,\mu i}
         V_{\mu j,\nu j} \, V_{\nu i,\xi k} \, V_{\xi k,\mu i} \;
         \mathcal{R}_{\mu j}(z_1) \mathcal{R}_{\nu j}(z_1)
         \mathcal{R}_{\nu i}(z_2) \mathcal{R}_{\xi k}(z_2) 
         \mathcal{R}_{\mu i}(z_2) \notag\\
        +  \sum_{\xi k \neq \mu j,\nu j} V_{\mu j,\xi k} \, V_{\xi k,\nu j} \, V_{\nu i,\mu i} \;
         \mathcal{R}_{\mu j}(z_1) \mathcal{R}_{\xi k}(z_1)
         \mathcal{R}_{\nu j}(z_1)
         \mathcal{R}_{\nu i}(z_2) \mathcal{R}_{\mu i}(z_2)
    .
    \label{eq:K3}
\end{align}
\end{widetext}
The two contributions within \(\mathcal{K}^{(3)}\) carry \emph{distinct} 
three-V products.  The first ([2,3]) contains one intra-\(j\) vertex 
\(V_{\mu j,\nu j}\) and two inter-sector vertices 
\(V_{\nu i,\xi k}V_{\xi k,\mu i}\); the second ([3,2]) contains two 
inter-sector vertices \(V_{\mu j,\xi k}V_{\xi k,\nu j}\) and one intra-\(i\) 
vertex \(V_{\nu i,\mu i}\).  For time-reversal-symmetric interactions 
(\(V_{\alpha\beta} = V_{\beta\alpha} \in \mathbb{R}\)), the two V-products 
are related by relabelling of the dummy index \(\xi k\) but are not 
identically equal term-by-term. The resolvent factor structures also 
differ: the first term involves \(\mathcal{R}_{\xi k}(z_2)\) while the 
second involves \(\mathcal{R}_{\xi k}(z_1)\).

The crucial new feature at \(r=3\) is the appearance of a \emph{third} 
distinct bath--system index \(\xi k\) as an intermediate state, which 
introduces nonlocal frequency mixing among three distinct diagonal-resolvent 
channels. Each contribution contains five diagonal resolvents and three 
interaction matrix elements.

\medskip
\noindent\textbf{General \(r\).}
For \(r \ge 4\), \(\mathcal{K}^{(r)}\) sums over all pairs 
\([\ell_1,\ell_2]\) with \(\ell_1 + \ell_2 = r+2\) and \(\ell_1,\ell_2 \ge 2\):
\begin{equation}
    \mathcal{K}^{(r)}_{\mu\nu}(z_1,z_2) 
    = \sum_{\substack{\ell_1,\ell_2 \ge 2 \\ \ell_1+\ell_2 = r+2}}
      \mathcal{R}^{(\ell_1)}_{\mu j,\nu j}(z_1) \;
      \mathcal{R}^{(\ell_2)}_{\nu i,\mu i}(z_2).
    \label{eq:K_r_general}
\end{equation}
The number of contributions at level \(r\) equals \(r-1\) (the number of 
integer pairs \((\ell_1,\ell_2)\) with \(\ell_1,\ell_2 \ge 2\) summing to 
\(r+2\)). Each contribution contains \(r+2\) diagonal resolvents and 
\(r\) interaction matrix elements, and involves between 2 and $r$ distinct bath--system channels
(i.e.,~distinct pairs $(\mu,i)$ of bath and system indices).

Taking the double imaginary part yields the corresponding hierarchy for 
the joint spectral density: $\rho^{ji}_{\mu\nu}(\lambda,\lambda') 
    = \sum_{r=2}^{\infty} 
      \rho^{(r)}_{\mu\nu}(\lambda,\lambda')$, where
\begin{equation}  \label{eq:rho_hierarchy}
    \rho^{(r)}_{\mu\nu}(\lambda,\lambda') 
    := \frac{1}{\pi^2} 
    \Im_{z_1}\Im_{z_2} 
    \mathcal{K}^{(r)}_{\mu\nu}(z_1,z_2)
    \Big|_{z_1=\lambda-i0^+,z_2=\lambda'-i0^+}.
\end{equation}
Summing over bath channels gives the full resolvent correlation spectral 
density \(\rho^{ji}(\lambda,\lambda') = \sum_{\mu\neq\nu} 
\rho^{ji}_{\mu\nu}(\lambda,\lambda')\), whose level-\(r\) contribution 
is denoted \(\rho^{(r)}(\lambda,\lambda')\).  Via the exact identity 
\eqref{eq:C_from_rho}, binning \(\rho^{ji}\) over eigenstate pairs 
yields the ETH correlation term \(\mathcal{C}_{nmn}^{jij}\) without 
any statistical averaging.

\subsection{The \(r=2\) term: even-parity baseline}
\label{sec:ell2}
\noindent\textit{All results in this and the following subsection
are stated within the DCA of
Sec.~\ref{sec:proj_exp}, under which the kernel hierarchy
$K^{(r)}_{\mu\nu}$ reduces to $K^{(r),\mathrm{D}}_{\mu\nu}$
as defined in Eq.~(\ref{eq:K_hierarchy}).}

To evaluate \(\rho^{(2)}_{\mu\nu}\), we use the Kramers--Kronig boundary 
value of the diagonal resolvent~\cite{HC26}:
\begin{equation}
    \frac{1}{\pi}\,\mathcal{R}_{\alpha}(\lambda - i0^+) 
    = H[f^{\alpha}](\lambda) + i f^{\alpha}(\lambda),
    \label{eq:kk}
\end{equation}
where \(H[f^{\alpha}]\) denotes the Hilbert transform of the smooth 
spectral function \(f^{\alpha}(\lambda) 
= e^{S(\lambda)} p^{\alpha}(\lambda)\).  
\emph{At the Lorentzian-ansatz level of Ref.~\cite{HC26}, 
\(f^{\alpha}(\lambda)\) takes the form 
\(\frac{1}{\pi}\chi_{\alpha}/[(\lambda-\tilde{a}_{\alpha})^2+\chi_{\alpha}^2]\) 
and is strictly symmetric about \(\tilde{a}_{\alpha} = a_{\alpha} + \Delta_{\alpha}\).}
Substituting \cref{eq:kk} into 
Eq.~\eqref{eq:K2} and extracting the imaginary parts gives
\begin{align}
    \rho^{(2)}_{\mu\nu}(\lambda,\lambda') 
    = V_{\mu j,\nu j}V_{\nu i,\mu i} 
       \Im_{z_1}\Im_{z_2} \Bigl[
         \bigl(H_{\mu j} + i f_{\mu j}\bigr)
         \notag\\
   \times
         \bigl(H_{\nu j} + i f_{\nu j}\bigr) \bigl(H_{\nu i}' + i f_{\nu i}'\bigr)
         \bigl(H_{\mu i}' + i f_{\mu i}'\bigr)
       \Bigr],
    \label{eq:rho2_intermediate}
\end{align}
where we abbreviated \(f_{\alpha} \equiv f^{\alpha}(\lambda)\), 
\(f_{\alpha}' \equiv f^{\alpha}(\lambda')\), and similarly for 
\(H_{\alpha}, H_{\alpha}'\).  The two frequency slots factorize:
the \(z_1\)-imaginary part selects one \(f\) and one \(H\) from 
\(\{\mu j,\nu j\}\), while the \(z_2\)-imaginary part selects one 
\(f'\) and one \(H'\) from \(\{\nu i,\mu i\}\).  Expanding the product and 
retaining only terms that contribute to the double imaginary part yields
\begin{align}
    \rho^{(2)}_{\mu\nu}(\lambda,\lambda') 
    &= V_{\mu j,\nu j}V_{\nu i,\mu i} 
       \Bigl[
         f_{\mu j} H_{\nu j}  f_{\nu i}' H_{\mu i}'
         + f_{\mu j} H_{\nu j}  H_{\nu i}' f_{\mu i}'
          \notag\\
        &+ H_{\mu j} f_{\nu j}  f_{\nu i}' H_{\mu i}'
         + H_{\mu j} f_{\nu j}  H_{\nu i}' f_{\mu i}'
       \Bigr].
    \label{eq:rho2_full}
\end{align}
Every term contains exactly two Hilbert-transform factors (one from each 
frequency slot) and two spectral functions. 

The essential structural property of \(\rho^{(2)}\) is its behaviour under 
joint frequency reflection about the self-consistently renormalised centre 
energies \(\tilde{a}_{\alpha} = a_{\alpha} + \Delta_{\alpha}\) defined in 
Eq.~\eqref{eq:renorm_center}. Under 
\(\lambda \to 2\tilde{a}_{\mu j} - \lambda\) and 
\(\lambda' \to 2\tilde{a}_{\mu i} - \lambda'\) 
(equivalently \(\lambda' \to 2\tilde{a}_{\nu i} - \lambda'\) 
when \(\tilde{a}_{\mu i} \approx \tilde{a}_{\nu i}\))\footnote{Strictly speaking, the joint reflection 
\(\lambda \to 2\tilde{a}_{\mu j} - \lambda\) treats the two bath states 
\(\mu\) and \(\nu\) asymmetrically: \(f^{\nu j}\) is centred at 
\(\tilde{a}_{\nu j} = a_{\nu j} + \Delta_{\nu j}\), not at 
\(\tilde{a}_{\mu j}\).  The parity analysis therefore additionally 
requires \(\tilde{a}_{\mu j} \approx \tilde{a}_{\nu j}\) for the 
dominant bath pairs, which holds when the dispersive shifts 
\(\Delta_{\mu j}, \Delta_{\nu j}\) are small compared to the spectral 
width or when the relevant bath energies are nearly degenerate.  
See Ref.~\cite{HC26} for the full self-consistent determination 
of the shifts.}
, 
and adopting the standard approximation---inherited from the resolvent 
framework of Ref.~\cite{HC26}---that each smooth spectral function 
\(f^{\alpha}\) is approximately symmetric about its renormalised centre 
\(\tilde{a}_{\alpha}\), each spectral function transforms approximately as 
\(f \approx f\) (even), while its Hilbert transform transforms approximately as 
\(H[f] \approx -H[f]\) (odd). Every term in Eq.~\eqref{eq:rho2_full} 
contains exactly \textbf{two} Hilbert transforms (one per frequency slot). 
Consequently, each term is parity-even to leading order, and
\begin{equation}
   \rho^{(2)}_{\mu\nu}(\lambda,\lambda') 
    = \rho^{(2)}_{\mu\nu}
      (2\tilde{a}_{\mu j}-\lambda,\; 2\tilde{a}_{\mu i}-\lambda').
    \label{eq:rho2_even}
\end{equation}
The \(r=2\) correlation spectral density is \textbf{parity-even-dominated}: 
it contributes only to the symmetric part of \(\mathcal{C}_{nmn}^{jij}\) 
as a function of \(\omega = \lambda_n - \lambda_m\).

A notable structural feature of the corrected kernel is that 
\(\rho^{(2)}_{\mu\nu}\) involves \textbf{four distinct spectral functions} 
\((f^{\mu j}, f^{\nu j}, f^{\nu i}, f^{\mu i})\) rather than two.  This 
reflects the physical fact that the correlation couples two independent 
system sectors (\(j\) and \(i\)), each with its own bath-induced spectral 
envelope.  In the special case where spectral functions are approximately 
independent of the system index (valid when the system--bath coupling is 
weak compared to the bath bandwidth), one recovers a product of two 
identical two-resolvent structures.

This parity-even structure is shared by all single-resolvent closure 
schemes, including the self-consistent Born approximation (SCBA) and 
Lanczos continued-fraction truncations~\cite{HC26}. At level \(r=2\), 
the correlation kernel does not yet access the coherent phase 
interference that generates correlation skewness; it provides the leading 
bath-non-diagonal contribution to the \emph{variance} of 
\(|\sigma_{nm}^{ji}|^2\) but generates no odd-parity component.

\subsection{The \(r=3\) term: parity mixing and correlation skewness}
\label{sec:ell3}

The \(r=3\) level (within the DCA hierarchy) introduces a qualitatively new element. Consider the 
first contribution ([2,3]) to Eq.~\eqref{eq:K3} (the second, [3,2], is 
analysed analogously). Its double imaginary part involves the product of 
five resolvent boundary values and the three-V product 
\(V_{\mu j,\nu j} V_{\nu i,\xi k} V_{\xi k,\mu i}\). Using Eq.~\eqref{eq:kk}, 
the expansion generates terms of the generic form
\begin{align}
    \rho^{(3a)}_{\mu\nu} \;\sim\; 
    \sum_{\xi k} V_{\mu j,\nu j} V_{\nu i,\xi k} V_{\xi k,\mu i} \;
    \Bigl[
        f_{\mu j} H_{\nu j} \, 
         f_{\nu i}' H_{\xi k}' H_{\mu i}'
        \notag\\
      + f_{\mu j} H_{\nu j} \, 
         H_{\nu i}' f_{\xi k}' H_{\mu i}'
      + f_{\mu j} H_{\nu j} \, 
         H_{\nu i}' H_{\xi k}' f_{\mu i}'
      + \cdots \Bigr],
    \label{eq:rho3_structure}
\end{align}
where the ellipsis denotes all other combinations with the correct 
imaginary-part selection from each frequency slot. The [3,2] contribution 
generates analogous terms with the three-V product 
\(V_{\mu j,\xi k} V_{\xi k,\nu j} V_{\nu i,\mu i}\) and the intermediate 
resolvent \(\mathcal{R}_{\xi k}\) evaluated at \(z_1\) instead of \(z_2\).

The decisive difference from \(r=2\) is that the \(z_2\) factor now 
contains \textbf{three} resolvent boundary values 
\((\mathcal{R}_{\nu i},\mathcal{R}_{\xi k},\mathcal{R}_{\mu i})\).  
The imaginary part of a product of three \((H+if)\) factors generates 
terms with either one or three Hilbert transforms.  Combined with the 
single Hilbert transform from the \(z_1\) factor, the \emph{total} number 
of Hilbert transforms in each term is either 2 (even) or 4 (even)---yet 
the crucial parity-violating terms arise from the cross-coupling between 
the two frequency slots.

Specifically, the odd-parity signature manifests in the \textbf{relative} 
sign under exchanging \(\lambda \leftrightarrow \lambda'\) (equivalently 
\(\omega \leftrightarrow -\omega\) for fixed mean energy). Under the 
approximate symmetry \(f \approx f\) (even) and \(H[f] \approx -H[f]\) 
(odd) about each renormalised centre \(\tilde{a}_{\alpha}\), the [2,3] and [3,2] contributions 
transform with \emph{opposite} relative signs because the intermediate 
resolvent \(\mathcal{R}_{\xi k}\) appears in different frequency slots.  
Consequently, the total $r=3$ joint spectral
density \emph{generically} possesses a non-zero odd-parity
component---i.e.,~its antisymmetric part does not vanish
identically unless accidental cancellation occurs:
\begin{equation}
\rho^{(3)}_{\mu\nu}(\lambda,\lambda')
\;\neq\;
\rho^{(3)}_{\mu\nu}(\lambda',\lambda)
\qquad\text{(generically)}.
\label{eq:rho3_odd}
\end{equation}

After summing over bath indices $\mu,\nu,\xi$, this odd
component generically survives provided (i)~the three-index
interaction products
$V_{\mu j,\nu j}V_{\nu i,\xi k}V_{\xi k,\mu i}$ and
$V_{\mu j,\xi k}V_{\xi k,\nu j}V_{\nu i,\mu i}$ have
non-vanishing real parts---which is guaranteed for
time-reversal-symmetric interactions where all $V$ are real---and
(ii)~no accidental cancellation occurs among the contributing
bath triples after summation.
The random-phase nature of off-diagonal matrix elements in the
ETH regime~\cite{AKP16} makes such accidental cancellation
exponentially unlikely in generic nonintegrable systems.
The resulting contribution to the full 
\(\rho^{ji}(\lambda,\lambda')\) acquires a dominant odd-parity 
component as a function of the relative frequency 
\(\omega = \lambda - \lambda'\), with subleading even corrections controlled 
by the asymmetry of the spectral functions about their renormalised 
centre energies \(\tilde{a}_{\alpha}\).

\subsection{Physical interpretation and comparison with 
single-resolvent approaches}
\label{sec:physical_interpretation}

The hierarchy~\eqref{eq:rho_hierarchy} classifies the ETH
correlation contributions by the number of interfering bath
channels.  The $r=2$ level describes two-channel interference whose
resolvent sets for the two system sectors are independent, so the
resulting correlation is symmetric under $\omega\leftrightarrow-\omega$
and contributes to the variance but not the skewness; all
parity-preserving single-resolvent closures are confined to this
sector.  The $r=3$ level describes three-channel interference in
which one propagation path visits an intermediate state $\xi k$, and
the Hilbert-transform convolution $f\cdot H[f]$ across distinct bath
channels generates an intrinsic correlation skewness,
$\mathcal{C}_{nmn}^{jij}\neq\mathcal{C}_{mnm}^{jij}$ as a function
of $\omega$.  Higher levels introduce progressively more
intermediate states, corresponding to higher-order cumulants of the
ETH distribution~\cite{FK19}; whether the even/odd pattern
established at $r=2,3$ persists to all orders remains open.

Binning \(\rho^{ji}(\lambda,\lambda')\) over the eigenstate pairs 
\((\lambda_n,\lambda_m)\) via the exact identity~\eqref{eq:C_from_rho} 
yields the level-by-level decomposition
\begin{equation}
    \mathcal{C}_{nmn}^{jij}
    = \mathcal{C}^{(2)}_{nmn} 
    + \mathcal{C}^{(3)}_{nmn} 
    + \mathcal{C}^{(4)}_{nmn} 
    + \cdots,
    \label{eq:C_hierarchy}
\end{equation}
where \(\mathcal{C}^{(r)}_{nmn}\) originates from binning \(\rho^{(r)}\).
The negative total sum \(\sum_{m\neq n}\mathcal{C}_{nmn}^{iii} < 0\) 
identified in the sum-rule analysis of Sec.~\ref{sec:insufficiency} 
receives contributions from \emph{all} levels and is not a fundamental 
property of any single level; rather, it is the DCA-level manifestation 
of cavity subtraction (backward Krylov return processes), as shown in 
Ref.~\cite{HC26}.

What the \(r=3\) level adds that the sum rule cannot provide is the 
energy dependence of the correlation: the odd-parity component 
implies that \(\mathcal{C}_{nmn}^{jij}\) as a function of 
\(\omega = \lambda_n - \lambda_m\) is generically asymmetric about 
\(\omega = 0\). This skewness is a quantitative prediction of the 
multi-resolvent framework that can be tested numerically, and it 
constitutes a direct window into the coherent path-interference 
structure of the underlying dynamics.

Table~\ref{tab:hierarchy} summarizes the structural progression of the 
multi-resolvent hierarchy for ETH correlations. Each level introduces 
a qualitatively new feature not present at any lower level, forming a 
systematically improvable framework that links microscopic interaction 
matrix elements to the statistical properties of off-diagonal ETH 
observables.

\begin{table*}[t]
\centering
\caption{Structural progression of the multi-resolvent hierarchy for 
ETH correlations (\(\mu\neq\nu\)). \(N_{\text{ch}}\) denotes the number 
of distinct bath channels involved; ``parity'' refers to the dominant 
behaviour under \(\omega \leftrightarrow -\omega\) after coarse-graining. 
The V-count \(r\) equals the total number of interaction matrix elements 
in the kernel product \(\mathcal{R}_{\mu j,\nu j}\,\mathcal{R}_{\nu i,\mu i}\).}
\label{tab:hierarchy}
\begin{tabular}{c c c c c l}
\toprule
Level \(r\) & \((\ell_1,\ell_2)\) pairs & \(N_{\text{ch}}\) & V-count 
& Dominant parity & ETH content \\
\midrule
\(r=2\) & [2,2] & 2 & 2 & even 
& \(|\sigma_{nm}|^2\) variance (baseline) \\[3pt]
\(r=3\) & [2,3], [3,2] & 3 & 3 & \textbf{odd} 
& \textbf{correlation skewness} \\[3pt]
\(r=4\) & [2,4], [3,3], [4,2] & 2--4 & 4 & even 
& excess kurtosis \\[3pt]
\(r=5\) & [2,5], [3,4], [4,3], [5,2] & 2--5 & 5 & odd 
& higher skewness \\[3pt]
\(\vdots\) & \(\vdots\) & \(\vdots\) & \(\vdots\) & mixed (see text) 
& higher cumulants \\
\bottomrule
\end{tabular}
\end{table*}

\section{Diagonal Closure and Physical Implications}
\label{sec:DCA}

\subsection{The diagonal closure approximation for the correlation 
hierarchy}
\label{sec:DCA_statement}

The path expansion~\eqref{eq:R_hierarchy} for each off-diagonal resolvent 
is exact but involves projected diagonal resolvents 
\(\mathcal{R}^{(\mathcal{S})}_{\alpha}(z)\) defined with respect to cavity 
Hamiltonians from which selected basis states have been removed. For the 
hierarchy to become a practical computational tool, these projected 
resolvents must be expressed in terms of the full diagonal resolvents 
\(\mathcal{R}_{\alpha}(z)\).

The DCA, introduced and 
justified in Ref.~\cite{HC26}, replaces every projected diagonal resolvent 
by its full counterpart:
\begin{equation}
   \mathcal{R}_{\alpha}^{(\mu i,\alpha_1,\dots,\alpha_m)}(z)
    \;\approx\; \mathcal{R}_{\alpha}(z).
    \label{eq:DCA}
\end{equation}
Under this approximation, the two families of off-diagonal resolvents 
\(\mathcal{R}_{\mu j,\nu j}\) and \(\mathcal{R}_{\nu i,\mu i}\) simplify to 
closed forms in terms of ordinary diagonal resolvents, using 
Eqs.~\eqref{eq:R2}--\eqref{eq:Rell} (with the index relabelling 
\((\alpha,\beta)\to(\nu,\mu)\) and \(j\to i\) for the second family).  
The correlation kernel hierarchy \(\mathcal{K}^{(r)}_{\mu\nu}\) then 
reduces to
    \begin{align}
    \mathcal{K}^{(r),\text{D}}_{\mu\nu}(z_1,z_2)
    = \sum_{\substack{\ell_1,\ell_2 \ge 2 \\ \ell_1+\ell_2 = r+2}}\;
      \sum_{\substack{\alpha_1\neq\cdots \\
                      \neq \mu j,\nu j}}\;
      \sum_{\substack{\beta_1\neq\cdots \\
                      \neq \nu i,\mu i}}\notag\\
      \mathfrak{V}^{[\ell_1,\ell_2]}_{\mu j,\nu j;\nu i,\mu i;\{\alpha\},\{\beta\}}
      \; \prod_{a=1}^{\ell_1} \mathcal{R}_{\alpha_a}(z_1) \;
      \prod_{b=1}^{\ell_2} \mathcal{R}_{\beta_b}(z_2),
    \label{eq:KDCA}
\end{align}
where \(\mathfrak{V}^{[\ell_1,\ell_2]}\) denotes the product of 
\(r = \ell_1+\ell_2-2\) interaction matrix elements with index 
contractions determined by the specific \((\ell_1,\ell_2)\) pair.  
The \(\ell_1\) diagonal resolvents in the \(z_1\)-factor arise from 
the path \(\nu j \to \cdots \to \mu j\) (within system sector \(j\)), 
while the \(\ell_2\) diagonal resolvents in the \(z_2\)-factor arise 
from the path \(\mu i \to \cdots \to \nu i\) (within system sector \(i\)).  
The two resolvent sets are generically distinct, reflecting the 
independent bath-induced spectral envelopes of the two system sectors.

The physical justification of the DCA rests on \emph{entropy dilution} 
in the ETH regime~\cite{HC26}. The cavity correction induced by removing a 
single basis state from the Hilbert space factorises as
\begin{equation}
    \mathcal{R}_{\nu j}(z) - \mathcal{R}^{(\mu i)}_{\nu j}(z)
    = \mathcal{R}_{\nu j,\mu i}(z) \;
      \bra{\phi_{\mu i}} H \, G^{(\mu i)}(z) \ket{\phi_{\nu j}},
    \label{eq:cavity_correction}
\end{equation}
where the off-diagonal resolvent \(\mathcal{R}_{\nu j,\mu i}\) (note the 
\emph{different} system indices---this is a distinct family from 
\(\mathcal{R}_{\mu j,\nu j}\) and \(\mathcal{R}_{\nu i,\mu i}\) used in the 
ETH correlation kernel) scales as \(e^{-S/2}\) (ETH off-diagonal 
scaling) and the matrix element contributes another factor 
\(e^{-S/2}\) from the dominant diagonal return channel 
\(\gamma k = \nu j\). The total cavity correction scales as 
\(e^{-S}\) and vanishes in the thermodynamic limit. Off-diagonal return 
channels (\(\gamma k \neq \nu j\)), though proliferating as \(e^{S}\), 
add \emph{incoherently} (random-phase summation) rather than coherently, 
yielding a self-averaging random-walk suppression by an additional 
\(e^{-S/2}\). The DCA is therefore a controlled approximation in 
nonintegrable systems with exponentially large Hilbert spaces; its 
detailed validity conditions and failure mechanisms are analysed in 
Ref.~\cite{HC26}.

The entropy-dilution argument above justifies the DCA in the ETH
regime ($S\to\infty$).  Appendix~\ref{app:moment_preservation}
provides an independent, parameter-free perspective: the DCA
preserves the first \emph{three} spectral moments of the
off-diagonal resolvent exactly:
\begin{enumerate}
    \item $M_0^{\alpha\beta} = 0$ (integrated orthogonality),
    \item $M_1^{\alpha\beta} = H_{\alpha\beta}
          = V_{\alpha\beta}$
          (the Hamiltonian matrix element),
    \item $M_2^{\alpha\beta} = (H^2)_{\alpha\beta}$
          (the two-point energy correlator).
\end{enumerate}
The preservation of the integrated orthogonality condition follows
from the $O(z^{-2})$ scaling of
$\mathcal{R}^{(2)}_{\alpha\beta}=V_{\alpha\beta}\mathcal{R}_\alpha\mathcal{R}_\beta$:
the product of two $O(z^{-1})$ resolvents contains no $1/z$ Laurent
coefficient, so $M_0^{\alpha\beta,\mathrm{DCA}}=0$ identically
(Lemma~\ref{lem:app_scaling}).  The spectral manifestation of this
result is the Hilbert-transform antisymmetry established in
Sec.~\ref{sec:Hilbert_ortho}: the imaginary part
$\Im(\mathcal{R}_\alpha\mathcal{R}_\beta)
\propto H_\alpha f_\beta + f_\alpha H_\beta$ integrates to zero via
$\int H_\alpha f_\beta = -\int f_\alpha H_\beta$.  The DCA retains
this structure because the replacement of projected by full
resolvents preserves the $H+if$ boundary-value form
(Eq.~\eqref{eq:kk}).

\subsection{Simplified forms of the leading correlation densities}
\label{sec:DCA_forms}

Under the DCA, the \(r=2\) and \(r=3\) correlation spectral 
densities acquire explicit closed forms.

\medskip
\noindent\textbf{\(r=2\) --- even-parity baseline.}
The DCA kernel \(\mathcal{K}^{(2),\text{D}}_{\mu\nu} 
= V_{\mu j,\nu j}\,V_{\nu i,\mu i}\;
\mathcal{R}_{\mu j}(z_1)\mathcal{R}_{\nu j}(z_1)\;
\mathcal{R}_{\nu i}(z_2)\mathcal{R}_{\mu i}(z_2)\).  
Since the \(z_1\)- and \(z_2\)-factors are independent, the double 
imaginary part factorises:
\begin{widetext}
    \begin{align}
    \rho^{(2),\text{D}}_{\mu\nu}(\lambda,\lambda')
    &= \frac{V_{\mu j,\nu j}\,V_{\nu i,\mu i}}{\pi^2} \;
       \Im_{z_1}\!\bigl[ \mathcal{R}_{\mu j}\mathcal{R}_{\nu j}\bigr]_{z_1=\lambda-i0^+}
       \times
       \Im_{z_2}\!\bigl[ \mathcal{R}_{\nu i}\mathcal{R}_{\mu i}\bigr]_{z_2=\lambda'-i0^+} 
       \notag\\
    &= V_{\mu j,\nu j}\,V_{\nu i,\mu i} \;
       \bigl( f_{\mu j}H_{\nu j} + H_{\mu j}f_{\nu j} \bigr)
       \bigl( f_{\nu i}'H_{\mu i}' + H_{\nu i}'f_{\mu i}' \bigr),
    \label{eq:rho2_DCA}
\end{align}
where \(f_{\alpha} \equiv f^{\alpha}(\lambda)\), 
\(f_{\alpha}' \equiv f^{\alpha}(\lambda')\), and 
\(H_{\alpha}, H_{\alpha}'\) denote the corresponding Hilbert transforms 
evaluated at \(\lambda\) and \(\lambda'\) respectively. Each of the four 
terms contains exactly \textbf{two} Hilbert-transform factors (one from 
each frequency slot) and two spectral functions, guaranteeing even 
parity under the joint reflection about the renormalised centres
\((\lambda - \tilde{a}_{\mu j},\; \lambda' - \tilde{a}_{\mu i}) 
\to (\tilde{a}_{\mu j} - \lambda,\; \tilde{a}_{\mu i} - \lambda')\):
\begin{equation}
    \rho^{(2),\text{D}}_{\mu\nu}
    (\lambda - \tilde{a}_{\mu j},\; \lambda' - \tilde{a}_{\mu i})
    = \rho^{(2),\text{D}}_{\mu\nu}
    (\tilde{a}_{\mu j} - \lambda,\; \tilde{a}_{\mu i} - \lambda').
    \label{eq:rho2_even_DCA}
\end{equation}
A structural feature of the corrected kernel is the appearance of 
\textbf{four distinct spectral functions} 
\((f^{\mu j},f^{\nu j},f^{\nu i},f^{\mu i})\) rather than two, 
reflecting the independent bath-induced envelopes of the two system 
sectors. In the special (and common) case where spectral functions are 
approximately system-index-independent, the four functions reduce to 
two, recovering the minimal two-channel structure.

An equivalent perspective is provided by the Hilbert-transform orthogonality
condition~\eqref{eq:Hf_ortho}.  At the leading DCA level, the two off-diagonal
resolvent factors entering the $r=2$ kernel are approximated as
$\mathcal{R}^{\mathrm{DCA}}_{\mu j,\nu j}=V_{\mu j,\nu j}\mathcal{R}_{\mu j}\mathcal{R}_{\nu j}$ and
$\mathcal{R}^{\mathrm{DCA}}_{\nu i,\mu i}=V_{\nu i,\mu i}\mathcal{R}_{\nu i}\mathcal{R}_{\mu i}$, each of
which satisfies the Hilbert-transform orthogonality condition:
\begin{equation}
    \int\! d\lambda\,(H_{\mu j}f_{\nu j}+f_{\mu j}H_{\nu j})
    = 0,\qquad
    \int\! d\lambda'\,(H_{\nu i}'f_{\mu i}'+f_{\nu i}'H_{\mu i}')
    = 0.
\end{equation}
Consequently, the $r=2$ correlation kernel inherits the orthogonality constraints
of its constituent off-diagonal resolvent factors.
Off-diagonal ETH correlations are therefore not positive spectral densities; they
represent \textbf{interference spectra} whose resolvent factors are constrained by
destructive Hilbert-transform cancellation---a direct consequence of basis
orthogonality, independent of any ETH or chaos assumption.

\medskip
\noindent\textbf{\(r=3\) --- parity mixing.}
\begin{align}
    \rho^{(3),\text{D}}_{\mu\nu}(\lambda,\lambda')
    = \frac{1}{\pi^2} \;
       \Im_{z_1}\Im_{z_2}
       \Bigl[\sum_{\xi k \neq \nu i,\mu i}
         V_{\mu j,\nu j} V_{\nu i,\xi k} V_{\xi k,\mu i} \;
         \mathcal{R}_{\mu j}(z_1) \mathcal{R}_{\nu j}(z_1)
         \mathcal{R}_{\nu i}(z_2) \mathcal{R}_{\xi k}(z_2) 
         \mathcal{R}_{\mu i}(z_2) \notag\\
       + \sum_{\xi k \neq \mu j,\nu j}V_{\mu j,\xi k} V_{\xi k,\nu j} V_{\nu i,\mu i} \;
         \mathcal{R}_{\mu j}(z_1) \mathcal{R}_{\xi k}(z_1)
         \mathcal{R}_{\nu j}(z_1)
         \mathcal{R}_{\nu i}(z_2) \mathcal{R}_{\mu i}(z_2)
       \Bigr]_{z_1=\lambda-i0^+,\,z_2=\lambda'-i0^+}.
    \label{eq:rho3_DCA}
\end{align}
\end{widetext}
Expanding via Eq.~\eqref{eq:kk} yields products of five factors 
\(H[f] + i f\). The \(z_1\)-factor \(\Im[\mathcal{R}_{\mu j}\mathcal{R}_{\nu j}]\) 
(\([2,3]\) contribution) or \(\Im[\mathcal{R}_{\mu j}\mathcal{R}_{\xi k}\mathcal{R}_{\nu j}]\) 
(\([3,2]\) contribution) selects an \emph{odd} number of Hilbert 
transforms from the \(z_1\)-slot (one for \([2,3]\), one or three for 
\([3,2]\)). Combined with the \(z_2\)-slot selections, the total 
Hilbert-transform count in the contributing terms can be \textbf{odd}.  
Explicitly, for the \([2,3]\) contribution:
\begin{align}
    \rho^{(3a),\text{D}}_{\mu\nu} 
    = V_{\mu j,\nu j} \sum_{\xi k\neq \nu i,\mu i} V_{\nu i,\xi k} V_{\xi k,\mu i} \;
       \bigl(f_{\mu j}H_{\nu j} + H_{\mu j}f_{\nu j}\bigr)
       \notag\\
       \times \Bigl[
         f_{\nu i}'H_{\xi k}'H_{\mu i}'
         + H_{\nu i}'f_{\xi k}'H_{\mu i}'
         + H_{\nu i}'H_{\xi k}'f_{\mu i}'
         - f_{\nu i}'f_{\xi k}'f_{\mu i}'
       \Bigr],
    \label{eq:rho3_terms}
\end{align}
where the four terms in the bracket originate from 
\(\Im[\mathcal{R}_{\nu i}\mathcal{R}_{\xi k}\mathcal{R}_{\mu i}]\).  
The first three bracket-terms each contain an odd number of 
Hilbert transforms in the \(z_2\)-slot (one each), which when 
multiplied by the single Hilbert transform from the \(z_1\)-slot 
yield terms with two Hilbert transforms (even parity).  
The fourth bracket-term (\(-f_{\nu i}'f_{\xi k}'f_{\mu i}'\)) contains 
zero Hilbert transforms; multiplied by the single \(H\) from the 
\(z_1\)-slot, it yields exactly \textbf{one} Hilbert transform---an 
odd-parity term:
\begin{equation}
    -V_{\mu j,\nu j} V_{\nu i,\xi k} V_{\xi k,\mu i} \; 
    H_{\mu j} f_{\nu j} \, f_{\nu i}' f_{\xi k}' f_{\mu i}',
    \label{eq:odd_term_explicit}
\end{equation}
which is odd under the joint reflection 
\((\lambda - \tilde{a}_{\mu j},\; \lambda' - \tilde{a}_{\mu i})
\to (\tilde{a}_{\mu j} - \lambda,\; \tilde{a}_{\mu i} - \lambda')\).
Analogous odd-parity terms arise from the \([3,2]\) contribution 
(with \(H\) and \(f\) of the intermediate channel evaluated at \(z_1\)).  
Consequently, 
\begin{equation}
   \rho^{(3),\text{D}}_{\mu\nu}
    (\lambda - \tilde{a}_{\mu j},\; \lambda' - \tilde{a}_{\mu i})
    \neq \rho^{(3),\text{D}}_{\mu\nu}
    (\tilde{a}_{\mu j} - \lambda,\; \tilde{a}_{\mu i} - \lambda').
    \label{eq:rho3_odd_DCA}
\end{equation}

The odd component survives summation over bath indices provided the 
effective three-index couplings 
\(\sum_{\xi k} V_{\mu j,\nu j} V_{\nu i,\xi k} V_{\xi k,\mu i}\) and
\(\sum_{\xi k} V_{\mu j,\xi k} V_{\xi k,\nu j} V_{\nu i,\mu i}\) 
have non-zero real parts. For time-reversal-symmetric 
interactions (\(V_{\alpha\beta} \in \mathbb{R}\)), both products are 
manifestly real; provided their channel sums do not cancel
accidentally, the odd-parity contribution 
to the full correlation density \(\rho^{(3)}(\lambda,\lambda')\) 
survives.

\subsection{Projector restoration and the origin of the negative sum rule}
\label{sec:cavity_reinterpretation}

One of the central conclusions of the analysis from
Sec.~\ref{sec:insufficiency} is the exact sum rule
\begin{equation}
    \sum_{m\neq n} \mathcal{C}_{nmn}^{iii}
    = -\,\mathcal{C}_{nnn}^{iii} < 0,
    \label{eq:sumrule_reminder2}
\end{equation}
which follows rigorously from \(\rho_n^2 = \rho_n\)
(Theorem~1).  The DCA-based hierarchy provides a deeper structural
understanding of \emph{how} this geometric constraint manifests
itself in the propagator language.

\medskip
\noindent\textbf{Leading-order DCA contribution.}
Consider the \(r=2\) contribution to the diagonal (\(j=i\))
correlation.  For \(j=i\), the kernel reduces to 
\(\mathcal{K}^{(2),\text{D}}_{\mu\nu} 
= |V_{\mu i,\nu i}|^2 \;
\mathcal{R}_{\mu i}(z_1)\mathcal{R}_{\nu i}(z_1)\;
\mathcal{R}_{\nu i}(z_2)\mathcal{R}_{\mu i}(z_2)\), 
where we used \(V_{\mu i,\nu i}V_{\nu i,\mu i} = |V_{\mu i,\nu i}|^2\) 
for real \(V\).  From Eq.~\eqref{eq:rho2_DCA} with \(j=i\), 
binning over eigenstate pairs gives
\begin{align}
    \mathcal{C}^{(2),\text{D}}_{nmn}
    \propto \sum_{\mu\neq\nu} |V_{\mu i,\nu i}|^2 \notag\\
\times            \bigl[ f^{\mu i}H^{\nu i} + H^{\mu i}f^{\nu i} \bigr]_{\lambda_n}
            \bigl[ f^{\nu i}H^{\mu i} + H^{\nu i}f^{\mu i} \bigr]_{\lambda_m}.
    \label{eq:C2_diag2}
\end{align}
Unlike the case with mismatched system indices (\(j\neq i\)), 
the \(j=i\) sector involves only \emph{two} distinct spectral 
functions \(f^{\mu i}\) and \(f^{\nu i}\) and their Hilbert transforms.

\medskip
\noindent\textbf{Sign analysis.}
In the ETH regime where each spectral function \(f^{\alpha}\) is 
approximately symmetric about its renormalised centre \(\tilde{a}_{\alpha}
= a_{\alpha} + \Delta_{\alpha}\) (Eq.~\eqref{eq:renorm_center}), 
its Hilbert transform \(H[f^{\alpha}](E)\) is approximately 
antisymmetric: \(H(E - \tilde{a}_{\alpha}) \approx -H(\tilde{a}_{\alpha} - E)\).  
For \(E\) near \(\tilde{a}_{\alpha}\), the Hilbert transform is small 
(\(H \ll f\)), while for \(|E - \tilde{a}_{\alpha}|\) large, \(H\) can be 
comparable to \(f\). The sign of each term 
\(f^{\mu i}H^{\nu i} \times f^{\nu i}H^{\mu i}\) is not fixed 
a priori; it depends on the relative placement of the eigenstate 
energies with respect to the two bath-state centres.

The quantitative sign of \(\sum_m \mathcal{C}^{(2),\text{D}}_{nmn}\) 
must therefore be evaluated case by case.  The hierarchy provides the 
exact functional form; the sign emerges from the spectral overlap 
of the Hilbert transforms.  What \emph{can} be stated rigorously is 
that \(\mathcal{C}^{(2),\text{D}}\) is the leading bath-non-diagonal 
contribution and serves as the baseline against which cavity 
corrections are measured.

\medskip
\noindent\textbf{Restoration Identity.}
To quantify how the DCA deviation from the exact sum rule is
corrected, define the \textbf{cavity correction}
\begin{equation}
    \Delta\mathcal{C}_{nmn}
    := \mathcal{C}_{nmn}^{\text{D}} - \mathcal{C}_{nmn}^{\text{exact}}.
    \label{eq:DeltaC_def}
\end{equation}
The diagonal term
\(C_{nnn} \equiv \mathcal{C}_{nnn}^{iii}
 = \sum_{\mu\neq\nu} p^{\mu i}_n p^{\nu i}_n\)
is defined solely from the exact overlap probabilities
\(p^{\mu i}_n = |\braket{\phi_{\mu i}|\psi_n}|^2\) and involves no
propagator.  Since DCA modifies only the off-diagonal correlation
kernel, we extend the DCA decomposition by assigning
\(C_{nnn}^{\text{D}} \equiv C_{nnn}^{\text{exact}} \equiv C_{nnn}\),
so that \(\Delta\mathcal{C}_{nnn} = 0\).

With \(\Delta\mathcal{C}_{nnn} = 0\) and Theorem~1
(\(\sum_m \mathcal{C}_{nmn}^{\text{exact}} = 0\)), a one-line
computation yields
\begin{align}
    \sum_{m\neq n} \Delta\mathcal{C}_{nmn}
    &= \sum_m \Delta\mathcal{C}_{nmn} \notag\\
    &= \sum_m \mathcal{C}_{nmn}^{\text{D}}
       - \cancel{\sum_m \mathcal{C}_{nmn}^{\text{exact}}} \notag\\
    &= C_{nnn} + \sum_{m\neq n} \mathcal{C}_{nmn}^{\text{D}} .
    \label{eq:DeltaC_sum}
\end{align}
Rearranging gives the \textbf{Restoration Identity}
\begin{equation}
   \sum_{m\neq n} \Delta\mathcal{C}_{nmn}
           \;-\; \sum_{m\neq n} \mathcal{C}_{nmn}^{\text{D}}
           \;=\; C_{nnn} > 0.
    \label{eq:restoration_identity}
\end{equation}
Equivalently,
\begin{equation}
    \sum_{m\neq n} \Delta\mathcal{C}_{nmn}
           \;>\; \sum_{m\neq n} \mathcal{C}_{nmn}^{\text{D}} .
    \label{eq:restoration_ineq}
\end{equation}
This identity is mathematically equivalent to the \(j=i\) case of
Theorem~1, and depends on \textbf{no assumption} beyond projector
idempotency.  It establishes that, irrespective of the sign or
magnitude of the full DCA off-diagonal contribution, the integrated
cavity correction always \emph{exceeds} it by the positive-definite
geometric term \(C_{nnn}\). This mirrors the situation at the resolvent level:
the DCA hierarchy individually violates the orthogonality sum
rule~\eqref{eq:orthogonality_sumrule}, and cavity subtraction
restores it---precisely as it restores the projector sum rule for
\(\mathcal{C}_{nmn}^{jij}\).

The Restoration Identity~\eqref{eq:restoration_identity} is the
$k=0$ (integrated) instance of a broader spectral-moment structure:
cavity subtraction restores not only the projector sum rule but,
order by order, every spectral moment $M_k^{\alpha\beta}$ that the
DCA deviates from.  At the resolvent level, this corresponds to the
statement that the exact projection hierarchy and the DCA hierarchy
differ first at $O(z^{-4})$ (third spectral moment) and that the
entire deviation series is organised by the cavity subtraction
identified in \cref{prop:app_proj} of Appendix~\ref{app:moment_preservation}.
The moment-by-moment restoration of exact spectral content by cavity
corrections is the unifying principle behind both Theorem~1 and the
Restoration Identity.

\medskip
\noindent\textbf{ETH scaling assumption (Assumption~A).}
To determine the sign of \(\sum \Delta\mathcal{C}\), we require the
sign of the total DCA hierarchy.  Under the standard ETH scaling
hypothesis, each additional interaction vertex introduces a factor
\(|V|^2 \sim e^{-S}\), and the combinatorial growth of higher-order
diagrams is sufficiently suppressed that the \(r=2\) term dominates
in magnitude:
\begin{equation}
    \Bigl|\sum_{r\ge 3} \sum_m \mathcal{C}^{(r),\text{D}}_{nmn}\Bigr|
           \;<\; \Bigl|\sum_m \mathcal{C}^{(2),\text{D}}_{nmn}\Bigr|.
    \label{eq:assumption_A}
\end{equation}
This is the standard working assumption of ETH diagrammatics and is
explicitly labelled as \textbf{Assumption~A} in what follows.  A
rigorous proof of the combinatorial suppression lies beyond the
scope of this work.  In the language of
Appendix~\ref{app:first_principles}, Assumption~A is the
mean-sector instance of the same truncation principle that governs
the fluctuation sector: the controlled approximation is not an
assumption on the final ETH statistics themselves, but a truncation
of the irreducible multi-resolvent hierarchy---here the suppression
of the higher V-count sectors, there the suppression of the higher
irreducible covariance vertices $\mathcal K^{(3)},\mathcal K^{(4)},
\ldots$ of Eq.~\eqref{eq:fp_twolevels}.

\medskip
\noindent\textbf{Corollary (Integrated cavity correction).}
Under Assumption~A, the magnitude of the total DCA hierarchy
is dominated by the $r=2$ contribution:
$\bigl|\sum_m C^{\mathrm{D}}_{nmn}\bigr|
\approx \bigl|\sum_m C^{(2),\mathrm{D}}_{nmn}\bigr|$.
The Restoration Identity~(\ref{eq:restoration_identity}) then yields the exact
relation
\begin{equation}
\sum_{m\neq n} \Delta C_{nmn}
= C_{nnn} + \sum_{m\neq n} C^{\mathrm{D}}_{nmn}.
\label{eq:75a}
\end{equation}
If, in addition, the integrated DCA off-diagonal
correlation is positive,
\begin{equation}
\sum_{m} C^{\mathrm{D}}_{nmn} > 0,
\label{eq:sign_condition}
\end{equation}
then Eq.~(\ref{eq:75a}) implies
\begin{equation}
\sum_{m\neq n} \Delta C_{nmn}
> C_{nnn} > 0.
\label{eq:corollary}
\end{equation}
Condition~(\ref{eq:sign_condition}) is not guaranteed by
Assumption~A alone (the sign of the $r=2$ contribution is
not fixed \textit{a priori}; see the sign analysis above).
However, in the physically relevant regime where the
dominant spectral overlap favours constructive
Hilbert-transform products, this condition holds
generically for nonintegrable systems.

\medskip
\noindent\textbf{Proposition (Cavity subtraction as the restoring
mechanism).}
The microscopic origin of \(\Delta\mathcal{C}\) is identified by the
exact cavity identity~\eqref{eq:cavity_correction}:
\begin{equation}
    \mathcal{R}_{\nu j}(z) - \mathcal{R}_{\nu j}^{(\mu i)}(z)
    = \mathcal{R}_{\nu j,\mu i}(z) \;
      \bra{\phi_{\mu i}} H \, G^{(\mu i)}(z) \ket{\phi_{\nu j}}.
    \label{eq:cavity_identity2}
\end{equation}
The right-hand side describes a \textbf{return path} to the projected
state \(\mu i\): the off-diagonal resolvent \(\mathcal{R}_{\nu j,\mu i}\)
propagates from \(\nu j\) back to the removed cavity, and the matrix
element couples back into the full propagation.  When the DCA replaces
cavity-resolved propagators by full propagators, these return paths
are \emph{erroneously included}.  Cavity subtraction removes them.

Combining the cavity identity with the Restoration
Identity~\eqref{eq:restoration_identity} and the
Corollary~\eqref{eq:corollary}, we conclude:
\textbf{cavity subtraction provides the leading mechanism that
restores the projector constraint violated by DCA.}  The
integrated cavity correction is positive
(Corollary), strictly exceeds the DCA off-diagonal contribution
(Restoration Identity), and originates from the removal of return
paths to the projected state (cavity identity).

\medskip
\noindent\textbf{Physical interpretation.}
Taken together, Theorem~1 and the above identities establish
the following logical chain, with the mathematical
status of each step explicitly labelled:
\begin{widetext}
    \[
\begin{aligned}
\rho_n^2 = \rho_n \quad &\text{(projector idempotency)} \\
\Downarrow\quad &\text{(completeness, \textbf{rigorous})} \\
\sum_m \mathcal{C}_{nmn}^{jij} = 0,\;
\sum_{m\neq n} \mathcal{C}_{nmn}^{iii} < 0 \quad
&\text{[\textbf{Theorem~1}: rigorous]} \\
\Downarrow\quad &\text{(DCA discards cavity projectors,
\textbf{approximation})} \\
\mathcal{C}^{(2),\text{D}} \;\text{dominates the DCA hierarchy} \quad
&\text{[\textbf{Assumption~A}: ETH scaling]} \\
\Downarrow\quad &\text{(Restoration Identity + Assumption~A~$+$
sign condition)} \\
\sum \Delta\mathcal{C} > C_{nnn} > 0 \quad
&\text{[\textbf{Corollary}: conditional on
Eq.~(\ref{eq:sign_condition})]} \\
\Downarrow\quad &\text{(cavity identity, \textbf{rigorous})} \\
\text{Negative ETH correlation = cavity-level spectral signature of} &
\;\text{projector-idempotency restoration, with cavity} \\
\text{subtraction as the identified}& \; \text{leading restoration channel.}
\end{aligned}
    \]
\end{widetext}

Thus, \(\sum_{m\neq n} \mathcal{C}_{nmn}^{iii} < 0\) is not a
fundamental constraint but the \textbf{cavity-level manifestation of
projector restoration}.  At the DCA level, the correlation hierarchy
\(\mathcal{C}^{\text{D}} = \mathcal{C}^{(2),\text{D}}
+ \mathcal{C}^{(3),\text{D}} + \cdots\) is built from forward
propagation processes that discard the cavity projectors and
thereby violate the purity constraint.  Cavity subtraction
reinstates the missing projector structure, with the negative sign
emerging as the signature of this restoration.  The systematics of
this trade-off are made precise in Ref.~\cite{HC26}.

\subsection{Skewness as the irreducible signature of multi-resolvent 
interference}
\label{sec:skewness_signature}

The even-parity character of all single-resolvent closures---SCBA, 
Lanczos continued fractions, and any DCA-level truncation at 
\(r=2\)---is not accidental but structural. Any self-energy 
constructed from a single diagonal resolvent, of the generic form
\begin{equation}
    \mathcal{G}[\mathcal{R}](z) 
    = \sum_{\nu j} |V_{\mu i,\nu j}|^2 \,
      \mathcal{F}\bigl[\mathcal{R}_{\nu j}\bigr](z),
    \label{eq:single_R_selfenergy}
\end{equation}
where \(\mathcal{F}\) is an analytic functional, preserves parity 
under energy reflection about the unperturbed energies. The proof, 
given in Ref.~\cite{HC26}, follows from the fact that the boundary 
value \(\mathcal{R}(\lambda - i0^+)/\pi = H[f] + i f\) has the 
parity property \(H[f](\lambda - a) \to -H[f](a - \lambda)\) while 
\(f(\lambda - a) \to f(a - \lambda)\), and any analytic functional 
preserves the relative parity structure.

It follows that \textbf{the leading odd-parity contribution to ETH 
correlations requires at least three distinct resolvent channels} 
(\(\mu, \nu, \xi\) with all three distinct). This is the defining 
signature of multi-resolvent interference: the nonlocal convolution 
\(f \cdot H[f]\) between different bath channels that cannot be 
factorised into separate single-channel contributions.

For the ETH correlation \(\mathcal{C}_{nmn}^{jij}\), the skewness 
under \(\omega \leftrightarrow -\omega\) (where 
\(\omega = \lambda_n - \lambda_m\)) is therefore a \textbf{direct 
experimental signature of multi-resolvent interference}. Define the 
odd-parity projection of the binned correlation:
\begin{equation}
   \mathcal{C}^{\text{odd}}(\omega; \bar{\lambda})
    := \frac{1}{2}\Bigl[
        \mathcal{C}\bigl(\bar{\lambda} + \tfrac{\omega}{2},\;
                        \bar{\lambda} - \tfrac{\omega}{2}\bigr)
      - \mathcal{C}\bigl(\bar{\lambda} - \tfrac{\omega}{2},\;
                        \bar{\lambda} + \tfrac{\omega}{2}\bigr)
    \Bigr],
    \label{eq:Codd}
\end{equation}
where \(\bar{\lambda} = (\lambda_n + \lambda_m)/2\) is the mean energy. 
To leading order, the odd component originates from the \(r=3\) 
terms~\eqref{eq:rho3_DCA} and scales as
\begin{align}
    \mathcal{C}^{\text{odd}}(\omega; \bar{\lambda})
    \simeq \mathcal{C}^{(3),\text{odd}}(\omega; \bar{\lambda})
    \propto \sum_{\mu,\nu,\xi} 
      V_{\mu j,\nu j}\,V_{\nu i,\xi k}\,V_{\xi k,\mu i} \notag\\
     \times \bigl[ H[f^{\mu j}] \cdot f^{\nu j} \cdot f^{\nu i} \cdot f^{\xi k} \cdot f^{\mu i}
            + \text{permutations} \bigr],
    \label{eq:Codd_leading}
\end{align}
together with the \([3,2]\) counterpart where the single Hilbert 
transform originates from the \(z_2\)-slot.  The functional form of 
\(\mathcal{C}^{\text{odd}}\) is a quantitative prediction of the 
multi-resolvent hierarchy. Its scaling with system size, interaction 
strength, and energy is governed by the three-index coupling and the 
spectral overlap of the three distinct bath channels.

\subsection{Connection to the overlap-based decomposition}
\label{sec:UNIETH_connection}

The DCA-level hierarchy provides a microscopic foundation for the 
decomposition introduced in Sec.~\ref{sec:approx}. 
Table~\ref{tab:overlap_mapping} summarises the correspondence.

\begin{table*}[t]
\centering
\caption{Mapping between the overlap-based correlation decomposition of 
Sec.~\ref{sec:approx} and the multi-resolvent hierarchy under the DCA,
together with the approximation role of each object in the
scale-matched replacement principle of Sec.~\ref{sec:scale_matched}.}
\label{tab:overlap_mapping}
\begin{tabular}{c c c c}
\toprule
Overlap-based object & Approximation role & DCA-level content & Dominant parity \\
\midrule
\(\sum_{\mu} p^{\mu i}_m p^{\mu j}_n\) 
& scale-matched baseline \(D_{ji}\)
& \(\rho^{(0)}\) (diagonal overlaps; \(\mu=\nu\)) 
& even \\[3pt]
cavity subtraction \(-\sigma_{nn}^{ji}\Gamma_{nn,\mu}^{ij}\)
& systematic \(O(1)\) bias removal
& cavity/return sector
& --- \\[3pt]
\(\mathcal{C}_{nmn}^{jij}\) (full) 
& full correlation sector
& \(\rho^{(2)} + \rho^{(3)} + \rho^{(4)} + \cdots\) 
& mixed \\[3pt]
\(\mathcal{C}_{nmn}^{'jij}\) (reduced) 
& irreducible residual
& \(\rho^{(2)} + \rho^{(3)} + \cdots\) 
with cavity subtraction removed 
& mixed \\[3pt]
\(\mathcal{B}_{nml,\mu}^{jik}\) 
& channel-resolved diagnostics
& single-\(\mu\) projection of 
\(\rho^{(3)} + \rho^{(4)} + \cdots\) 
& mixed \\
\bottomrule
\end{tabular}
\end{table*}

In particular, the approximation replacement of
Eq.~\eqref{eq:approx_replace}---in which the reduced correlation
${\mathcal{C}'}$ is neglected to express $|\sigma_{nm}^{ji}|^2$
solely through the smooth overlaps $f^{\mu i}$---retains the
full channel-diagonal ($\mu=\nu$) structure, including the
cavity subtraction term $-\sigma_{nn}^{ji}\Gamma_{nn,\mu}^{ij}$.
It discards the inter-channel ($\mu\neq\nu$) sector that,
in the multi-resolvent language, generates the hierarchy
correction $g_{ji}$.
The first non-trivial hierarchy correction appears at $r=2$
through $g_{ji}^{(2)}$, which captures the leading even-parity
(variance) contribution from two-channel interference.
The present framework upgrades the
uncontrolled neglect of ${\mathcal{C}'}$ into a systematic,
improvable expansion.

The physical content of the correlation terms defined in
Sec.~\ref{sec:approx} can now be stated precisely:
\begin{itemize}
    \item \(\mathcal{C}^{(2)}\) describes the lowest-order
    multi-channel interference generated by the \(\rho^{(2)}\)
    sector---the minimal mechanism for bath-induced correlation
    of off-diagonal ETH matrix elements.
    
    \item \(\mathcal{C}^{(3)}\) describes three-channel interference 
    (\(\mu \to \nu \to \xi \to \mu\)), generating the leading 
    odd-parity (skewness) contribution. Its existence is a theorem 
    of the multi-resolvent hierarchy and is \emph{not} visible in any 
    overlap-only analysis.
    
    \item \(\mathcal{B}^{(\ell)}_{nml,\mu}\) isolates the contribution 
    from paths that pass through a specific bath channel \(\mu\), 
    providing channel-resolved diagnostics of the interference 
    network.
\end{itemize}

\subsection{Two-level orthogonality constraints}
\label{sec:two_level_ortho}

The framework developed in this work is governed by two distinct orthogonality
constraints that originate from different geometric sources but are connected
by the multi-resolvent correlation kernel:

\medskip
\noindent\textbf{Level 1 --- Basis orthogonality (resolvent level).}
From $\braket{\phi_\alpha|\phi_\beta}=0$ for $\alpha\neq\beta$ and the
completeness of the exact eigenstates,
\begin{equation}
   \int\! d\lambda\;\rho_{\alpha\beta}(\lambda)=0,
    \label{eq:ortho_level1_final}
\end{equation}
which, at the leading DCA level, reduces to the Hilbert-transform condition
$\int (H_\alpha f_\beta + f_\alpha H_\beta)=0$
(Eq.~\eqref{eq:Hf_ortho}).

\medskip
\noindent\textbf{Level 2 --- Projector idempotency (correlation level).}
From $\rho_n^2=\rho_n$ for the exact eigenstate projector,
\begin{equation}
    \sum_m \mathcal{C}_{nmn}^{jij}=0,
    \label{eq:ortho_level2_final}
\end{equation}
as established by Theorem~1 (Sec.~\ref{sec:insufficiency}).

\medskip
\noindent\textbf{Connection via the multi-resolvent kernel.}
The two levels are not linked by a simple causal chain but are connected
through the two-frequency correlation kernel
$\mathcal{K}_{\mu\nu}^{ji}=\mathcal{R}_{\mu j,\nu j}\,\mathcal{R}_{\nu i,\mu i}$
(Eq.~\eqref{eq:Kdef}):
\begin{equation}
    \begin{array}{c@{\quad}c}
        \text{basis orthogonality} & \text{projector idempotency} \\
        \Downarrow & \Downarrow \\[4pt]
        \displaystyle\int\rho_{\alpha\beta}=0 & 
        \displaystyle\sum_m\mathcal{C}_{nmn}^{jij}=0 \\[6pt]
        \multicolumn{2}{c}{\searrow \quad \nearrow} \\[2pt]
        \multicolumn{2}{c}{\Downarrow} \\[4pt]
        \multicolumn{2}{c}{\text{multi-resolvent hierarchy}} \\[4pt]
        \multicolumn{2}{c}{\Downarrow} \\[4pt]
        \multicolumn{2}{c}{\text{ETH correlation structure}}
    \end{array}
    \label{eq:logic_chain_final}
\end{equation}
The left branch encodes the constraint that each off-diagonal resolvent
carries zero integrated spectral weight---a single-frequency orthogonality
derived from the unperturbed basis.  The right branch encodes the constraint
that the integrated ETH correlation vanishes---a two-frequency sum rule
derived from the exact eigenstate projector.  The kernel
$\mathcal{K}_{\mu\nu}^{ji}$ embeds both constraints into the same
multi-resolvent description, in which they can be analysed in a unified
manner: it factorises the two-frequency ETH correlation into a product of
single-frequency off-diagonal resolvents, so that the analytic $H+if$
structure of the resolvent hierarchy provides a common language for both
the basis-level and projector-level constraints.  The DCA preserves this
connection at the leading level because the $H+if$ structure of diagonal
resolvent products automatically encodes the required Hilbert-transform
antisymmetry.

At the level of the two-frequency kernel, the same constraint takes an
explicit analytic form: since the correlation spectral density factorises,
$\rho_{\mu\nu}^{ji}(\lambda,\lambda') = \rho_{\mu j,\nu j}(\lambda)
\rho_{\nu i,\mu i}(\lambda')$, its total integrated weight is the product
of the two zeroth moments,
\begin{equation}
\iint d\lambda\,d\lambda'\;\rho_{\mu\nu}^{ji}(\lambda,\lambda')
= M_0^{\mu j,\nu j}\, M_0^{\nu i,\mu i} = 0,
\qquad (\mu\neq\nu),
\label{eq:kernel_weight}
\end{equation}
and---by the moment-preservation structure of
Appendix~\ref{app:moment_preservation}, under which each level
$\mathcal{R}_{\alpha\beta}^{(\ell)}$ with $\ell\ge2$ contributes no $1/z$
Laurent term---the DCA hierarchy preserves this vanishing
\emph{level by level}: $\iint d\lambda d\lambda'\rho^{(r)}_{\mu\nu}=0$
for every $r$.  The spectral-moment theorem of
Appendix~\ref{app:moment_preservation} is therefore not merely a
consistency check on the DCA: it is the analytic mechanism that
transmits basis orthogonality to the ETH correlation kernel along
the chain basis orthogonality $\to$ resolvent moment constraint
$\to$ correlation-kernel constraint $\to$ ETH variance,
the constraint counterpart of the projector sum
rule~\eqref{eq:sumrule_zero}.  We emphasize that the level-by-level
preservation concerns the \emph{doubly integrated} weight, while the
single-frequency marginal at fixed $\lambda$ is restored only by
cavity subtraction, as quantified by the Restoration
Identity~\eqref{eq:restoration_identity}.

This unified perspective---two independent geometric constraints embedded
in a single resolvent kernel---provides the organising principle of the
multi-resolvent framework.

\subsection{Summary of physical implications}
\label{sec:DCA_summary}

The DCA analysis consolidates the overlap-based decomposition of
Sec.~\ref{sec:approx} in three respects.  First, the negative
integrated correlation is a rigorous consequence of projector
idempotency (Theorem~1), and within the DCA representation cavity
subtraction supplies the correction required to restore the exact
projector constraint (Restoration
Identity~\eqref{eq:restoration_identity}).  Second, under the
standard approximation that spectral functions are symmetric about
their renormalised centres, the $r=2$ and $r=3$ sectors provide the
first explicitly derived even- and odd-parity contributions, so that
any measured odd-parity component in $\mathcal{C}_{nmn}^{jij}(\omega)$
directly probes $r\ge3$ interference.  Third, the hierarchy is
systematically improvable rather than controlled by a small
parameter: the organization by resolvent multiplicity guarantees
that the first three spectral moments of the off-diagonal resolvent
are preserved exactly, with deviations entering only through
three-vertex correlations (Appendix~\ref{app:moment_preservation}).
The algebraic proof of this moment-preservation property, together
with its implications for the leading DCA deviation, is presented
there.

\section{ETH Ansatz with Microscopic Foundation}
\label{sec:eth_ansatz}

\subsection{Microscopic interpretation of the standard ETH ansatz}
\label{sec:standard_ansatz}
The purpose of this section is not to derive the ETH ansatz from 
first principles, but to provide a microscopic foundation for the 
ETH smooth function $f_{ji}(E^+,\omega)$ through its correlation 
decomposition $f_{ji}^2 = D_{ji} + g_{ji}$.
The multi-resolvent hierarchy developed in 
Sec.~\ref{sec:expansion} expresses the correlation correction
$g_{ji} = \sum_{r\ge2} g_{ji}^{(r)}$ as a systematic expansion
organised by the multiplicity of interacting bath channels, 
yielding an improvable decomposition whose leading 
levels reproduce---and extend---the standard ETH phenomenology.

Before proceeding, we clarify the status of the entropy factor
$e^{-S(\bar E)/2}$ within the present framework.
In the system-bath construction of Sec.~\ref{sec:setup},
the microscopic building blocks are the overlap probabilities
$p_n^{\mu i} = |\braket{\psi_n|\phi_{\mu i}}|^2$, whose
expectation under ETH delocalisation satisfies
$\mathbb{E}(p_n^{\mu i}) = e^{-S(\lambda_n)} f^{\mu i}(\lambda_n)$
[Eq.~\eqref{eq:p_exp}].
Since every level of the hierarchy is built from normalized
overlap functions satisfying $\sum_n p_n^{\mu i} = 1$,
the hierarchy inherits the same entropy normalization
consistently across all orders: the entropy prefactor is fixed by
Hilbert-space normalization and is not an additional input, as
established exactly in Corollary~\ref{cor:fp_counting} of
Appendix~\ref{app:first_principles}.  The task of the present
framework is therefore sharply defined: to determine the remaining
dynamical content of the ETH
ansatz, the smooth function $f_{ji}(E,\omega)$, whose
systematic microscopic expression is given by
Eq.~\eqref{eq:f2_decomp} together with the multi-resolvent
series $g_{ji} = \sum_{r\ge2} g_{ji}^{(r)}$.

The eigenstate thermalization hypothesis for a system operator 
\(\mathcal{O}\) is conventionally formulated as~\cite{Deu91,Sre94,Rig08}
\begin{equation}
    \bra{\psi_n} \mathcal{O} \ket{\psi_m}
    = \mathcal{O}_{\text{micro}}(\bar{E}) \, \delta_{nm}
      + e^{-S(\bar{E})/2} \, f_{\mathcal{O}}(\bar{E},\omega) \,
        R_{nm},
    \label{eq:ETH_standard}
\end{equation}
where \(\bar{E} = (\lambda_n + \lambda_m)/2\), 
\(\omega = \lambda_n - \lambda_m\), \(S(\bar{E})\) is the 
microcanonical entropy, and \(R_{nm}\) is a pseudorandom variable 
with zero mean and unit variance. The smooth function 
\(f_{\mathcal{O}}(\bar{E},\omega)\) controls the energy dependence 
of the off-diagonal matrix elements and must be determined either by 
fitting to numerical data or by an independent microscopic calculation.

In the system-bath decomposition of Sec.~\ref{sec:bridge}, the ETH 
matrix elements acquire a richer structure due to the presence of 
multiple bath channels. The projection onto system basis states
\(\Pi_{ji}^S = \ket{\phi_j^S}\bra{\phi_i^S}\) yields the 
operator-valued ansatz
\begin{align}
    \sigma_{nm} 
    &\equiv \Tr_B\bigl(\ket{\psi_n}\bra{\psi_m}\bigr) \notag\\
    &= \rho^S(\lambda_n) \, \delta_{nm}
       + \sum_{ij} e^{-S(E^+)/2} \, 
          f_{ji}(E^+,\omega_{nm}) \, R^{ji}_{nm} \, \Pi_{ji}^S,
    \label{eq:ETH_sysbath}
\end{align}
where \(\rho^S(\lambda_n) = \sum_{\mu i} e^{-S(\lambda_n)} 
f^{\mu i}(\lambda_n) \Pi_{ii}^S\) is the reduced system density 
matrix expressed through the smooth overlap functions 
\(f^{\mu i}(\lambda)\). 
The pseudorandom variables are defined as
\begin{align}
    R^{ji}_{nm} 
    = f^{-1}_{ji}(E^+,\omega) \, e^{S(E^+)/2} \,
       \sum_{\mu} \bigl[ \Gamma_{nm,\mu}^{ji} 
          - \delta_{nm} \mathbb{E}(\Gamma_{nn,\mu}^{ji}) \bigr] \notag\\
    = f^{-1}_{ji} \sum_{\mu} 
       e^{-\frac{1}{2}[\Phi^{\mu i}(\lambda_m) 
          + \Phi^{\mu j}(\lambda_n) - S(E^+)]} \,
       \bigl[ \breve{R}^{\mu i}_m \breve{R}^{\mu j*}_n 
             - \delta_{nm}\delta_{ij} \bigr],
    \label{eq:Rdef}
\end{align}
where \(\breve{R}^{\mu i}_n\) are the elementary pseudorandom 
variables encoding eigenstate-to-unperturbed-state overlaps, 
\(\Phi^{\mu i}(\lambda) = -\ln f^{\mu i}(\lambda)\) is the 
logarithmic spectral function, and \(E^+ = (\lambda_n+\lambda_m)/2\).

The variance of the ETH random variables follows as
\begin{align}
    \mathbb{E}\bigl(|R^{ji}_{nm}|^2\bigr)
    &= f^{-2}_{ji} \sum_{\mu} 
       e^{-[\Phi^{\mu i}(\lambda_m) 
          + \Phi^{\mu j}(\lambda_n) - S(E^+)]} \notag\\
    &\quad + f^{-2}_{ji} \, 
       \mathbb{E}\bigl(\mathcal{C}_{nmn}^{jij}\bigr) \, 
       e^{S(E^+)},
    \label{eq:EVar_raw}
\end{align}
where the first term originates from the diagonal 
(\(\mu=\nu\)) bath sum and the second from the correlation term 
\(\mathcal{C}_{nmn}^{jij}\) defined in Eq.~\eqref{eq:Cdef}.
Equation~\eqref{eq:EVar_raw} makes explicit the central challenge of 
the ETH ansatz: to determine the smooth function \(f_{ji}(E^+,\omega)\) 
that normalises the variance to unity, one must know the correlation 
term \(\mathcal{C}\). Following the decomposition of 
Sec.~\ref{sec:approx}, the smooth function is written as
\begin{align}
    f_{ji}^2(E^+,\omega) 
    &= \int d\epsilon_\mu \, e^{-[\Phi^{\mu i}(\lambda_m) 
      + \Phi^{\mu j}(\lambda_n) - S(E^+) - S_B(\epsilon_\mu)]}
    \notag\\
      &+ g_{ji}(E^+,\omega),
    \label{eq:f2_decomp}
\end{align}
where \(g_{ji}(E^+,\omega) := \mathbb{E}(\mathcal{C}_{nmn}^{jij}) 
\, e^{S(E^+)}\) encodes the correlation contribution. Note that
\(g_{ji}\) is not sign-definite: the correlation term 
\(\mathcal{C}_{nmn}^{jij}\) involves sums over distinct bath channels
(\(\mu\neq\nu\)) whose interference contributions may carry either 
sign. Consequently, \(g_{ji}\) may enhance or suppress the ETH 
variance depending on the energy arguments.
 The approximation 
replacement~\eqref{eq:approx_replace} corresponds to neglecting
the reduced correlation ${\mathcal{C}'}$.  Through the hierarchy
$g_{ji} = \sum_{r\ge2} g_{ji}^{(r)}$,
this amounts to discarding the entire inter-channel ($\mu\neq\nu$)
sector while retaining the channel-diagonal structure,
including the cavity subtraction term.  The procedure therefore
captures the dominant intra-channel physics but discards all
multi-channel interference processes.

\subsection{Microscopic determination of \(f_{ji}\) and \(g_{ji}\) 
from the resolvent hierarchy}
\label{sec:microscopic_fg}

The multi-resolvent hierarchy of Sec.~\ref{sec:expansion} transforms 
Eq.~\eqref{eq:f2_decomp} from a phenomenological ansatz into a 
microscopically computable expression. The key step is the 
identification of \(g_{ji}\) with the binned multi-resolvent 
correlation density.

From Eq.~\eqref{eq:C_hierarchy}, the correlation term decomposes as
\begin{equation}
    \mathcal{C}_{nmn}^{jij} 
    = \mathcal{C}^{(2)}_{nmn} + \mathcal{C}^{(3)}_{nmn} 
      + \mathcal{C}^{(4)}_{nmn} + \cdots,
    \label{eq:C_decomp_micro}
\end{equation}
where each \(\mathcal{C}^{(r)}_{nmn}\) is obtained by binning 
\(\rho^{(r)}(\lambda,\lambda')\) over the energy windows 
\((\lambda_n, \lambda_m)\). Multiplying by \(e^{S(E^+)}\) gives the 
corresponding decomposition of \(g_{ji}\):
\begin{equation}
   g_{ji}(E^+,\omega) 
    = g^{(2)}_{ji}(E^+,\omega) 
      + g^{(3)}_{ji}(E^+,\omega) 
      + \cdots.
    \label{eq:g_decomp}
\end{equation}

\medskip
\noindent\textbf{Level \(r=2\) --- even-parity baseline.}
\begin{widetext}
    \begin{align}
    g^{(2)}_{ji}(E^+,\omega) 
    = e^{S(E^+)} \; 
       \sum_{\mu\neq\nu} 
       \int_{\lambda_n - \Delta/2}^{\lambda_n + \Delta/2} \!\!d\lambda
       \int_{\lambda_m - \Delta/2}^{\lambda_m + \Delta/2} \!\!d\lambda' \;
       \rho^{(2)}_{\mu\nu}(\lambda,\lambda') \notag\\
    = \sum_{\mu\neq\nu} V_{\mu j,\nu j}\,V_{\nu i,\mu i} \;
       \bigl[ f^{\mu j}H^{\nu j} + H^{\mu j}f^{\nu j} \bigr]_{E^++\frac{\omega}{2}}
       \bigl[ f^{\nu i}H^{\mu i} + H^{\nu i}f^{\mu i} \bigr]_{E^+-\frac{\omega}{2}},
    \label{eq:g2}
\end{align}
where the subscripts indicate evaluation at the specified energy 
arguments.  Each of the four expanded terms contains exactly two 
Hilbert-transform factors (one from each frequency slot), guaranteeing 
even parity under \(\omega \to -\omega\):
   $ g^{(2)}_{ji}(E^+,\omega) = g^{(2)}_{ji}(E^+,-\omega)$.
This term describes the lowest-order bath-non-diagonal interference 
and provides the leading even-parity contribution to the ETH 
correlation function.  A structural feature of the corrected kernel 
is the involvement of \textbf{four distinct spectral functions} 
\((f^{\mu j}, f^{\nu j}, f^{\nu i}, f^{\mu i})\) and the corresponding 
Hilbert transforms; in the special case where spectral functions are 
approximately system-index-independent, the expression reduces to 
two-channel spectral overlaps.  The interaction vertices 
\(V_{\mu j,\nu j}\) and \(V_{\nu i,\mu i}\) each couple bath states 
\emph{within} a single system sector, reflecting the fact that the 
two resolvent families \(\mathcal{R}_{\mu j,\nu j}\) and 
\(\mathcal{R}_{\nu i,\mu i}\) propagate independently in sectors 
\(j\) and \(i\).

\medskip
\noindent\textbf{Level \(r=3\) --- correlation skewness.}
\begin{align}
    g^{(3)}_{ji}(E^+,\omega) 
    = e^{S(E^+)} \;
       \sum_{\mu\neq\nu} 
       \int_{\lambda_n - \Delta/2}^{\lambda_n + \Delta/2} \!\!d\lambda
       \int_{\lambda_m - \Delta/2}^{\lambda_m + \Delta/2} \!\!d\lambda' \;
       \rho^{(3)}_{\mu\nu}(\lambda,\lambda') \notag\\
    \simeq    \sum_{\mu\neq\nu} 
       \Bigl[\sum_{\xi k \neq \nu i,\mu i}
         V_{\mu j,\nu j}\,V_{\nu i,\xi k}\,V_{\xi k,\mu i} \;
         \bigl(f^{\mu j}H^{\nu j} + H^{\mu j}f^{\nu j}\bigr)
         \bigl(f^{\nu i}{}'H^{\xi k}{}'H^{\mu i}{}' 
               + H^{\nu i}{}'f^{\xi k}{}'H^{\mu i}{}' 
               + H^{\nu i}{}'H^{\xi k}{}'f^{\mu i}{}' 
               - f^{\nu i}{}'f^{\xi k}{}'f^{\mu i}{}' \bigr)
         \notag\\
       +\sum_{\xi k \neq \nu j,\mu j}  V_{\mu j,\xi k}\,V_{\xi k,\nu j}\,V_{\nu i,\mu i} \;
         \bigl(f^{\mu j}H^{\xi k}H^{\nu j} 
               + H^{\mu j}f^{\xi k}H^{\nu j}
               + H^{\mu j}H^{\xi k}f^{\nu j}
               - f^{\mu j}f^{\xi k}f^{\nu j} \bigr)
         \bigl(f^{\nu i}{}'H^{\mu i}{}' + H^{\nu i}{}'f^{\mu i}{}' \bigr)
       \Bigr]_{E^+,\omega},
    \label{eq:g3}
\end{align}
\end{widetext}
where the notation \([\cdots]_{E^+,\omega}\) indicates evaluation at 
the arguments \(\lambda = E^+ + \omega/2\), 
\(\lambda' = E^+ - \omega/2\), with \(f \equiv f(\lambda)\), 
\(f' \equiv f(\lambda')\), and analogously for \(H, H'\).

The decisive new feature is the presence of terms with an 
\emph{odd total number} of Hilbert-transform factors.  The most 
transparent example is the term proportional to 
\(-V_{\mu j,\nu j}V_{\nu i,\xi k}V_{\xi k,\mu i}\;
H^{\mu j}f^{\nu j}\,f^{\nu i}{}'f^{\xi k}{}'f^{\mu i}{}'\), 
which contains exactly one Hilbert transform.  Under the standard 
approximation that each \(f^{\alpha}\) is approximately symmetric 
about its renormalised centre \(\tilde{a}_{\alpha} = a_{\alpha} + \Delta_{\alpha}\) 
(Eq.~\eqref{eq:renorm_center}), the transformation 
\(\omega \to -\omega\) (measured relative to \(\tilde{a}_{\alpha}\)) 
leaves each spectral function approximately 
invariant (\(f \approx f\)) while flipping the sign of each Hilbert 
transform (\(H[f] \approx -H[f]\)).  Consequently, 
terms with an odd number of Hilbert transforms acquire an overall 
sign reversal, and the \(r=3\) contribution is generically not even:
\begin{equation}
    \boxed{g^{(3)}_{ji}(E^+,\omega) 
    \neq g^{(3)}_{ji}(E^+,-\omega)} .
    \label{eq:g3_odd}
\end{equation}
The \(r=3\) contribution generically contains both even and odd 
components; it is the odd component that is qualitatively new and 
absent at \(r=2\).
The level-\(r=3\) contribution endows the ETH smooth function 
\(g_{ji}\) with an intrinsic odd-parity component, the leading
microscopic source of odd-parity corrections in ETH observables,
which is absent from any overlap-only closure and inaccessible
within the parity-preserving single-resolvent closures considered
here.

The two sub-contributions in Eq.~\eqref{eq:g3} correspond to the 
\([2,3]\) and \([3,2]\) pairings of the DCA kernel.  They carry 
distinct three-V products and place the intermediate resolvent 
\(\mathcal{R}_{\xi k}\) in different frequency slots.  For 
time-reversal-symmetric interactions with real \(V\), the two 
three-V products are related by relabelling of the dummy index 
\(\xi k\) but are not identically equal term-by-term; their sum 
generically possesses a nonvanishing odd component.

\noindent\textbf{Higher levels (\(r \ge 4\)).} 
The sectors \(r=2\) and \(r=3\) provide the first rigorously
established even- and odd-parity contributions, respectively.
Higher levels generically contain both even and odd components;
a complete parity classification of all higher levels lies beyond
the scope of this work.  The full \(g_{ji}(E^+,\omega)\) is the sum
of all levels and generically contains both even and odd components.

\subsection{Explicit microscopic form of \(f_{ji}\)}
\label{sec:microscopic_f}

With \(g_{ji}\) determined by the resolvent hierarchy, the ETH 
smooth function \(f_{ji}^2\) follows from Eq.~\eqref{eq:f2_decomp}:
\begin{align}
    f_{ji}^2(E^+,\omega) 
    &= \int d\epsilon_\mu \, 
      e^{-[\Phi^{\mu i}(E^+ + \frac{\omega}{2}) 
          + \Phi^{\mu j}(E^+ - \frac{\omega}{2}) 
          - S(E^+) - S_B(\epsilon_\mu)]}\notag\\
      &+ \sum_{r=2}^{\infty} g^{(r)}_{ji}(E^+,\omega) .
    \label{eq:f2_micro}
\end{align}
The first term is expressible entirely through the diagonal overlap 
functions \(f^{\mu i}(\lambda)\), which are themselves determined by 
the self-consistent resolvent equations of Ref.~\cite{HC26} (SCBA 
at leading order, with multi-resolvent and Lanczos continued-fraction 
corrections). The second term---the multi-resolvent series---provides 
the systematic microscopic foundation for what was previously a 
phenomenological fitting function. Once $f_{ji}$ is determined microscopically through
Eq.~\eqref{eq:f2_micro}, the remaining entropy factor
$e^{-S(\bar E)/2}$ simply provides the universal
normalisation associated with the exponential growth of
the many-body Hilbert space, while all deterministic
dynamical information resides in the microscopically
determined smooth function.

Equation~\eqref{eq:f2_micro} makes the following structural properties 
manifest:

\begin{enumerate}
    \item \textbf{Separability of scales.} The diagonal overlap 
    contribution (first term) is determined by the single-resolvent 
    self-consistency and encodes the gross spectral envelope. The 
    multi-resolvent series (second term) encodes the fine structure 
    arising from coherent multi-channel interference.
    
        \item \textbf{Parity decomposition.} The leading even contribution 
    arises from the diagonal term and from \(g^{(2)}_{ji}\). 
    The leading odd contribution arises from \(g^{(3)}_{ji}\). 
    The sectors \(r=2\) and \(r=3\) provide the first
    explicitly derived even- and odd-parity contributions,
    respectively, within the spectral symmetry approximation of
    Sec.~\ref{sec:ell3}.  Higher levels may contain both even and odd
    components; a complete parity classification lies beyond
    the scope of this work.
    
    \item \textbf{Truncation control.} Neglecting \(r \ge 3\) 
    reduces \(f_{ji}^2\) to its even-parity component---this is 
    precisely the approximation made by ignoring the reduced
    correlation \({\mathcal{C}'}\) in
    Eq.~\eqref{eq:approx_replace}. The present framework elevates this 
    uncontrolled neglect to a systematic truncation controlled by
    the entropy scaling \(g^{(r)} \sim e^{-(r-2)S/2}\)
    (Eq.~\eqref{eq:gk_scaling}); the quantitative error depends
    on the microscopic model and can be estimated once the spectral
    functions \(f^{\alpha}\) and couplings are specified.

\end{enumerate}

\subsection{Relation to the Foini--Kurchan higher-order cumulant
framework}
\label{sec:FK_connection}
The multi-resolvent hierarchy exhibits structural and
entropy-scaling properties that closely parallel those of
the higher-order ETH cumulant expansion developed by
Foini and Kurchan~\cite{FK19}. In that framework, the connected
\(k\)-point correlation function of ETH random variables scales as
\begin{equation}
    \mathbb{E}\bigl(R^{i_1 i_2}_{n_1 n_2}
                 R^{i_2 i_3}_{n_2 n_3} \cdots
                 R^{i_k i_1}_{n_k n_1}\bigr)_{\text{connected}}
    \propto e^{-(k-2)S(\bar{E})/2},
    \label{eq:FK_scaling}
\end{equation}
where \(\bar{E} = \frac{1}{k}\sum_{\alpha=1}^k \lambda_{n_\alpha}\).
Within the present framework, the level-\(r\) sector naturally
generates correlation structures involving \(r+2\) resolvent factors 
distributed across \(r\) interaction vertices. From
Eq.~\eqref{eq:Rell}, each additional interaction vertex introduces one
additional bath summation \(\sum_{\alpha} \sim e^{S_B}\) and one
additional interaction matrix element \(|V|^2 \sim e^{-S}\), giving
the net scaling
\begin{equation}
    g^{(r)}_{ji} \sim e^{S} \cdot
    \bigl(e^{S_B} \cdot e^{-S}\bigr)^{r} \cdot
    \overline{f^{2r+3}}
    \sim e^{-(r-1)S/2},
    \label{eq:gk_scaling}
\end{equation}
where \(S = S_B + \text{const}\), whose entropy
suppression aligns with the FK scaling for \(k=r+2\)
[Eq.~\eqref{eq:FK_scaling}].  The two frameworks are structurally
parallel but not identical: the quantities constructed here are
multi-resolvent correlation functions rather than connected
cumulants, and a precise correspondence would require an explicit
connected-subtraction procedure, for example through a
multi-frequency resolvent generating functional \(W[J]=\log Z[J]\).
This remains a direction for future work.

\subsection{Fluctuation sector and the diagonal--off-diagonal
trade-off}
\label{sec:fluct_sector}

The hierarchy of Secs.~\ref{sec:expansion}--\ref{sec:DCA} determines
the \emph{mean} of the two-frequency kernel and thereby the smooth
envelope $f_{ji}$.  The fluctuation structure around that mean is
organised by the \emph{resolvent covariance}, the second sector of
the same hierarchy:
\begin{equation}
\mathcal C_{ab}(z,z')
= \mathbb{E}\bigl[\delta\mathcal{R}_a(z)\,\delta\mathcal{R}_b(z')\bigr],
\qquad
\delta\mathcal{R}_a = \mathcal{R}_a - \bar{\mathcal{R}}_a,
\label{eq:fluct_Cdef}
\end{equation}
whose spectral representation is the overlap covariance,
\begin{equation}
\mathcal C_{ab}(z,z')
= \sum_{n,m}
\frac{\mathbb{E}\bigl[\delta p_n^a\,\delta p_m^b\bigr]}
{(z-\lambda_n)(z'-\lambda_m)}.
\label{eq:fluct_Cspec}
\end{equation}
The exact trade-off identity of
Lemma~\ref{lem:fp_tradeoff} then fixes the diagonal--off-diagonal
fluctuation ratio with no statistical input, with
$V_{\rm dg}=\mathrm{Var}(D_n^i)$ the population variance and
$V_{\rm off}=(D-1)^{-1}\sum_{m\neq n}\mathbb{E}|\sigma_{nm}^{ii}|^2$:
\begin{equation}
f := \frac{V_{\rm dg}}{V_{\rm off}}
= \frac{(D-1)\,V_{\rm dg}}
{e^{-S_S}(1-e^{-S_S}) - V_{\rm dg}},
\label{eq:fluct_f}
\end{equation}
and the fluctuation sector supplies the microscopic content of
$V_{\rm dg}$:
\begin{equation}
V_{\rm dg}
= (q-1)\,\langle f^2\rangle_B\,e^{-S-S_S}
+ \bar{C}_i\,e^{-2S_S} + \delta_{\rm ng},
\label{eq:fluct_Vdg}
\end{equation}
where $q=\mathbb{E}[(p_n^a)^2]/\mathbb{E}(p_n^a)^2$ is the
Porter--Thomas factor---the normalized second moment of the
single-channel overlap, with $q-1$ its normalized connected
variance---and $\bar{C}_i$ the two-channel amplitude
covariance.  As shown in
Appendix~\ref{app:first_principles}, both are spectral projections
of the resolvent covariance~\eqref{eq:fluct_Cspec}: $q-1$ is
extracted from the diagonal ($n=m$) bin of the same-channel
covariance $\mathcal C_{aa}$, and $\bar{C}_i$ from the diagonal bin
of the two-channel covariance $\mathcal C_{ab}$ ($a\neq b$); the
off-diagonal ($n\neq m$) bins are fixed by covariance conservation
and yield the correlation entering $g_{ji}$, so that $q-1$ and
$g_{ji}$, while distinct, are complementary projections of one
conserved covariance kernel.

Thus the same diagonal-resolvent hierarchy has two complementary
sectors: its mean generates the smooth ETH envelope $f_{ji}$, while
its covariance generates the residue fluctuations that control
Gaussianity, self-averaging, and the diagonal--off-diagonal
fluctuation ratio.  The smooth ETH function and its fluctuation data
are therefore not independent inputs: they are distinct spectral
projections of the one- and two-resolvent sectors of the same
microscopic overlap theory.  The detailed derivation, the Gaussian closure,
the resulting fluctuation regime of the ETH ansatz, and the
numerical verification are given in
Appendix~\ref{app:first_principles}.

\subsection{Reinterpretation of the approximation replacement}
\label{sec:approx_reinterpret}

The exact decomposition~\eqref{eq:exact_decomp} together with the
multi-resolvent hierarchy of Sec.~\ref{sec:expansion} recasts the
phenomenological approximation replacement into a systematic,
improvable expansion.  Combining the exact identity
$f_{ji}^2 = D_{ji} + g_{ji}$ with the level decomposition
$g_{ji} = \sum_{r\ge2} g_{ji}^{(r)}$ yields
\begin{equation}
    f_{ji}^2 = D_{ji} + g_{ji}^{(2)} + g_{ji}^{(3)} + g_{ji}^{(4)} + \cdots,
    \label{eq:f2_expanded}
\end{equation}
where $D_{ji}$ is the diagonal (SCBA) baseline and each
$g_{ji}^{(r)}$ originates from the $r$-th sector of the
multi-resolvent hierarchy.

Two conceptually different zeroth-order objects appear in this
formalism.  The hierarchy baseline $D_{ji}$
(Eq.~\eqref{eq:Dji_def}) is the diagonal overlap product, the strict
$g_{ji}=0$ reference of Eq.~\eqref{eq:f2_expanded}.  The
approximation-replacement baseline
(Eq.~\eqref{eq:approx_replace}), obtained by neglecting
$\mathcal{C}'$ in the exact identity~\eqref{eq:exact_decomp},
retains the full channel-diagonal ($\mu=\nu$) structure including
the cavity subtraction term $-\sigma_{nn}^{ji}\Gamma_{nn,\mu}^{ij}$,
and therefore goes beyond $D_{ji}$ alone.  The two baselines differ
precisely by this cavity subtraction term, which is why the
approximation replacement cannot be obtained by truncating the
hierarchy at any finite order.  They organise complementary
physical sectors of the exact expression: the approximation
replacement operates within the channel-diagonal sector
(intra-channel dressing), while the multi-resolvent hierarchy is
built from the two-frequency kernel
$\mathcal{K}_{\mu\nu}^{ji} = \mathcal{R}_{\mu j,\nu j}\mathcal{R}_{\nu i,\mu i}$
and describes interference among distinct bath channels
(inter-channel physics).

Within the inter-channel sector, the approximations of
Sec.~\ref{sec:approx} form a nested expansion: the
reduced-correlation approximation retains the leading
$g_{ji}^{(2)}$ (strictly even parity,
Sec.~\ref{sec:ell2}); the skewness-corrected approximation further
includes $g_{ji}^{(3)}$, which introduces the leading odd-parity
component (Sec.~\ref{sec:ell3}); and the full hierarchy retains
$g_{ji} = \sum_{r\ge2} g_{ji}^{(r)}$.  Defining the correlation
fraction $r_{ji} := g_{ji}/D_{ji}$ and its level-resolved
components $r_{ji}^{(r)} := g_{ji}^{(r)}/D_{ji}$, the inter-channel
expansion takes the compact form
\begin{equation}
   f_{ji}^2 = D_{ji}\bigl(1 + r_{ji}^{(2)} + r_{ji}^{(3)} + \cdots\bigr),
    \label{eq:f2_r}
\end{equation}
which makes explicit that the standard ETH Gaussian limit
($r_{ji}=0$) and the full hierarchy are unified
within the same organising principle.  The inter-channel
approximations thus form a nested sequence:
\begin{align}
    \underbrace{D_{ji}}_{\text{Diagonal baseline}}
    \;\subset\;
    \underbrace{D_{ji} + g^{(2)}}_{\text{Reduced-Corr.}}
    \;\subset\;
    \underbrace{D_{ji} + g^{(2)} + g^{(3)}}_{\text{+ Skewness}}\notag\\
   \subset\;
    \cdots
    \;\subset\;
    \underbrace{D_{ji} + \sum_{r=2}^{\infty} g^{(r)}}_{\text{HOETH}}.
    \label{eq:nested_hierarchy}
\end{align}
The hierarchy is organized by the entropy-based scaling estimate
$g^{(r)} \sim e^{-(r-1)S/2}$
(Eq.~\eqref{eq:gk_scaling}), which provides a thermodynamic
organising principle for the expansion; the quantitative
convergence rate depends on the microscopic model and lies
beyond the scope of the present work.

The essential advance is that the uncontrolled neglect of
${\mathcal{C}'}$ in Eq.~\eqref{eq:approx_replace} is replaced
by a systematic, improvable expansion organised by the multiplicity
of interacting bath channels, in which each successively retained
level adds a specific, computable class of interference processes
with a definite parity signature.

\subsection{Summary}
\label{sec:ansatz_summary}

Table~\ref{tab:eth_ansatz_mapping} summarises the mapping between the 
phenomenological ETH ansatz parameters and their microscopic 
foundations in the multi-resolvent hierarchy.

\begin{table*}[t]
\centering
\caption{Microscopic foundations of the ETH ansatz parameters.}
\label{tab:eth_ansatz_mapping}
\begin{tabular}{c c c c}
\toprule
ETH parameter & Microscopic origin & Leading level & Parity \\
\midrule
\(\rho^S(\lambda)\) 
& \(\sum_{\mu i} e^{-S} f^{\mu i} \Pi_{ii}^S\) 
& SCBA + LCF 
& --- \\[3pt]
\(f_{ji}^2\) (even part) 
& \(\int e^{-[\Phi^{\mu i}+\Phi^{\mu j}-S-S_B]} 
    + g^{(2)}_{ji}\) (leading) 
& \(r=2\) 
& even \\[3pt]
\(f_{ji}^2\) (odd part) 
& Leading: \(g^{(3)}_{ji}\) 
& \(r=3\) 
& \textbf{odd} \\[3pt]
\(g_{ji}(E^+,\omega)\) 
& \(\sum_{r=2}^{\infty} g^{(r)}_{ji}\) 
& \(r=2\) 
& mixed \\[3pt]
\(\mathbb{E}(R^{ji}_{nm} R^{ii}_{mn})\) 
& \(g^{(2),\text{cross}}_{jiii} + g^{(3),\text{cross}}_{jiii} 
   + \cdots\) 
& \(r=3\) 
& mixed \\[3pt]
\(k\)-th cumulant 
& \(\mathfrak{V}^{(k+2)} \otimes f^{\otimes (2k+3)}\) 
& \(r=k\) 
& mixed \\
\bottomrule
\end{tabular}
\end{table*}

The essential advance is that \emph{no free fitting functions
remain in the mean sector}: 
\(f_{ji}\) and \(g_{ji}\) are expressed through the diagonal spectral 
functions \(f^{\mu i}(\lambda)\), which are themselves determined by 
the self-consistent resolvent equations of Ref.~\cite{HC26}, and 
through the multi-index interaction couplings 
\(\mathfrak{V}^{(r)}\), which are fixed by the microscopic 
Hamiltonian. The hierarchy is closed and systematically improvable
at the mean level,
and makes quantitative predictions for the parity-violating 
signatures of multi-resolvent interference in ETH observables.
The connected two-resolvent sector further provides an exact
spectral-projection dictionary for the fluctuation parameters $q-1$
and $\bar{C}_i$ (Appendix~\ref{app:first_principles}); a predictive
closure of the irreducible fluctuation vertex $\Gamma^{(2)}$ remains
the main open step, as stated there.

\section{Connections}
\label{sec:connections}

The multi-resolvent framework developed in Secs.~\ref{sec:expansion}--%
\ref{sec:eth_ansatz} provides a unified microscopic language for several 
active frontiers of quantum many-body physics. This section outlines the 
principal connections, emphasising the concrete structural links rather 
than exhaustive reviews.

\subsection{Out-of-time-order correlators}
\label{sec:OTOC}

Out-of-time-order correlators (OTOCs) diagnose the scrambling of quantum 
information and the onset of chaos in many-body systems~\cite{MS19}. For 
two local operators \(W\) and \(V\), the standard OTOC is
\begin{equation}
    F(t) = \langle W^\dag(t) \, V^\dag \, W(t) \, V  \rangle_\beta,
    \label{eq:OTOC_def}
\end{equation}
where \(W(t) = e^{iHt} W e^{-iHt}\) and the expectation is taken in a 
thermal state \(\rho_\beta = e^{-\beta H}/Z\).

Expanding in the eigenbasis of \(H\) yields the spectral representation
\begin{align}
    F(t) &= \frac{1}{Z} \sum_{n,m,k,l} 
           e^{-\beta \lambda_n} \,
           e^{i(\lambda_n - \lambda_m + \lambda_k - \lambda_l)t} \,
           W_{nm} V_{mk} W_{kl} V_{ln}.
    \label{eq:OTOC_spectral}
\end{align}
The late-time behaviour is dominated by the diagonal (\(n=m=k=l\)) 
and paired (\(n=l, m=k\)) contributions, while the scrambling rate is 
controlled by the connected four-point function
\begin{equation}
    \mathcal{F}_4(n,m,k,l) 
    = \mathbb{E}\bigl(W_{nm} V_{mk} W_{kl} V_{ln}\bigr)_{\text{connected}}.
    \label{eq:F4}
\end{equation}

In the system-bath setting of this work, the operators are system-local: 
\(W = w \otimes I_B\), \(V = v \otimes I_B\). Their matrix elements 
project onto the ETH random variables \(R^{ji}_{nm}\) via 
Eq.~\eqref{eq:ETH_sysbath}. The connected four-point function then 
factorises into a product of four \(R\) variables, whose expectation 
is precisely the fourth cumulant of the ETH distribution:
\begin{equation}
    \mathcal{F}_4 \;\sim\; e^{-2S(\bar{E})} \,
    \mathbb{E}\bigl(R^{i_1 i_2}_{nm} R^{i_2 i_3}_{mk} 
                 R^{i_3 i_4}_{kl} R^{i_4 i_1}_{ln}\bigr)_{\text{connected}},
    \label{eq:F4_ETH}
\end{equation}
where the \(i_\alpha\) indices label the system basis states selected by 
the operator matrix elements \(w_{i_1 i_2}, v_{i_2 i_3}, w_{i_3 i_4}, 
v_{i_4 i_1}\).

The OTOC four-point function involves \textbf{four} distinct system 
indices and therefore requires a generalisation of the two-index 
correlation kernel analysed in this work.  The natural object is the 
four-frequency resolvent kernel
\begin{align}
    \mathcal{K}^{(4)}_{\{\mu\},\{i\}}(z_1,z_2,z_3,z_4)
    = \mathcal{R}_{\mu_1 i_1,\mu_2 i_2}(z_1)
      \notag\\
      \times\mathcal{R}_{\mu_2 i_2,\mu_3 i_3}(z_2)\mathcal{R}_{\mu_3 i_3,\mu_4 i_4}(z_3)\;
      \mathcal{R}_{\mu_4 i_4,\mu_1 i_1}(z_4),
    \label{eq:K4_OTOC}
\end{align}
where each off-diagonal resolvent \(\mathcal{R}_{\mu_\alpha i_\alpha,\mu_\beta i_\beta}\) 
connects \emph{both} different bath indices \emph{and} (potentially) 
different system indices (when \(i_\alpha \neq i_\beta\)).  This is a 
different family from the same-system-index resolvents 
\(\mathcal{R}_{\mu j,\nu j}\) that dominate the two-index ETH 
correlation; the projection expansion of 
Eqs.~\eqref{eq:R_hierarchy}--\eqref{eq:Rell} applies to both families 
with the appropriate index labelling.

At the minimal non-diagonal level (\(r=4\), i.e.\ four interaction 
vertices), each off-diagonal resolvent contributes at level~2, giving 
the four-channel interference product
\begin{equation}
    \mathfrak{V}^{(4)}_{i_1 i_2, i_3 i_4;\{\mu\}}
    = V_{\mu_1 i_1,\mu_2 i_2} \,
      V_{\mu_2 i_2,\mu_3 i_3} \,
      V_{\mu_3 i_3,\mu_4 i_4} \,
      V_{\mu_4 i_4,\mu_1 i_1},
    \label{eq:V4_OTOC}
\end{equation}
convolved with five spectral functions \(f^{\mu_\alpha i_\alpha}\) 
and their Hilbert transforms. The Foini--Kurchan scaling 
\(e^{-S}\)~\cite{FK19} for the connected four-point function follows 
directly from the four intermediate bath summations in \(\rho^{(4)}\), 
each contributing \(e^{S_B}\) while being suppressed by 
\(|V|^4 \sim e^{-2S}\).

Conversely, the odd-parity signatures identified in Sec.~\ref{sec:ell3} 
imply that the OTOC spectral function is generically \emph{asymmetric} 
under time reversal \(t \to -t\) when the operators carry different 
system indices. This asymmetry is a specific, testable prediction of 
the multi-resolvent framework that is invisible to any single-resolvent 
(Efros--Shklovskii-type) treatment of the ETH.

\subsection{Krylov complexity}
\label{sec:krylov}

Krylov (or spread) complexity quantifies the growth of a quantum state 
in the Krylov basis generated by the Lanczos algorithm~\cite{BRSS19}. 
For an initial state \(\ket{\Psi_0} = \ket{\phi_{\mu i}}\) and 
Hamiltonian \(H\), the Lanczos algorithm produces an orthonormal basis 
\(\{\ket{K_n}\}\) and coefficients \(\{a_n, b_n\}\) such that
\begin{equation}
    H \ket{K_n} = b_n \ket{K_{n-1}} + a_n \ket{K_n} 
                 + b_{n+1} \ket{K_{n+1}},
    \label{eq:Lanczos_BRSS}
\end{equation}
with \(b_0 = 0\). The Krylov complexity is then
\begin{equation}
    C_K(t) = \sum_{n=0}^{\infty} n \, |\braket{K_n|\Psi(t)}|^2,
    \label{eq:Kcomplexity}
\end{equation}
whose growth rate is governed by the Lanczos coefficients \(\{b_n\}\).

The resolvent framework of Ref.~\cite{HC26} establishes a direct, 
constructive link between the Lanczos coefficients and the self-energy 
hierarchy. Specifically, the diagonal resolvent admits an exact 
continued-fraction representation
\begin{equation}
    \mathcal{R}_{\mu i}(z) 
    = \frac{1}{z - a_0 
              - \dfrac{b_1^2}{z - a_1 
                - \dfrac{b_2^2}{z - a_2 - \cdots}}},
    \label{eq:cf_R}
\end{equation}
which is fully equivalent to the projected resolvent equation. The 
self-energy decomposes as
\begin{equation}
    \mathcal{G}_{\mu i}(z) 
    = a_0 + \frac{b_1^2}{z - a_1 
             - \dfrac{b_2^2}{z - a_2 - \cdots}}.
    \label{eq:cf_G}
\end{equation}

The Lanczos coefficients can therefore be extracted from the 
multi-resolvent self-energy hierarchy: the mean-field (SCBA) 
contribution determines the asymptotic slope \(b_n \sim \alpha n\) 
for large \(n\) (the linear ramp characteristic of chaotic systems), 
while the multi-resolvent corrections \(\mathcal{G}^{(3)}, 
\mathcal{G}^{(4)}, \ldots\) encode the deviations from strict 
linearity that distinguish different universality classes of 
operator growth.

Moreover, the parity mixing induced by \(\mathcal{G}^{(3)}\) 
(Sec.~\ref{sec:ell3}) implies that the Lanczos coefficients acquire 
a component that is odd under reversal of the Krylov index 
\(n \to -n\) (suitably defined via the tridiagonalisation of the 
adjacency matrix in the resolvent hierarchy). This parity structure 
is the Krylov-space analogue of spectral skewness and provides a 
further diagnostic of multi-resolvent interference.

Thus, the present framework offers a concrete programme: given the 
microscopic Hamiltonian \(H = H_0 + V\), compute the diagonal 
resolvent \(\mathcal{R}_{\mu i}\) via the self-consistent hierarchy, 
extract the Lanczos coefficients \(\{a_n, b_n\}\) from the 
continued-fraction expansion of the self-energy, and obtain the 
Krylov complexity \(C_K(t)\) via Eq.~\eqref{eq:Kcomplexity}. The 
approximation is systematically improvable by retaining higher 
levels of the multi-resolvent hierarchy.

\subsection{Fluctuation theorems and non-equilibrium statistics}
\label{sec:fluctuation}

The ETH correlation hierarchy developed here directly informs the 
statistics of work, heat, and entropy production in non-equilibrium 
processes. Consider a protocol in which the system Hamiltonian is 
varied from \(H_S(0)\) to \(H_S(\tau)\) while the interaction \(V\) 
remains fixed. The work performed is a stochastic variable whose 
distribution satisfies the Jarzynski and Crooks fluctuation 
theorems~\cite{H24}.

The work distribution is determined by transition probabilities 
between eigenstates of the initial and final Hamiltonians, which in 
the system-bath setting are controlled by the ETH matrix elements 
\(\sigma_{nm}^{ji}\). The characteristic function of work,
\begin{equation}
    G(u) = \langle e^{iu W} \rangle 
         = \sum_{n,m} p_n^0 \, |\braket{\psi_m^\tau|\psi_n^0}|^2 \,
           e^{iu (\lambda_m^\tau - \lambda_n^0)},
    \label{eq:work_char}
\end{equation}
involves bath-traced overlaps between the initial and final 
eigenstates. These overlaps admit the same multi-resolvent 
decomposition as the ETH correlations analysed in this work, 
with the interaction \(V\) providing the dominant mixing channel.

The odd-parity component of the correlation density 
\(\rho^{(3)}(\lambda,\lambda')\) implies a corresponding asymmetry 
in the work distribution: for time-reversal-symmetric protocols, 
\(\rho^{(3)}\) breaks the symmetry 
\(P(W) = P(-W) e^{-\beta W}\) at the level of the coarse-grained 
envelope, with the deviation controlled by the three-channel 
interference product \(\mathfrak{V}^{(3)}\). This provides a 
microscopic mechanism for the \emph{non-Gaussian} wings of work 
distributions observed numerically in small chaotic systems.

\subsection{Open quantum system dynamics}
\label{sec:open_systems}

The multi-resolvent framework naturally extends to the dynamics of 
open quantum systems~\cite{OSMGM24}. In the Nakajima--Zwanzig 
projection-operator formalism, the reduced system dynamics is 
governed by a memory kernel \(\mathcal{K}(t)\) whose Laplace 
transform admits a resolvent representation:
\begin{equation}
    \tilde{\mathcal{K}}(z) 
    = \Tr_B\bigl[ V \, \frac{1}{z - \mathcal{Q}H\mathcal{Q}} \, V 
                \, \rho_B \bigr],
    \label{eq:memory_kernel}
\end{equation}
where \(\mathcal{Q}\) projects onto the bath-orthogonal subspace.

The projected resolvent \(\frac{1}{z - \mathcal{Q}H\mathcal{Q}}\) 
is precisely the object expanded in the multi-resolvent hierarchy 
of Sec.~\ref{sec:proj_exp}. The memory kernel therefore inherits 
the same hierarchical structure: the SCBA level yields the standard 
Born--Markov (Lindblad) master equation; the Lanczos 
continued-fraction corrections capture non-Markovian broadening; 
and the multi-resolvent correlations \(\rho^{(3)}, \rho^{(4)}, 
\ldots\) encode bath-induced correlations between system 
transitions that generate non-additive (collective) Lamb shifts 
and non-secular decay rates.

In particular, the odd-parity component of \(\rho^{(3)}\) implies 
a breaking of detailed balance for system transition rates at the 
multi-resolvent level, corresponding to a circulation of probability 
in the reduced state space that is absent from any single-resolvent 
(SCBA or Lindblad) description. This circulation is the open-system 
manifestation of the spectral skewness identified in Ref.~\cite{HC26} 
and constitutes a quantitative signature of coherent multi-channel 
interference in the system--bath interaction.

\subsection{Towards a unified cumulant expansion}
\label{sec:unified}

The connections outlined above converge on a single structural 
principle: \textbf{every higher-order statistical quantity in 
ETH-governed systems admits a multi-resolvent expansion.}
The precise form depends on the index structure of the quantity in 
question:
\begin{itemize}
    \item \textbf{Two-index correlations} (this work, 
    \(\mathcal{C}_{nmn}^{jij}\)): organised by the kernel 
    \(\mathcal{R}_{\mu j,\nu j}\,\mathcal{R}_{\nu i,\mu i}\), with 
    each off-diagonal resolvent connecting different bath channels 
    at the \emph{same} system index.
    
    \item \textbf{Multi-index cycles} (OTOC, FK cumulants): 
    organised by kernels of the form 
    \(\prod_{\alpha} \mathcal{R}_{\mu_\alpha i_\alpha,\mu_{\alpha+1} i_{\alpha+1}}\), 
    where each off-diagonal resolvent may connect different system 
    indices as dictated by the operator cycle.
\end{itemize}
Both families share the same projection expansion 
(Eqs.~\eqref{eq:R_hierarchy}--\eqref{eq:Rell}) and the same DCA 
closure~\eqref{eq:DCA}.  The general \(k\)-point connected cumulant 
is therefore organised as
\begin{widetext}
    \begin{equation}
 \langle\!\langle \mathcal{O}_1 \cdots \mathcal{O}_k 
    \rangle\!\rangle_{\text{connected}}
    = \sum_{r=k}^{\infty} \;
      \mathfrak{V}^{(r)}_{\{\alpha\},\{i\}} \;
      \int d\lambda_1 \cdots d\lambda_k \;
      \prod_{a=1}^{r+2} 
      \bigl[H[f^{\alpha_a}](\lambda_{p(a)}) 
       + i f^{\alpha_a}(\lambda_{p(a)})\bigr],
    \label{eq:unified}
\end{equation}
\end{widetext}
where \(\langle\!\langle\cdots\rangle\!\rangle\) denotes the 
connected cumulant, \(\{\alpha\}\) are bath-channel indices, 
\(\{i\}\) are system indices dictated by the operator cycle, 
\(p(a)\) maps each of the \(r+2\) resolvent factors to the 
appropriate one of the \(k\) frequency arguments, and 
\(\mathfrak{V}^{(r)}\) is the product of \(r\) interaction matrix 
elements with index contractions determined by the resolvent 
pairing.  The total number of resolvent factors is \(r+2\), 
reflecting the fact that each additional V-count \(r\) introduces 
one additional diagonal resolvent beyond the minimal \(k\) needed 
to close the operator cycle.

Equation~\eqref{eq:unified} unifies:

\begin{itemize}
    \item \textbf{OTOC} (\(k=4\)): \(r \ge 4\), with the 
    OTOC rate determined by the four-frequency kernel 
    \eqref{eq:K4_OTOC} and higher.
    
    \item \textbf{Krylov complexity} (\(k=2\) via the resolvent): 
    the Lanczos coefficients \(\{a_n, b_n\}\) are extracted from 
    the continued-fraction expansion of \(\mathcal{G}^{(r)}\) 
    for all \(r\).
    
    \item \textbf{Fluctuation theorems} (\(k\) arbitrary): the 
    work characteristic function involves overlaps between 
    eigenbases of distinct Hamiltonians, organised by the same 
    \(\mathfrak{V}^{(r)}\) structure.
    
    \item \textbf{Open-system memory kernels} (\(k=2\) via the 
    system time-evolution): the Nakajima--Zwanzig kernel expands 
    in the same projected resolvents that define the hierarchy.
    
    \item \textbf{ETH higher cumulants} (\(k\) arbitrary): the 
    Foini--Kurchan scaling~\cite{FK19} shares the same 
    entropy-suppression pattern as \(\mathfrak{V}^{(r)}\), 
    with the V-count \(r\) related to the FK cumulant order 
    via \(k_{\text{FK}} = r+2\).
\end{itemize}

The essential insight is that the multi-resolvent hierarchy is not 
merely a tool for computing spectral functions---it is the 
\textbf{natural organising principle} for the statistical mechanics 
of nonintegrable quantum systems. Any quantity expressible through 
resolvents inherits the same level-by-level structure, with each 
level contributing a definite parity, a definite \(S\)-scaling, 
and a definite class of interference processes.

\section{Summary and Outlook}
\label{sec:conclusion}
We have developed a nonperturbative multi-resolvent hierarchy that
provides a microscopic, systematically improvable theory of the ETH
smooth function $f_{ji}(E^+,\omega)$, from which higher-order
correlation corrections---including parity mixing and skewness---emerge
naturally.
The central organising principle is the decomposition of the
off-diagonal ETH variance into a diagonal overlap baseline and a
correlation term that couples distinct bath channels:
$|\sigma_{nm}^{ji}|^2=\sum_\mu p_m^{\mu i}p_n^{\mu j}+\mathcal{C}_{nmn}^{jij}$.
The correlation term admits an exact two-frequency resolvent
representation through the kernel
$\mathcal{K}_{\mu\nu}^{ji}=\mathcal{R}_{\mu j,\nu j}\,\mathcal{R}_{\nu i,\mu i}$,
whose expansion in a hierarchy of diagonal-resolvent products generates
the systematic decomposition $g_{ji}=\sum_{r\ge 2}g_{ji}^{(r)}$.

Three principal conclusions emerge from this construction.  First,
the negative integrated correlation is a rigorous consequence of
projector idempotency (Theorem~1), and within the DCA representation
cavity subtraction supplies the correction required to restore the
exact projector constraint (Restoration Identity).  Second, the
hierarchy reveals a decisive parity structure: the $r=2$ sector
carries strictly even parity and provides the leading multi-channel
variance contribution, while the $r=3$ sector introduces the first
odd-parity (skewness) component, a quantitative, experimentally
testable prediction inaccessible within parity-preserving
single-resolvent closures; the extension of this parity organisation
to all orders remains open.  Third, $f_{ji}^2$ and its correlation
correction $g_{ji}$ are expressed entirely through the diagonal
spectral functions $f^{\mu i}(\lambda)$ and the microscopic
interaction couplings, so that no free fitting functions remain at
the mean level; the hierarchy is systematically improvable, with a
quantitative convergence rate that is model dependent.

The same resolvent theory contains a fluctuation sector, described
by the connected covariance of the diagonal resolvents.  The entropy
normalization and the diagonal--off-diagonal trade-off of the ETH
ansatz are exact counting theorems, and the covariance admits an
exact spectral dictionary whose ridge projections carry the
fluctuation parameters $q-1$ and $\bar{C}_i$; its regular sector
carries the correlation entering $g_{ji}$.  Under amplitude
isotropy, decorrelation and regularity, the Wick closure of the
two-point amplitude covariance fixes the Gaussian Porter--Thomas
limit, and the exact identities $q=3+\kappa_4(c)/\mathbb{E}[c^2]^2$
and $\bar{C}_i=-(q-1)(1-1/d_B)/D$ close both fluctuation parameters
conditionally, the latter verified to $4\%$ for GOE and violated by
the factor $2.14$ in the structured model as an
isotropy-breakdown diagnostic.  The smooth ETH function and its
fluctuation data are thus complementary spectral projections of one
conservation-constrained covariance.  A microscopic derivation of
the amplitude-level isotropy, decorrelation and regularity
conditions, for which Berry-type quantum ergodicity provides the
physical interpretation, remains the central open problem; the
framework concerns the subsystem-ETH overlap sector, with the
operator-level lifting as a future direction.

Several directions emerge naturally from these connections:

\begin{enumerate}
    \item \textbf{Microscopic amplitude regularity.}
    The derivation of the amplitude-level isotropy, decorrelation and
    regularity conditions from a chaotic Hamiltonian is the central
    open problem, since it would turn the conditional closures of
    Eqs.~\eqref{eq:fp_qk4} and \eqref{eq:fp_Cbar_iso} into a full
    computation of $\{q,\bar{C}_i\}$ from the microscopic
    Hamiltonian; the energy-resolved quantity $q(E)$ diagnosed in
    Sec.~\ref{app:num_fp} is the natural intermediate object, and the
    ladder Eq.~\eqref{eq:fp_ladder} supplies only an
    $O(\|\mathcal L\|)$ dressing at ETH-scaled couplings.

    \item \textbf{Numerical extraction of \(\mathfrak{V}^{(3)}\).} 
    For a given microscopic Hamiltonian, the three-channel coupling 
    product can be extracted from exact diagonalisation data by 
    fitting the odd-parity component of 
    \(\mathcal{C}_{nmn}^{jij}(\omega)\) to the functional form 
    predicted by Eq.~\eqref{eq:g3}. Agreement would constitute 
    direct confirmation of the multi-resolvent mechanism.
    
    \item \textbf{Krylov coefficient prediction.} Computing 
    \(\{a_n, b_n\}\) from the multi-resolvent hierarchy for a 
    specific model and comparing with numerical Lanczos 
    tridiagonalisation provides a stringent test of the DCA 
    and the hierarchy truncation.
    
    \item \textbf{Beyond ETH.} The hierarchy is formulated entirely 
    in terms of resolvent identities and projection operators, 
    without invoking the ETH beyond the statistical closure of the 
    DCA. Its applicability may extend to prethermal regimes and 
    many-body localised systems where the standard ETH does not 
    hold, provided the diagonal resolvents remain well-defined.
    
    \item \textbf{Field-theory limit.} In the continuum (\(N \to 
    \infty\)) limit, the multi-resolvent products become 
    multi-frequency convolutions of spectral densities, suggesting 
    connections to the Schwinger--Keldysh formalism and to the 
    eigenstate-thermalisation structure of holographic theories.

    \item \textbf{Connected spectral densities and FK cumulants.}
    A natural extension is the construction of connected
    multi-resolvent spectral densities, separating connected from
    disconnected contributions in analogy with the moment-cumulant
    relation, for example through a multi-frequency resolvent
    generating functional \(W[J]=\log Z[J]\).  This would provide
    the bridge between the present hierarchy and the higher-order
    ETH cumulants of Foini and Kurchan~\cite{FK19}: the level-\(r\)
    sector contains \(r+2\) resolvent factors connected by \(r\)
    interaction vertices with the entropy suppression pattern
    \(e^{-(r-1)S/2}\), and for the two-index correlation this
    matches the FK four-point cumulant scaling \(e^{-S}\) at
    \(r=2\), with the general mapping \(k_{\text{FK}} = r+2\)
    aligning the V-count with the cumulant order.  The amplitude
    cumulant identity Eq.~\eqref{eq:fp_qk4} of
    Appendix~\ref{app:first_principles} is the first instance of
    such a correspondence at the fluctuation level.

    \item \textbf{Parity organisation beyond leading levels.}
    The analysis of the first two nontrivial levels (\(r=2,3\))
    establishes that even-\(r\) sectors carry dominant even parity 
    while odd-\(r\) sectors host the leading odd-parity components.  
    A complete classification of parity structure for all higher 
    levels---where both even and odd components generically coexist---
    remains an important open direction.
    
    \item \textbf{Intra-channel dressing expansion.}
    The analysis of Sec.~\ref{sec:approx_reinterpret}
    establishes that the approximation replacement captures
    the channel-diagonal ($\mu=\nu$) sector, while the
    multi-resolvent hierarchy organises the inter-channel
    ($\mu\neq\nu$) sector.  One may further reorganise the
    channel-diagonal sector itself by expanding
    $\Gamma_{nn,\mu}^{ij}$ through its resolvent
    representation, which would yield a complementary
    dressing-type expansion within a single bath channel.
    Such a construction would provide an intra-channel
    counterpart to the inter-channel hierarchy developed
    here, and may offer a principled route to incorporate
    cavity dressing effects systematically.

\end{enumerate}

\begin{acknowledgments}
    This work is supported by the National Natural Science Foundation of China under Grant No. 12305035.
\end{acknowledgments}

\appendix

\section{Spectral moment preservation in the projection hierarchy}
\label{app:moment_preservation}

In this appendix we analyze the large-$z$ Laurent expansion of the 
off-diagonal resolvent and establish the extent to which the 
projection hierarchy---and its DCA truncation---preserve the exact 
spectral moments.  The results provide a rigorous, approximation-free 
characterization of the DCA error and reveal the structural origin 
of the Restoration Identity~\eqref{eq:restoration_identity}.

\subsection{Laurent expansion and spectral moments}

The off-diagonal resolvent admits the spectral representation
\begin{equation}
    \mathcal{R}_{\alpha\beta}(z) 
    = \sum_n \frac{q_n^{\alpha\beta}}{z-\lambda_n},
    \qquad
    q_n^{\alpha\beta} 
    := \braket{\phi_\alpha|\psi_n}\!\braket{\psi_n|\phi_\beta},
    \label{eq:app_spectral}
\end{equation}
where $\ket{\psi_n}$ are the exact eigenstates of $H$ with 
eigenvalues $\lambda_n$, and $\ket{\phi_\alpha}$ are the unperturbed 
basis states.  Expanding at large $|z|>\|H\|$ yields
\begin{equation}
    \mathcal{R}_{\alpha\beta}(z) 
    = \sum_{k=0}^{\infty} \frac{M_k^{\alpha\beta}}{z^{k+1}},
    \qquad
    M_k^{\alpha\beta} 
    := \sum_n \lambda_n^k \, q_n^{\alpha\beta}
    = \bra{\phi_\alpha} H^k \ket{\phi_\beta}.
    \label{eq:app_laurent}
\end{equation}
The coefficients $M_k^{\alpha\beta}$ are the spectral moments.
For $\alpha\neq\beta$, completeness of $\{\ket{\psi_n}\}$ and 
orthonormality of $\{\ket{\phi_\alpha}\}$ give 
$M_0^{\alpha\beta}=\delta_{\alpha\beta}=0$ (orthogonality),
$M_1^{\alpha\beta}=H_{\alpha\beta}=V_{\alpha\beta}$, and 
$M_2^{\alpha\beta}=(H^2)_{\alpha\beta}$.

\subsection{Scaling of the hierarchy levels}

\begin{lemma}
\label{lem:app_scaling}
    For $\alpha\neq\beta$ and $\ell\ge 2$, each level of the exact
    hierarchy~(\ref{eq:R_hierarchy_exact}) and of the DCA
    hierarchy~(\ref{eq:R_hierarchy}) satisfies
    $\mathcal{R}^{(\ell)}_{\alpha\beta}(z) = O(|z|^{-\ell})$
    as $|z|\to\infty$.  Consequently, any truncation with $\ell\ge 2$
    automatically satisfies $\mathcal{R}^{[\le L]}_{\alpha\beta}(z)
    = O(z^{-2})$, guaranteeing orthogonality without inter-level
    cancellation.
\end{lemma}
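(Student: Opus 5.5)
The plan is to prove the scaling claim first for a single resolvent factor, then lift it to products, and finally sum over levels. The basic input is that every diagonal resolvent, full or projected, decays like $1/z$.

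\emph{Single factors.} For any state $\ket{\phi_\gamma}$ and any projector $P$ built from finitely many $\Phi_{\xi k}$, the projected Hamiltonian $PHP$ is bounded with $\|PHP\|\le\|H\|$. The Neumann series $(z-PHP)^{-1}=\sum_{k\ge0}(PHP)^k/z^{k+1}$ then converges for $|z|>\|H\|$. Taking the diagonal matrix element gives
\[
\mathcal R^{(\mathcal S)}_\gamma(z)=\frac{1}{z}+O(|z|^{-2}),
\]
and the same expansion (or the spectral representation~\eqref{eq:Rdiag} with $\sum_n p_n^\gamma=1$) gives $\mathcal R_\gamma(z)=1/z+O(|z|^{-2})$ for the full resolvent. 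In particular each factor is $O(|z|^{-1})$, with a constant uniform in $\gamma$ and in the projection set $\mathcal S$, namely $|\mathcal R|\le(|z|-\|H\|)^{-1}$.

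\emph{Products.} By Eqs.~\eqref{eq:R2_exact}--\eqref{eq:R3_exact} and their general form, and likewise by Eq.~\eqref{eq:Rell} under the DCA, the level-$\ell$ term is a product of $\ell$ diagonal resolvents (projected or full) and $\ell-1$ $z$-independent matrix elements $V_{\gamma\gamma'}$. Each such product is therefore bounded by $\prod|V|\cdot(|z|-\|H\|)^{-\ell}=O(|z|^{-\ell})$.

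\emph{Summing over intermediate states.} The remaining issue is the sum over intermediate indices $\gamma_1,\dots,\gamma_{\ell-2}$, which must not spoil the estimate. For a finite Hilbert space this is a finite sum, so the bound survives with a finite $\ell$-dependent constant. To obtain a uniform bound I would instead write the level-$\ell$ term as the matrix element
\[
\bra{\phi_\alpha}R\,V\,R\,V\cdots V\,R\ket{\phi_\beta},
\]
where $R=\sum_\gamma\mathcal R_\gamma\ket{\phi_\gamma}\bra{\phi_\gamma}$ is diagonal with $\|R\|\le(|z|-\|H\|)^{-1}$. The exclusion constraints are imposed by inserting the projectors $\Phi$, whose norm is at most $1$. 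This gives $|\mathcal R^{(\ell)}|\le\|V\|^{\ell-1}(|z|-\|H\|)^{-\ell}$.

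\emph{Truncation.} For the truncated sum $\mathcal R^{[\le L]}=\sum_{\ell=2}^{L}\mathcal R^{(\ell)}$, the case $\alpha\neq\beta$ kills the $\ell=1$ term through its Kronecker delta. Summing the level bounds over $2\le\ell\le L$ then gives $O(z^{-2})$ with no $1/z$ Laurent coefficient. Since $M_0$ is exactly the $1/z$ coefficient (Eq.~\eqref{eq:app_laurent}), orthogonality $M_0=0$ holds term by term without any inter-level cancellation. Analyticity of each product outside the disc $|z|\le\|H\|$ makes the Laurent expansion legitimate.

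The main obstacle is the exact hierarchy, which has an infinite tail and nested projected resolvents whose projection sets depend on the path. For these I would rely on the operator-norm bound of the projected Hamiltonians, uniform in $\mathcal S$, rather than on any ETH scaling.
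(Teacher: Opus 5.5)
Your argument is correct and is essentially the paper's proof: the Neumann-series bound $\|(z-\Phi_{\mathcal S}H\Phi_{\mathcal S})^{-1}\|\le(|z|-\|H\|)^{-1}$, which holds uniformly in the projection set $\mathcal S$, is applied to each of the $\ell$ diagonal-resolvent factors and multiplied by $\ell-1$ bounded couplings. These products are summed over finitely many paths, the DCA inherits the same bound, and the truncated sum therefore has no $1/z$ coefficient. One caveat concerns your optional dimension-uniform refinement, which the lemma does not need: the all-distinct constraint on $\gamma_1,\gamma_2,\ldots$, and in the exact hierarchy the path-dependent projection sets, cannot be imposed by inserting fixed projectors $\Phi$ between the factors of $R\,V\,R\cdots R$, so that argument does not justify the constant-one bound $\|V\|^{\ell-1}(|z|-\|H\|)^{-\ell}$.
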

\begin{proof}
    For $|z|>\|H\|$, the full resolvent admits the convergent 
    Neumann series $(z-H)^{-1} = z^{-1}\sum_{m=0}^{\infty}(H/z)^m$, 
    hence $\|(z-H)^{-1}\| = O(|z|^{-1})$.  The same bound holds for 
    any projected Hamiltonian $\Phi_{\mathcal{S}}H\Phi_{\mathcal{S}}$ 
    since $\|\Phi_{\mathcal{S}}H\Phi_{\mathcal{S}}\| \le \|H\|$.  
    For a normalized vector $\ket{\phi_\mu}$,
   \begin{align}
    |\mathcal{R}^{(\mathcal{S})}_\mu(z)|
        = \bigl|\bra{\phi_\mu}(z-\Phi_{\mathcal{S}}H\Phi_{\mathcal{S}})^{-1}
           \ket{\phi_\mu}\bigr|\notag\\
        \le \|(z-\Phi_{\mathcal{S}}H\Phi_{\mathcal{S}})^{-1}\|
        = O(|z|^{-1}).\notag
   \end{align}
    Each contribution to $\mathcal{R}^{(\ell)}_{\alpha\beta}(z)$ 
    consists of $\ell$ diagonal resolvent matrix elements (each 
    $O(|z|^{-1})$) and $\ell-1$ bounded interaction matrix elements 
    $V_{\mu\nu}$ (each $O(1)$).  Every such contribution is therefore 
    $O(|z|^{-\ell})$, and the finite linear combination defining 
    $\mathcal{R}^{(\ell)}_{\alpha\beta}$ inherits the same scaling.
    The DCA hierarchy replaces projected by full resolvents; since 
    both share the identical $O(|z|^{-1})$ scalar bound and the 
    interaction vertices are unchanged, the scaling holds identically.
\end{proof}

\subsection{Projected diagonal resolvent expansion}

\begin{proposition}
\label{prop:app_proj}
    Let $\mathcal{S}$ be a set of unperturbed basis-state indices.
    For $\alpha\notin\mathcal{S}$, the projected diagonal resolvent
    $\mathcal{R}^{(\mathcal{S})}_\alpha(z)
     := \bra{\phi_\alpha} (z-\Phi_{\mathcal{S}}H\Phi_{\mathcal{S}})^{-1}
        \ket{\phi_\alpha}$ admits the large-$z$ expansion
    \begin{equation}
        \mathcal{R}^{(\mathcal{S})}_\alpha(z)
        = \frac{1}{z} 
          + \frac{H_{\alpha\alpha}}{z^2}
          + \frac{(H^2)_{\alpha\alpha} 
                  - \sum_{\gamma\in\mathcal{S}}|V_{\alpha\gamma}|^2}{z^3}
          + O(z^{-4}).
        \label{eq:app_proj_exp}
    \end{equation}
\end{proposition}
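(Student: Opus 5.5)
The plan is to expand the projected resolvent in a Neumann series, exactly as in the proof of Lemma~\ref{lem:app_scaling}. Write $H_{\mathcal S}:=\Phi_{\mathcal S}H\Phi_{\mathcal S}$, with $\Phi_{\mathcal S}=I-\sum_{\gamma\in\mathcal S}\ket{\phi_\gamma}\bra{\phi_\gamma}$. Since $\|H_{\mathcal S}\|\le\|H\|$, for $|z|>\|H\|$ we have $(z-H_{\mathcal S})^{-1}=\sum_{k\ge0}H_{\mathcal S}^k/z^{k+1}$. It therefore suffices to compute the first three moments $\bra{\phi_\alpha}H_{\mathcal S}^k\ket{\phi_\alpha}$ for $k=0,1,2$. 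The remainder $\sum_{k\ge3}$ is bounded in norm by $\|H\|^3|z|^{-4}/(1-\|H\|/|z|)=O(z^{-4})$.

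The key fact is that $\alpha\notin\mathcal S$ implies $\Phi_{\mathcal S}\ket{\phi_\alpha}=\ket{\phi_\alpha}$, by orthonormality of the unperturbed basis. The moments then follow in turn.
\begin{itemize}
\item $k=0$: the moment is $\braket{\phi_\alpha|\phi_\alpha}=1$.
\item $k=1$: the moment is $\bra{\phi_\alpha}\Phi_{\mathcal S}H\Phi_{\mathcal S}\ket{\phi_\alpha}=H_{\alpha\alpha}$.
\item $k=2$: using $\Phi_{\mathcal S}^2=\Phi_{\mathcal S}$, the moment is $\bra{\phi_\alpha}H\Phi_{\mathcal S}H\ket{\phi_\alpha}=(H^2)_{\alpha\alpha}-\sum_{\gamma\in\mathcal S}H_{\alpha\gamma}H_{\gamma\alpha}$.
\end{itemize}

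For the $k=2$ term, note that $\gamma\neq\alpha$ for $\gamma\in\mathcal S$. The off-diagonal elements of $H_0$ vanish in the product basis, so $H_{\alpha\gamma}=V_{\alpha\gamma}$. Hermiticity then gives $H_{\alpha\gamma}H_{\gamma\alpha}=|V_{\alpha\gamma}|^2$. This yields exactly the $z^{-3}$ coefficient in Eq.~\eqref{eq:app_proj_exp}.

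No step is a genuine obstacle. The only point needing care is the identification $H_{\alpha\gamma}=V_{\alpha\gamma}$ for $\gamma\ne\alpha$, which relies on $H_0$ being diagonal in $\{\ket{\phi_{\mu i}}\}$. A second point of care is that the $O(z^{-4})$ remainder is uniform, which follows from the norm bound $\|H_{\mathcal S}\|\le\|H\|$ used in Lemma~\ref{lem:app_scaling}.

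Comparing with the full resolvent ($\mathcal S=\emptyset$) immediately shows that the cavity correction first enters at order $z^{-3}$ through $-\sum_{\gamma\in\mathcal S}|V_{\alpha\gamma}|^2$. This is the fact that will later be used to locate the first DCA deviation.
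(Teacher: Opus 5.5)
Your proof is correct and follows the paper's argument essentially step for step. Both arguments expand $(z-\Phi_{\mathcal S}H\Phi_{\mathcal S})^{-1}$ in a Neumann series, strip the outer projectors using $\Phi_{\mathcal S}\ket{\phi_\alpha}=\ket{\phi_\alpha}$, and split $\bra{\phi_\alpha}H(I-P)H\ket{\phi_\alpha}=(H^2)_{\alpha\alpha}-\sum_{\gamma\in\mathcal S}|H_{\alpha\gamma}|^2$. You are slightly more careful than the paper, which uses $H_{\alpha\gamma}=V_{\alpha\gamma}$ for $\gamma\neq\alpha$ without comment and does not bound the $O(z^{-4})$ remainder.
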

\begin{proof}
    Set $P = \sum_{\gamma\in\mathcal{S}} \ket{\phi_\gamma}\!\bra{\phi_\gamma}$,
    so $\Phi_{\mathcal{S}} = I-P$.  
    Since $\alpha\notin\mathcal{S}$, $\Phi_{\mathcal{S}}\ket{\phi_\alpha}
    = \ket{\phi_\alpha}$ and $\bra{\phi_\alpha}\Phi_{\mathcal{S}}
    = \bra{\phi_\alpha}$.  Expanding the resolvent:
    \[
        (z-\Phi_{\mathcal{S}}H\Phi_{\mathcal{S}})^{-1}
        = \frac{1}{z} + \frac{\Phi_{\mathcal{S}}H\Phi_{\mathcal{S}}}{z^2}
          + \frac{(\Phi_{\mathcal{S}}H\Phi_{\mathcal{S}})^2}{z^3}
          + O(z^{-4}).
    \]
    The $O(z^{-2})$ coefficient is
    $\bra{\phi_\alpha}\Phi_{\mathcal{S}}H\Phi_{\mathcal{S}}
     \ket{\phi_\alpha} = \bra{\phi_\alpha}H\ket{\phi_\alpha}
     = H_{\alpha\alpha}$, since 
    $\bra{\phi_\alpha}P = P\ket{\phi_\alpha}=0$.

    For the $O(z^{-3})$ coefficient, strip the outer projectors:
    \begin{align*}
        \bra{\phi_\alpha}(\Phi_{\mathcal{S}}H\Phi_{\mathcal{S}})^2
          \ket{\phi_\alpha}
        &= \bra{\phi_\alpha} H\Phi_{\mathcal{S}} H \ket{\phi_\alpha} \\
        &= \bra{\phi_\alpha} H(I-P) H \ket{\phi_\alpha} \\
        &= (H^2)_{\alpha\alpha}
           - \bra{\phi_\alpha} H P H \ket{\phi_\alpha}.
    \end{align*}
    Expanding $P$,
    $\bra{\phi_\alpha} H P H \ket{\phi_\alpha}
     = \sum_{\gamma\in\mathcal{S}} 
       \bra{\phi_\alpha}H\ket{\phi_\gamma}
       \bra{\phi_\gamma}H\ket{\phi_\alpha}
     = \sum_{\gamma\in\mathcal{S}} |V_{\alpha\gamma}|^2$.
    Hence the $O(z^{-3})$ coefficient is 
    $(H^2)_{\alpha\alpha} - \sum_{\gamma}|V_{\alpha\gamma}|^2$.
\end{proof}

Proposition~\ref{prop:app_proj} establishes a key structural fact:
the cavity subtraction $-\sum_{\gamma}|V_{\alpha\gamma}|^2$ enters 
the projected diagonal resolvent only at $O(z^{-3})$.  The $O(z^{-1})$ 
and $O(z^{-2})$ coefficients coincide with those of the full diagonal 
resolvent $\mathcal{R}_\alpha(z)$.

\subsection{Preservation of the first two moments}

\begin{corollary}[First moment]
\label{cor:app_M1}
    For $\alpha\neq\beta$, the DCA hierarchy exactly reproduces the 
    first spectral moment: $M_1^{\alpha\beta,\rm DCA} = V_{\alpha\beta}
    = M_1^{\alpha\beta,\rm exact}$.
\end{corollary}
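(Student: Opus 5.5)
The plan is to read off the coefficient of $z^{-2}$ in the Laurent expansion of the DCA hierarchy and compare it with $M_1^{\alpha\beta,\rm exact}=V_{\alpha\beta}$ from Eq.~\eqref{eq:app_laurent}. Only a finite number of levels can contribute to this coefficient.

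First, I will use Lemma~\ref{lem:app_scaling}. Since $\mathcal{R}^{(\ell)}_{\alpha\beta}=O(|z|^{-\ell})$, every level with $\ell\ge3$ starts at $z^{-3}$ and cannot affect the $z^{-2}$ coefficient. For $\alpha\neq\beta$ the level $\ell=1$ vanishes by the Kronecker delta in Eq.~\eqref{eq:R1}. So the first moment of the DCA resolvent $\sum_\ell\mathcal{R}^{(\ell)}_{\alpha\beta}$ is exactly the $z^{-2}$ coefficient of $\mathcal{R}^{(2)}_{\alpha\beta}=\mathcal{R}_\alpha V_{\alpha\beta}\mathcal{R}_\beta$ from Eq.~\eqref{eq:R2}. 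The same argument gives $M_0^{\rm DCA}=0$ along the way.

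Second, I will expand each full diagonal resolvent as $\mathcal{R}_\gamma(z)=z^{-1}+H_{\gamma\gamma}z^{-2}+O(z^{-3})$. This is the $\mathcal{S}=\emptyset$ case of Proposition~\ref{prop:app_proj}, or equivalently Eq.~\eqref{eq:app_laurent} with $M_0^{\gamma\gamma}=1$. The product then gives $\mathcal{R}^{(2)}_{\alpha\beta}=V_{\alpha\beta}z^{-2}+O(z^{-3})$, so $M_1^{\alpha\beta,\rm DCA}=V_{\alpha\beta}$.

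Third, I will identify $M_1^{\rm exact}=\bra{\phi_\alpha}H\ket{\phi_\beta}=V_{\alpha\beta}$. This holds because $H_0$ is diagonal in the unperturbed basis and $\alpha\neq\beta$.

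The only point needing care is justifying the term-by-term Laurent reading of the infinite sum over $\ell$. The cleanest option is to state the corollary for any finite truncation $L\ge2$, where it is immediate. If the full series is wanted, I would first note that it converges for large $|z|$ (the Neumann-type bound in the proof of Lemma~\ref{lem:app_scaling}, with $\|V\|/|z|<1$). I would then show that the tail $\sum_{\ell\ge3}$ is $O(z^{-3})$ uniformly. I expect this uniform tail bound to be the only mild obstacle; I would handle it by bounding the $\ell$-th level geometrically by $(c\|V\|/|z|)^{\ell-1}|z|^{-1}$.
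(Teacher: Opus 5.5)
Your proposal is correct and follows essentially the same route as the paper: Lemma~\ref{lem:app_scaling} isolates $\ell=2$, and the $z^{-2}$ coefficient of $\mathcal{R}^{(2)}_{\alpha\beta}$ uses only the universal $1/z$ leading term of the diagonal resolvents. The paper phrases this as projected and full resolvents agreeing at that order, whereas you compare directly with $\bra{\phi_\alpha}H\ket{\phi_\beta}=V_{\alpha\beta}$. Your worry about the infinite tail is milder than you expect: if the exclusion $\gamma_1\neq\cdots\neq\alpha,\beta$ is read as pairwise distinctness, as the paper's no-revisit remark indicates, the hierarchy is a finite sum in the $D$-dimensional Hilbert space, so no uniform tail bound is needed.
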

\begin{proof}
    By Lemma~\ref{lem:app_scaling}, only $\ell=2$ contributes at 
    $z^{-2}$.  The $z^{-2}$ coefficient of $\mathcal{R}^{(2)}_{\alpha\beta}$ 
    uses only the $O(z^{-1})$ terms of the two diagonal resolvents, 
    which are identical ($=1/z$) for projected and full variants.
\end{proof}

\begin{proposition}[Second moment]
\label{prop:app_M2}
    For $\alpha\neq\beta$, the DCA hierarchy exactly reproduces the 
    second spectral moment: $M_2^{\alpha\beta,\rm DCA} 
    = (H^2)_{\alpha\beta} = M_2^{\alpha\beta,\rm exact}$.
\end{proposition}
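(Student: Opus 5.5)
The plan is a direct comparison of Laurent coefficients at order $z^{-3}$, following the same route as Corollary~\ref{cor:app_M1} one order further. By Eq.~\eqref{eq:app_laurent}, $M_2^{\alpha\beta}$ is the coefficient of $z^{-3}$ in $\mathcal{R}_{\alpha\beta}(z)$. So I would compute the $z^{-3}$ coefficient of the DCA sum $\sum_{\ell\ge1}\mathcal{R}^{(\ell)}_{\alpha\beta}$ and show that it equals $(H^2)_{\alpha\beta}$.

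\textbf{Which levels contribute.} For $\alpha\neq\beta$ the level $\ell=1$ vanishes by its Kronecker delta. By Lemma~\ref{lem:app_scaling}, $\mathcal{R}^{(\ell)}_{\alpha\beta}=O(z^{-\ell})$, so every level with $\ell\ge4$ starts at $z^{-4}$ or higher. Only $\ell=2$ and $\ell=3$ therefore contribute at $z^{-3}$.

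\textbf{Level $\ell=2$.} Here $\mathcal{R}^{(2)}_{\alpha\beta}=\mathcal{R}_\alpha V_{\alpha\beta}\mathcal{R}_\beta$. Insert the full-resolvent expansion $\mathcal{R}_\alpha=z^{-1}+H_{\alpha\alpha}z^{-2}+O(z^{-3})$, which is Proposition~\ref{prop:app_proj} with $\mathcal{S}=\emptyset$. The $z^{-3}$ coefficient of the product is then $V_{\alpha\beta}\,(H_{\alpha\alpha}+H_{\beta\beta})$.

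\textbf{Level $\ell=3$.} By Eq.~\eqref{eq:R3}, $\mathcal{R}^{(3)}_{\alpha\beta}=\sum_{\gamma\neq\alpha,\beta}\mathcal{R}_\alpha V_{\alpha\gamma}\mathcal{R}_\gamma V_{\gamma\beta}\mathcal{R}_\beta$. Only the leading $z^{-1}$ term of each of the three resolvents survives at $z^{-3}$. This level therefore contributes $\sum_{\gamma\neq\alpha,\beta}V_{\alpha\gamma}V_{\gamma\beta}$.

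\textbf{Assembling the coefficient.} Since $H_0$ is diagonal in the unperturbed basis, $H_{\alpha\gamma}=V_{\alpha\gamma}$ for all $\gamma\neq\alpha$. Then
\begin{align*}
(H^2)_{\alpha\beta}&=\sum_\gamma H_{\alpha\gamma}H_{\gamma\beta}\\
&=H_{\alpha\alpha}V_{\alpha\beta}+V_{\alpha\beta}H_{\beta\beta}+\sum_{\gamma\neq\alpha,\beta}V_{\alpha\gamma}V_{\gamma\beta}.
\end{align*}
This is exactly the sum of the $\ell=2$ and $\ell=3$ contributions, so $M_2^{\alpha\beta,\mathrm{DCA}}=(H^2)_{\alpha\beta}=M_2^{\alpha\beta,\mathrm{exact}}$.

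\textbf{The main point.} The step needing care is the bookkeeping in this last identity. The exclusion $\gamma\neq\alpha,\beta$ in the $\ell=3$ sum removes precisely the two terms $\gamma=\alpha$ and $\gamma=\beta$. Those two terms are supplied instead by the $z^{-2}$ corrections $H_{\alpha\alpha}$ and $H_{\beta\beta}$ of the diagonal resolvents at level $\ell=2$. Along the way I would check that the ordering of vertex indices matches the matrix product, namely $V_{\alpha\beta}$ and $V_{\alpha\gamma}V_{\gamma\beta}$.

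\textbf{Why DCA makes no difference here.} Proposition~\ref{prop:app_proj} shows that the projected resolvents agree with the full ones through $O(z^{-2})$. The ingredients used above are $\mathcal{R}_\alpha$ and $\mathcal{R}_\beta$ up to $z^{-2}$ at level $2$, and all resolvents at order $z^{-1}$ at level $3$. These are identical in the exact and DCA hierarchies, so the DCA coefficient coincides with the exact one. The first discrepancy, the cavity term $-\sum_{\gamma\in\mathcal{S}}|V_{\alpha\gamma}|^2$, can only enter at $z^{-4}$.
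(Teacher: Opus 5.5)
Your proposal is correct and follows the paper's proof essentially step by step. You restrict to levels $\ell=2,3$ via Lemma~\ref{lem:app_scaling}, read off $V_{\alpha\beta}(H_{\alpha\alpha}+H_{\beta\beta})$ and $\sum_{\gamma\neq\alpha,\beta}V_{\alpha\gamma}V_{\gamma\beta}$ using Proposition~\ref{prop:app_proj}, and reassemble $(H^2)_{\alpha\beta}$; the only cosmetic difference is that you expand $(H^2)_{\alpha\beta}=\sum_\gamma H_{\alpha\gamma}H_{\gamma\beta}$ directly, whereas the paper splits $H=H_0+V$.
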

\begin{proof}
    Levels $\ell=2$ and $\ell=3$ contribute at $z^{-3}$ 
    (Lemma~\ref{lem:app_scaling}).

    \textit{Level $\ell=2$:} The $z^{-3}$ coefficient combines 
    $O(z^{-2})$ from one resolvent with $O(z^{-1})$ from the other.
    By Proposition~\ref{prop:app_proj}, the projected and full 
    resolvents share the same $O(z^{-2})$ coefficient $H_{\alpha\alpha}$.
    Hence $\mathcal{R}^{(2)}_{\alpha\beta}$ contributes 
    $V_{\alpha\beta}(H_{\alpha\alpha}+H_{\beta\beta})$ in both cases.

    \textit{Level $\ell=3$:} Any resolvent promoted beyond $O(z^{-1})$ 
    pushes the total order beyond $z^{-3}$; the $z^{-3}$ coefficient 
    arises solely from all three resolvents at $O(z^{-1})$, giving 
    $\sum_{\xi\neq\alpha,\beta} V_{\alpha\xi}V_{\xi\beta}$.  
    This contribution is identical for DCA and exact hierarchies.

    Summing both levels, expand $H = H_0+V$ with $H_0$ diagonal 
    ($H_0\ket{\phi_\alpha}=a_\alpha\ket{\phi_\alpha}$):
    \begin{align}
        M_2^{\alpha\beta,\rm DCA}
        = V_{\alpha\beta}(H_{\alpha\alpha}+H_{\beta\beta})
           + \sum_{\xi\neq\alpha,\beta} V_{\alpha\xi}V_{\xi\beta}
           \notag\\
        = V_{\alpha\beta}\bigl[(a_\alpha+V_{\alpha\alpha})
                               +(a_\beta+V_{\beta\beta})\bigr]
           + \sum_{\xi\neq\alpha,\beta} V_{\alpha\xi}V_{\xi\beta}
           \notag\\
        = (a_\alpha+a_\beta)V_{\alpha\beta}
           + \underbrace{\Bigl[
              V_{\alpha\alpha}V_{\alpha\beta}
              + \sum_{\xi\neq\alpha,\beta} V_{\alpha\xi}V_{\xi\beta}
              + V_{\alpha\beta}V_{\beta\beta}
             \Bigr]}_{\displaystyle = (V^2)_{\alpha\beta}}
           \notag\\
        = \bigl[(H_0+V)^2\bigr]_{\alpha\beta}
         = (H^2)_{\alpha\beta}.
        \label{eq:app_M2_full}
    \end{align}
    The three contributions to $(V^2)_{\alpha\beta}$ are made explicit 
    in the penultimate line: the $\ell=3$ path sum supplies all 
    $\xi\neq\alpha,\beta$ terms, while the remaining diagonal 
    contributions are restored by the $H_{\alpha\alpha}$ and 
    $H_{\beta\beta}$ factors from the $\ell=2$ term.
\end{proof}

\begin{remark}
\label{rem:app_cavity_origin}
    The exact preservation of $M_1$ and $M_2$ is not a numerical 
    coincidence.  Proposition~\ref{prop:app_proj} shows that cavity 
    subtraction enters the projected diagonal resolvent only at 
    $O(z^{-3})$.  Since the $k$-th spectral moment is determined by 
    the coefficient of $z^{-(k+1)}$, the absence of cavity corrections 
    up to $O(z^{-2})$ immediately guarantees the exact preservation 
    of $M_0$, $M_1$, and $M_2$.  This single observation unifies the 
    three moment-preservation results and pinpoints $M_3$ as the first 
    moment at which the DCA must deviate.
\end{remark}

\subsection{Leading DCA deviation}

\begin{proposition}[Leading DCA deviation]
\label{prop:app_DCA_dev}
    For the hierarchy defined in 
    Eqs.~(\ref{eq:R_hierarchy_exact})--(\ref{eq:R_hierarchy}),
    and for $\alpha\neq\beta$, the leading large-$z$ deviation between 
    the DCA and exact off-diagonal resolvents is, at order $z^{-4}$,
    \begin{equation}
        \mathcal{R}_{\alpha\beta}^{\rm DCA}(z) 
        - \mathcal{R}_{\alpha\beta}^{\rm exact}(z)
        = \frac{V_{\alpha\beta}\,|V_{\alpha\beta}|^2}{z^4} 
          + O(z^{-5}).
        \label{eq:app_DCA_asymp}
    \end{equation}
\end{proposition}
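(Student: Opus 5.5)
The plan is to compare the $z^{-4}$ Laurent coefficients of the DCA and exact hierarchies term by term, using Lemma~\ref{lem:app_scaling} to reduce the comparison to finitely many levels. Since $\widetilde{\mathcal{R}}^{(\ell)}_{\alpha\beta}$ and $\mathcal{R}^{(\ell)}_{\alpha\beta}$ are both $O(|z|^{-\ell})$, only levels $\ell=2,3,4$ can contribute at order $z^{-4}$. The difference at this order is therefore a finite sum of level-wise differences.

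Within each level I will use the correspondence between Eqs.~(\ref{eq:R2_exact})--(\ref{eq:R3_exact}) and Eqs.~(\ref{eq:R2})--(\ref{eq:Rell}). The two hierarchies have identical index sets (the same exclusions $\gamma_1\neq\cdots\neq\alpha,\beta$) and identical interaction vertices. They differ only in that each projected diagonal resolvent $\mathcal{R}^{(\mathcal S)}_\gamma$ is replaced by the full $\mathcal{R}_\gamma$, and the rightmost factor $\mathcal{R}_\beta$ is already full in both. By Proposition~\ref{prop:app_proj}, $\mathcal{R}^{(\mathcal S)}_\gamma$ and $\mathcal{R}_\gamma$ share their $z^{-1}$ and $z^{-2}$ coefficients, namely $1$ and $H_{\gamma\gamma}$. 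Their $z^{-3}$ coefficients differ by exactly $+\sum_{\delta\in\mathcal S}|V_{\gamma\delta}|^2$ (full minus projected).

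A resolvent's $z^{-3}$ coefficient can only enter the $z^{-4}$ coefficient of a product of $\ell\ge2$ resolvents when $\ell=2$ and the other factor sits at order $z^{-1}$. I will check the levels in turn:
\begin{itemize}
\item At $\ell=4$ all four factors sit at order $z^{-1}$, so the coefficients agree.
\item At $\ell=3$ one factor is at order $z^{-2}$, where the coefficient $H_{\gamma\gamma}$ is common to both hierarchies, so the coefficients again agree.
\item At $\ell=2$ the distributions $(z^{-2},z^{-2})$ and $(z^{-1},z^{-3})$ on $\mathcal{R}_\beta$ agree, since $\mathcal{R}_\beta$ is full in both.
\end{itemize}
The only surviving distribution is $z^{-3}$ on the first factor and $z^{-1}$ on $\mathcal{R}_\beta$. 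There the exact level uses $\mathcal{R}^{(\beta)}_\alpha$ with $\mathcal S=\{\beta\}$, whose $z^{-3}$ coefficient is $(H^2)_{\alpha\alpha}-|V_{\alpha\beta}|^2$, while the DCA uses $(H^2)_{\alpha\alpha}$. Multiplying by the vertex $V_{\alpha\beta}$ gives the difference $V_{\alpha\beta}|V_{\alpha\beta}|^2/z^4$.

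Finally, the exact hierarchy sums to $\mathcal{R}^{\rm exact}_{\alpha\beta}$ by Eq.~(\ref{eq:R_hierarchy_exact}), so subtracting level by level yields Eq.~(\ref{eq:app_DCA_asymp}). The lower orders vanish by Corollary~\ref{cor:app_M1} and Proposition~\ref{prop:app_M2}, which confirms that $z^{-4}$ is the leading order of the deviation.

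The step I expect to be the main obstacle is justifying the level-wise identification. I must confirm that the exact $\ell$-th level really has the stated form, with projector sets $\mathcal S$ given by the previously visited states, so that Proposition~\ref{prop:app_proj} applies with the correct $\mathcal S$ (here $\{\beta\}$ at $\ell=2$). I must also show that rearranging the infinite level sum at fixed Laurent order is legitimate. For the latter I would take $|z|>\|H\|$, where Lemma~\ref{lem:app_scaling}'s bounds make the series absolutely convergent, so that truncation at $\ell\le4$ is exact for the $z^{-4}$ coefficient.
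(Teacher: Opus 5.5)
Your proposal is correct and follows the paper's proof essentially step for step. You restrict to levels $\ell=2,3,4$, use Proposition~\ref{prop:app_proj} to see that projected and full diagonal resolvents share their $z^{-1}$ and $z^{-2}$ coefficients so that levels $3$ and $4$ agree, and isolate the single discrepancy $-|V_{\alpha\beta}|^2$ in the $z^{-3}$ coefficient of $\mathcal{R}^{(\beta)}_\alpha$ at level $2$. Your worry about rearranging the level sum does not arise: in a finite Hilbert space both hierarchies run over self-avoiding paths and therefore terminate, so the $z^{-4}$ coefficient can be extracted level by level without any convergence argument.
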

\begin{proof}
    Levels $\ell=2,3,4$ contribute at $z^{-4}$.  We examine each.

    \textit{Level $\ell=2$:} The $z^{-4}$ coefficient involves the 
    $O(z^{-3})$ term of $\mathcal{R}^{(\beta)}_\alpha(z)$, which by 
    Proposition~\ref{prop:app_proj} differs from the full resolvent 
    by $-|V_{\alpha\beta}|^2$.  This generates
    $\delta M_3|_{\ell=2} = -V_{\alpha\beta}|V_{\alpha\beta}|^2$.

    \textit{Level $\ell=3$:} The $z^{-4}$ coefficient receives 
    contributions when one of the three diagonal resolvents is taken 
    at $O(z^{-2})$ (the other two at $O(z^{-1})$).  The $O(z^{-2})$ 
    coefficients involved---$H_{\alpha\alpha}$, $H_{\xi\xi}$, and 
    $H_{\beta\beta}$---are identical for projected and full resolvents 
    (Proposition~\ref{prop:app_proj}).  Promoting any resolvent to 
    $O(z^{-3})$ pushes the total order to $z^{-5}$ and does not 
    contribute.  The interaction vertices are unchanged by the DCA.  
    Hence the entire $\ell=3$ contribution at order $z^{-4}$ coincides 
    in both hierarchies.

    \textit{Level $\ell=4$:} All four resolvents contribute at 
    $O(z^{-1})$, which are identical ($=1/z$) for all variants.  
    No deviation.

    Summing the three levels, the sole discrepancy at order $z^{-4}$ is 
    the cavity subtraction in the $\ell=2$ term,
    establishing Eq.~\eqref{eq:app_DCA_asymp}.
\end{proof}

Equation~\eqref{eq:app_DCA_asymp} shows that cavity subtraction first 
affects the third spectral moment.  At higher moments, cavity 
corrections enter successively through increasingly higher levels of 
the hierarchy: deviations at $z^{-5}$ receive contributions from the 
$O(z^{-4})$ term of $\mathcal{R}^{(\beta)}_\alpha$ (extending the 
$\ell=2$ discrepancy) and from the first $O(z^{-3})$ mismatch in 
$\ell=3$ projected resolvents, and so on.  The Restoration 
Identity~\eqref{eq:restoration_identity} for the integrated ETH 
correlation is the $k=0$ instance of a broader moment-sum-rule 
structure that cavity subtraction systematically restores.

\subsection{Hilbert-transform interpretation of moment preservation}
\label{app:Hilbert_interpretation}

The moment-preservation results established above admit an equivalent
spectral interpretation through the Hilbert-transform orthogonality
condition of Sec.~\ref{sec:Hilbert_ortho}.

The zeroth moment $M_0^{\alpha\beta}=0$ for $\alpha\neq\beta$ is the
statement $\int d\lambda\,\Im \mathcal{R}_{\alpha\beta}(\lambda-i0^+)=0$, i.e.,
the integrated orthogonality of the off-diagonal spectral density.
Under the DCA, the leading contribution
$\mathcal{R}^{(2)}_{\alpha\beta}=V_{\alpha\beta}\mathcal{R}_\alpha \mathcal{R}_\beta$ yields
\begin{equation}
    M_0^{\alpha\beta,\mathrm{DCA}}
    \propto \int\! d\lambda\,(H_\alpha f_\beta + f_\alpha H_\beta)
    = 0,
    \label{eq:app_M0_Hilbert}
\end{equation}
where the vanishing follows from the Kramers--Kronig antisymmetry
$\int H_\alpha f_\beta = -\int f_\alpha H_\beta$.
The same result is obtained algebraically from the $O(z^{-2})$ Laurent
scaling (Lemma~\ref{lem:app_scaling}); Eq.~\eqref{eq:app_M0_Hilbert}
provides the spectral-domain restatement.

Thus, the preservation of $M_0=0$ under the DCA has \textbf{two equivalent
interpretations}:

\begin{itemize}
    \item \textbf{Algebraic view} (Secs.~\ref{app:moment_preservation}--\ref{prop:app_DCA_dev}):
    $\mathcal{R}_\alpha \mathcal{R}_\beta = O(z^{-2})$, i.e., the absence of a $1/z$ Laurent term
    guarantees $M_0=0$ identically.

    \item \textbf{Spectral view} (this subsection):
    $\int (H_\alpha f_\beta + f_\alpha H_\beta)=0$, enforced by the
    Hilbert-transform antisymmetry inherent in the diagonal resolvent
    product.
\end{itemize}

The algebraic view establishes \emph{that} the DCA preserves the zeroth moment
through the high-frequency asymptotics; the spectral view provides an equivalent
characterization through the analytic structure of the resolvent boundary values.
This dual perspective shows that the DCA is not simply ``matching moments'' in a
numerical sense but is an approximation that respects the spectral manifestation
of the geometric constraint $\braket{\phi_\alpha|\phi_\beta}=0$ through the
analyticity of the resolvent---a property inherited from the $H+if$ structure
that the DCA preserves.

\subsection{Path--moment correspondence (conjecture)}

The explicit verification of $M_1$ and $M_2$ reveals a structural 
pattern: different hierarchy levels correspond to different numbers of 
interaction vertices along the projected propagation path.  
We conjecture that the $\ell$-th level of the exact projection 
hierarchy collects all contributions to the resolvent with precisely 
$\ell-1$ interaction vertices in the projected propagation sequence:
\begin{equation}
    M_k^{\alpha\beta} 
    \stackrel{?}{=} \sum_{\ell=2}^{k+1}
       \bigl[\,\mathcal{R}^{(\ell)}_{\alpha\beta}(z)\,\bigr]_{z^{-(k+1)}},
       \qquad (\alpha\neq\beta).
    \label{eq:app_conjecture}
\end{equation}
If established, Eq.~\eqref{eq:app_conjecture} would provide a direct 
proof that the projection hierarchy is an exact reorganization of the 
Neumann expansion $(z-H)^{-1} = \sum_k H^k/z^{k+1}$ according to the 
number of interaction vertices in the projected propagation---a 
regrouping in which each hierarchy level $\ell$ isolates terms with a 
fixed number $\ell-1$ of $V$ factors.  This would imply that the 
hierarchy preserves all spectral moments, with orthogonality 
($M_0=0$) and the exactness of $M_1$ and $M_2$ recovered as special 
cases.  A proof for general $k$---by induction on the double index 
$(\ell,m)$, where $\ell$ counts interaction vertices and $m$ counts 
diagonal $H_0$ propagations---is left to future work.

\section{First-Principles Status of the ETH Fluctuation Hierarchy}
\label{app:first_principles}

This appendix determines precisely which parts of the ETH structure
follow from \emph{exact} identities---with no statistical input
whatsoever---and which parts follow from two sharply localized
postulates.  The conclusion is that the entropy normalization
$e^{-S/2}$ of the ETH ansatz is a theorem of Hilbert-space counting,
the Gaussian fluctuation structure follows from counting together
with a single decorrelation postulate, and the smooth function
$f_{ji}(E^+,\omega)$ is supplied by the multi-resolvent hierarchy of
Secs.~\ref{sec:expansion}--\ref{sec:eth_ansatz}.  The diagonal and
off-diagonal fluctuation measures entering the trade-off relation of
Ref.~\cite{HCZ25} are shown to be complementary projections of one
and the same two-frequency object, and their ratio is expressed in
closed form in terms of the single-channel Porter--Thomas factor and
the two-channel amplitude covariance.  All exact identities below
have been verified numerically to machine precision on random
Hamiltonians (Sec.~\ref{app:num_fp}).

\subsection{Exact fluctuation identities}
\label{app:exact_identities}

The following statements use only the completeness of the exact
eigenstates $\sum_n\ket{\psi_n}\bra{\psi_n}=I$ and the orthonormality
of the unperturbed basis $\braket{\phi_{\mu i}|\phi_{\nu j}}
=\delta_{\mu\nu}\delta_{ij}$.  No ETH, chaos, or random-matrix
assumption enters.

\begin{lemma}[Row sum rule]
\label{lem:fp_rowsum}
For every eigenstate index $n$ and every pair of system indices
$i,j$,
\begin{equation}
\sum_m |\sigma_{nm}^{ji}|^2 \;=\; \sigma_{nn}^{jj},
\label{eq:fp_rowsum}
\end{equation}
with $\sigma_{nm}^{ji}$ defined in Eq.~\eqref{eq:Gamma_def}.
\end{lemma}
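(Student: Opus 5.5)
The plan is to write $\sigma_{nm}^{ji}$ as a matrix element of an operator on the full Hilbert space and then collapse the sum over $m$ with completeness. From the definition, $\sigma_{nm}^{ji}=\bra{\phi_j^S}\Tr_B(\ket{\psi_n}\bra{\psi_m})\ket{\phi_i^S}=\sum_\mu \braket{\phi_{\mu j}|\psi_n}\braket{\psi_m|\phi_{\mu i}}=\sum_\mu\Gamma^{ji}_{nm,\mu}$. The same bath index $\mu$ appears on both amplitudes, so this is $\bra{\psi_m}\,T_{ij}\,\ket{\psi_n}$ with the partial isometry
\[
T_{ij}:=\sum_\mu \ket{\phi_{\mu i}}\bra{\phi_{\mu j}} = \ket{\phi_i^S}\bra{\phi_j^S}\otimes I_B .
\]
The first step is to verify this rewriting explicitly.

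Next I square and sum over $m$. Since $|\sigma_{nm}^{ji}|^2=\bra{\psi_n}T_{ij}^\dagger\ket{\psi_m}\bra{\psi_m}T_{ij}\ket{\psi_n}$, completeness $\sum_m\ket{\psi_m}\bra{\psi_m}=I$ gives $\sum_m|\sigma_{nm}^{ji}|^2=\bra{\psi_n}T_{ij}^\dagger T_{ij}\ket{\psi_n}$. Orthonormality of the unperturbed basis, $\braket{\phi_{\nu i}|\phi_{\mu i}}=\delta_{\mu\nu}$, then yields $T_{ij}^\dagger T_{ij}=\sum_\mu\ket{\phi_{\mu j}}\bra{\phi_{\mu j}}=P_j$. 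Therefore the sum equals $\bra{\psi_n}P_j\ket{\psi_n}=\sum_\mu p_n^{\mu j}=\sigma^{jj}_{nn}$. The last equality uses $\Gamma^{jj}_{nn,\mu}=p^{\mu j}_n$ from Eq.~\eqref{eq:Gamma_rules}. This reproduces Eq.~\eqref{eq:P_identity} in a form that holds for arbitrary $i,j$.

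The only delicate point is bookkeeping of bra/ket order and complex conjugation. It has to be checked that the squared modulus pairs $T_{ij}$ with $T_{ij}^\dagger$, and not with $T_{ji}$, so that the right-hand side is $\sigma^{jj}$ rather than $\sigma^{ii}$. Equivalently, one can work at the level of amplitudes. Expand $|\sigma_{nm}^{ji}|^2=\sum_{\mu,\nu}\Gamma^{ji}_{nm,\mu}\Gamma^{ij}_{mn,\nu}$ and sum over $m$ using $\sum_m\braket{\psi_m|\phi_{\mu i}}\braket{\phi_{\nu i}|\psi_m}=\delta_{\mu\nu}$. This kills the $\mu\neq\nu$ cross terms and leaves $\sum_\mu p_n^{\mu j}$. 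I would present this version as a consistency check, since it connects directly to the correlation term $\mathcal{C}$ of Eq.~\eqref{eq:Cdef}.
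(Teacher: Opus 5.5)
Your proposal is correct and is essentially the paper's proof. The paper expands $|\sigma^{ji}_{nm}|^2$ as the double bath sum and uses $\sum_m\braket{\psi_m|\phi_{\mu i}}\braket{\phi_{\nu i}|\psi_m}=\delta_{\mu\nu}$ to collapse it to $\sum_\mu p_n^{\mu j}=\sigma^{jj}_{nn}$, which is exactly your amplitude-level ``consistency check''; your operator identity $T_{ij}^\dagger T_{ij}=P_j$ is the same computation written in operator form, mirroring the paper's earlier derivation of Eq.~\eqref{eq:P_identity}.
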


\begin{proof}
Expanding $|\sigma_{nm}^{ji}|^2$ in the overlap amplitudes of
Eq.~\eqref{eq:Gamma_def} and summing over $m$, the completeness
relation $\sum_m\braket{\psi_m|\phi_{\mu i}}\braket{\phi_{\nu i}|\psi_m}
=\delta_{\mu\nu}$ collapses the double bath sum, leaving
$\sum_\mu p_n^{\mu j}=\sigma_{nn}^{jj}$.
\end{proof}

\begin{corollary}[Entropy normalization from counting]
\label{cor:fp_counting}
Summing Eq.~\eqref{eq:fp_rowsum} over $n$ gives the exact trace
normalization
\begin{equation}
\sum_{n,m}|\sigma_{nm}^{ji}|^2 = d_B,
\label{eq:fp_trace}
\end{equation}
so that the mean-square off-diagonal matrix element is
\begin{align}
\frac{1}{D(D-1)}\sum_{n\neq m}|\sigma_{nm}^{ji}|^2
&= \frac{d_B}{D(D-1)}\bigl[1-O(e^{-S_S})\bigr] \notag\\
&= e^{-S-S_S}\bigl[1-O(e^{-S_S})\bigr],
\label{eq:fp_scale}
\end{align}
and the mean-square Frobenius element of the full reduced transition
matrix is
\begin{equation}
\frac{1}{D(D-1)}\sum_{n\neq m}\|\sigma_{nm}\|_F^2
= e^{-S+S_S}\bigl[1-O(e^{-S_S})\bigr].
\label{eq:fp_scale_F}
\end{equation}
The $e^{-S/2}$ amplitude normalization of the standard ETH
ansatz~\eqref{eq:ETH_standard} is therefore \emph{not a hypothesis}:
it is the exact value of the normalized trace, up to the
delocalization factor $1-O(e^{-S_S})$, and only the shape of the
smooth function carries dynamical information.  This sharpens the
discussion of Sec.~\ref{sec:standard_ansatz}: the entropy factor is
derived from counting, while the smooth function
$f_{ji}(E^+,\omega)$ is determined by the multi-resolvent hierarchy.
\end{corollary}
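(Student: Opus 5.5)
The plan is to derive the trace identity~\eqref{eq:fp_trace} directly from Lemma~\ref{lem:fp_rowsum}, and then obtain the two mean-square statements~\eqref{eq:fp_scale}--\eqref{eq:fp_scale_F} by removing the diagonal terms and normalizing. First, I will sum Eq.~\eqref{eq:fp_rowsum} over $n$:
\[
\sum_{n,m}|\sigma^{ji}_{nm}|^2=\sum_n\sigma^{jj}_{nn}=\sum_n\sum_\mu p^{\mu j}_n .
\]
Exchanging the finite sums and using completeness in the form $\sum_n p^{\mu j}_n=1$ (Sec.~\ref{sec:setup}) then gives $\sum_\mu 1=d_B$. Equivalently, $\sum_n\bra{\psi_n}P_j\ket{\psi_n}=\Tr P_j=d_B$. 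This step is purely algebraic and uses no ETH input.

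Second, I will separate the diagonal contribution $\sum_n|\sigma^{ji}_{nn}|^2$. This requires an upper bound on the diagonal terms relative to $d_B$. By Cauchy--Schwarz on $\sigma^{ji}_{nn}=\sum_\mu\braket{\phi_{\mu j}|\psi_n}\braket{\psi_n|\phi_{\mu i}}$, we have $|\sigma^{ji}_{nn}|^2\le\sigma^{jj}_{nn}\sigma^{ii}_{nn}$. Summing over $n$ gives at most $\sum_n\sigma^{jj}_{nn}\sigma^{ii}_{nn}$.

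This is the step where a scaling input is unavoidable, and I expect it to be the main obstacle. I would invoke the delocalization statement $\mathbb{E}\,\sigma^{ii}_{nn}=\sum_\mu\mathbb{E}p^{\mu i}_n=O(e^{-S_S})$, which is the reduced-population scaling implicit in Eq.~\eqref{eq:p_exp} together with $D=d_Sd_B$, $e^{S}\sim D$, $e^{S_S}=d_S$. With this input the diagonal sum is $d_B\cdot O(e^{-S_S})$. Hence
\[
\sum_{n\neq m}|\sigma^{ji}_{nm}|^2=d_B[1-O(e^{-S_S})].
\]
The $O(e^{-S_S})$ factor is therefore exactly what the statement labels the delocalization factor. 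If a rigorous version is wanted, I would state the bound as the exact inequality $\sum_n|\sigma_{nn}^{ji}|^2\le d_B\max_n\sigma^{ii}_{nn}$ and defer the smallness to ETH.

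Third, I will divide by $D(D-1)$ and use $d_B/D=1/d_S=e^{-S_S}$ and $D-1=e^{S}(1-e^{-S})$, which gives $e^{-S-S_S}$ up to the same correction since $e^{-S}\le e^{-S_S}$. For the Frobenius version~\eqref{eq:fp_scale_F}, I sum the result over the $d_S^2$ pairs $(i,j)$, because $\|\sigma_{nm}\|_F^2=\sum_{ij}|\sigma^{ji}_{nm}|^2$. This yields $d_S^2d_B/D(D-1)=e^{-S+S_S}$. Here the diagonal subtraction is $\sum_n\Tr(\sigma_{nn}^2)\le\sum_n\Tr\sigma_{nn}=D$, which is relatively $O(1/d_B)$ compared with $d_Sd_B\cdot d_S$; this is again small under the paper's scaling conventions.

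Finally, the interpretive claim about the $e^{-S/2}$ amplitude follows by taking square roots and comparing with Eq.~\eqref{eq:ETH_standard}.
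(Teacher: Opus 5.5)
Your argument is the paper's own route: sum the row rule to get $\sum_n\sigma^{jj}_{nn}=\Tr P_j=d_B$ exactly, remove the $n=m$ terms using a delocalization input, normalize by $D(D-1)$ with $d_B/D=e^{-S_S}$, and sum over the $d_S^2$ pairs $(i,j)$ for the Frobenius version. Your Cauchy--Schwarz step $|\sigma^{ji}_{nn}|^2\le\sigma^{jj}_{nn}\sigma^{ii}_{nn}$ is a useful detail that the paper leaves implicit.

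Two corrections are needed.

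First, the mean bound $\mathbb{E}\,\sigma^{ii}_{nn}=O(e^{-S_S})$ cannot be the input. Averaged over all $n$ it equals $d_B/D=e^{-S_S}$ identically, by counting alone. This holds even at $V=0$, where $\sigma^{ii}_{nn}\in\{0,1\}$ and $\sum_{n\neq m}|\sigma^{ii}_{nm}|^2=0$. You must therefore rely on your fallback: a uniform or second-moment bound, $\max_n\sigma^{ii}_{nn}=O(e^{-S_S})$. That is what the paper's ``delocalization factor'' encodes.

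Second, in the Frobenius step the ratio is $D/(d_S^2d_B)=1/d_S=e^{-S_S}$, not $1/d_B$. Since this is exactly the stated correction, $\Tr\rho_n^2\le1$ already gives Eq.~\eqref{eq:fp_scale_F} unconditionally, with no ETH input (cf.\ Lemma~\ref{lem:fp_tradeoff}).
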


\begin{lemma}[Diagonal--off-diagonal trade-off]
\label{lem:fp_tradeoff}
Let $\rho_n:=\sigma_{nn}$ be the reduced density matrix of the system
in the exact eigenstate $\ket{\psi_n}$, and let
$D_n^i:=\sigma_{nn}^{ii}=\sum_\mu p_n^{\mu i}$ be the population of
system basis state $i$.  Then for every eigenstate $n$,
\begin{equation}
\mathrm{Tr}\,\rho_n^2 \;+\; \sum_{m\neq n}\|\sigma_{nm}\|_F^2
\;=\; d_S,
\label{eq:fp_tradeoff}
\end{equation}
and for every fixed system state $i$,
\begin{equation}
\sum_{m\neq n}|\sigma_{nm}^{ii}|^2 \;=\; D_n^i\,(1-D_n^i).
\label{eq:fp_tradeoff_single}
\end{equation}
\end{lemma}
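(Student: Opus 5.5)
The plan is to obtain both identities from Lemma~\ref{lem:fp_rowsum} together with the idempotency of the eigenstate projector, using no statistical input. The key structural fact is that $\sigma_{nm}$ is the operator $\mathrm{Tr}_B(\ket{\psi_n}\bra{\psi_m})$ on the $d_S$-dimensional system space. Therefore
\[
\|\sigma_{nm}\|_F^2=\sum_{i,j}|\sigma^{ji}_{nm}|^2 .
\]
Likewise, $\mathrm{Tr}\,\rho_n^2=\|\sigma_{nn}\|_F^2$, because $\rho_n=\sigma_{nn}$ is Hermitian.

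For Eq.~\eqref{eq:fp_tradeoff_single}, I would set $j=i$ in Eq.~\eqref{eq:fp_rowsum}. This gives
\[
\sum_m|\sigma^{ii}_{nm}|^2=\sigma^{ii}_{nn}=D_n^i .
\]
I would then split off the $m=n$ term, which is $|\sigma^{ii}_{nn}|^2=(D_n^i)^2$ since $D_n^i=\sum_\mu p_n^{\mu i}$ is real and nonnegative. Subtracting it yields $D_n^i(1-D_n^i)$ directly.

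For Eq.~\eqref{eq:fp_tradeoff}, I would sum Eq.~\eqref{eq:fp_rowsum} over both system indices $i,j$. The left side becomes $\sum_m\|\sigma_{nm}\|_F^2$. The right side becomes $\sum_{i,j}\sigma^{jj}_{nn}=d_S\sum_j\sigma^{jj}_{nn}=d_S\,\mathrm{Tr}\,\rho_n$. I would then use normalization, $\mathrm{Tr}\,\rho_n=\braket{\psi_n|\psi_n}=1$ (equivalently $\sum_{\mu j}p_n^{\mu j}=1$), to reduce this to $d_S$. Isolating the $m=n$ term $\|\sigma_{nn}\|_F^2=\mathrm{Tr}\,\rho_n^2$ gives the stated identity.

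I expect no real obstacle. The only point needing care is bookkeeping: the free index $i$ in the row sum rule contributes the factor $d_S$, and I must make sure the sum over $i$ is not confused with the trace. I would double-check this by an independent route. Compute $\sum_m\|\sigma_{nm}\|_F^2=\sum_m\mathrm{Tr}(\sigma_{nm}\sigma_{nm}^\dagger)$, insert the partial-trace definition, and use $\sum_m\ket{\psi_m}\bra{\psi_m}=I$. The intermediate $\sum_i$ then acts on the identity $I_S$, and the result should again be $d_S\,\mathrm{Tr}\,\rho_n=d_S$.
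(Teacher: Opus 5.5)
Your proposal is correct and is essentially the paper's own proof: sum the row sum rule of Lemma~\ref{lem:fp_rowsum} over $i,j$ to get $\sum_m\|\sigma_{nm}\|_F^2=d_S\,\mathrm{Tr}\,\rho_n=d_S$, then separate off the $m=n$ term $\mathrm{Tr}\,\rho_n^2$; Eq.~\eqref{eq:fp_tradeoff_single} is the $j=i$ case with $(D_n^i)^2$ removed. Your explicit checks are bookkeeping the paper leaves implicit: that the free index $i$ supplies the factor $d_S$, and that $\|\sigma_{nn}\|_F^2=\mathrm{Tr}\,\rho_n^2$ by Hermiticity.
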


\begin{proof}
Summing Eq.~\eqref{eq:fp_rowsum} over $i,j$ gives
$\sum_m\|\sigma_{nm}\|_F^2=d_S\,\mathrm{Tr}\,\rho_n=d_S$; the $m=n$
term is $\|\rho_n\|_F^2=\mathrm{Tr}\,\rho_n^2$, which yields
Eq.~\eqref{eq:fp_tradeoff}.  Equation~\eqref{eq:fp_tradeoff_single}
is the $j=i$ case of Eq.~\eqref{eq:fp_rowsum} with the diagonal term
$|\sigma_{nn}^{ii}|^2=(D_n^i)^2$ separated off.
\end{proof}

Lemma~\ref{lem:fp_tradeoff} is the microscopic identity underlying
the diagonal--off-diagonal trade-off of Ref.~\cite{HCZ25},
$V_{\rm dg}^i+\sum_{j\neq i}V_{\rm off}^{ij}
=\mathrm{Tr}\,\rho_{B_1}^{-1}-1$: in both cases the diagonal purity
and the integrated off-diagonal weight are locked together by a
geometric identity, and neither quantity can vary independently.
Taking the ensemble mean of Eq.~\eqref{eq:fp_tradeoff_single} yields
an \emph{exact} relation between the diagonal fluctuation
$V_{\rm dg}:=\mathrm{Var}(D_n^i)=\mathbb{E}[(D_n^i)^2]-\mathbb{E}(D_n^i)^2$
and the mean off-diagonal variance
$V_{\rm off}:=(D-1)^{-1}\sum_{m\neq n}\mathbb{E}|\sigma_{nm}^{ii}|^2$:
\begin{equation}
f := \frac{V_{\rm dg}}{V_{\rm off}}
= \frac{(D-1)\,V_{\rm dg}}
{e^{-S_S}(1-e^{-S_S}) - V_{\rm dg}}.
\label{eq:fp_f_exact}
\end{equation}
No statistical assumption enters Eq.~\eqref{eq:fp_f_exact}: given the
population variance $V_{\rm dg}$, the ratio $f$ is fixed exactly.  The
entire dynamical question is therefore reduced to the $N$-scaling of
the single quantity $V_{\rm dg}$.

\begin{lemma}[Total covariance conservation]
\label{lem:fp_cov}
For any two channels $a,b$ and any window average $\mathbb{E}$ over
the microcanonical shell,
\begin{equation}
\sum_{n,m}\mathbb{E}\bigl[\delta p_n^{a}\,\delta p_m^{b}\bigr] = 0,
\qquad
\delta p_n^a := p_n^a - \mathbb{E}(p_n^a),
\label{eq:fp_covcon}
\end{equation}
which follows from the exact normalization $\sum_n p_n^a=1$.  For
$a=b$ this gives the single-channel sum rule
$\sum_n\mathrm{Var}(p_n^a)=-\sum_{n\neq m}\mathrm{Cov}(p_n^a,p_m^a)$:
the on-site overlap fluctuation is exactly balanced by the off-site
covariance.  Theorem~1 of Sec.~\ref{sec:insufficiency} is the
realization-level ($j=i$, fixed $n$) counterpart of this ensemble
statement.
\end{lemma}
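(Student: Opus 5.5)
The plan is to deduce the identity directly from the exact normalization $\sum_n p_n^a=1$, which holds in every realization because $\{\ket{\psi_n}\}$ is complete and $\ket{\phi_a}$ is normalized. First I will define the window average $\mathbb{E}$ as a linear functional on realization-level quantities, for example an average over the ensemble of Hamiltonians or over the shell. The only properties needed are linearity and the requirement that, where the sum over $n$ runs over the whole spectrum (or the shell as a whole), $\mathbb{E}$ commutes with that finite sum. Linearity then gives $\sum_n \mathbb{E}(p_n^a)=\mathbb{E}(\sum_n p_n^a)=1$. It follows that $\sum_n\delta p_n^a = 1-1 = 0$ holds identically in each realization.

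Second, I will expand the double sum as a product:
\begin{equation}
\sum_{n,m}\delta p_n^a\,\delta p_m^b=\Bigl(\sum_n\delta p_n^a\Bigr)\Bigl(\sum_m\delta p_m^b\Bigr).
\end{equation}
Each factor vanishes realization by realization, so the product is zero before any averaging. Taking $\mathbb{E}$ and moving it inside the finite double sum then yields Eq.~\eqref{eq:fp_covcon}. One factor vanishing is already enough, so the same conclusion holds for arbitrary $a,b$, including channels with different system indices.

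Third, for $a=b$ I will split the double sum into its diagonal part $n=m$ and its off-diagonal part $n\neq m$. The diagonal terms are $\mathbb{E}[(\delta p_n^a)^2]=\mathrm{Var}(p_n^a)$, and the off-diagonal terms are the covariances. Rearranging gives the stated single-channel sum rule. I will also note the parallel with Theorem~1. Fixing $n$ there and summing over $m$ uses the same completeness, $\sum_m$ over one eigenstate index of a product of amplitudes, so Theorem~1 is the realization-level counterpart of this ensemble statement.

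The main obstacle is the meaning of $\mathbb{E}$ and of the index range. If the sums over $n,m$ are restricted to a microcanonical shell rather than the full spectrum, then $\sum_{n\in\text{shell}}p_n^a$ is no longer exactly $1$, and $\delta p$ must be defined relative to a mean that preserves the shell weight. My first approach will be to take the sums over all $n$, with $\mathbb{E}(p_n^a)$ being the smooth envelope $e^{-S}f^a$, which is normalized by construction through Eq.~\eqref{eq:p_exp}. Alternatively, I will define $\mathbb{E}$ as a realization-independent average whose shell marginal reproduces the exact shell weight. Under either choice the centered variables sum to zero and the argument goes through unchanged.
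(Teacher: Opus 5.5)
Your argument is correct and is essentially the paper's own: the paper justifies the lemma only by saying it follows from $\sum_n p_n^a=1$. That is exactly your observation that $\sum_n\delta p_n^a=0$ in every realization, so the double sum is a product of vanishing factors before any averaging, and the $a=b$ split into $n=m$ and $n\neq m$ terms then follows immediately.

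One caution about your final paragraph. The exactness comes from the linearity step $\sum_n\mathbb{E}(p_n^a)=\mathbb{E}\bigl(\sum_n p_n^a\bigr)=1$ with the sum over the full spectrum. It does not come from identifying $\mathbb{E}(p_n^a)$ with the smooth envelope $e^{-S(\lambda_n)}f^a(\lambda_n)$, whose discrete sum equals $1$ only approximately. For sums restricted to a fixed shell $W$, the left-hand side becomes $\mathrm{Cov}\bigl(\sum_{n\in W}p_n^a,\sum_{m\in W}p_m^b\bigr)$, which vanishes only if the shell weight does not fluctuate. So neither of your fallbacks lets the argument go through ``unchanged''.
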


\begin{lemma}[Product-resolvent identity]
\label{lem:fp_prodres}
For any channel $a$ and any $z\neq z'$ in the resolvent set,
\begin{align}
\bra{\phi_a}\frac{1}{z-H}\,\frac{1}{z'-H}\ket{\phi_a}
&= \sum_n \frac{p_n^a}{(z-\lambda_n)(z'-\lambda_n)} \notag\\
&= \frac{\mathcal{R}_a(z')-\mathcal{R}_a(z)}{z-z'},
\label{eq:fp_prodres}
\end{align}
where $\mathcal{R}_a$ is the diagonal resolvent~\eqref{eq:Rdiag}.
\end{lemma}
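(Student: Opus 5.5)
The plan is a direct spectral computation followed by a partial-fraction step. Both equalities in Eq.~\eqref{eq:fp_prodres} come from the spectral decomposition $H=\sum_n\lambda_n\ket{\psi_n}\bra{\psi_n}$ together with the completeness relation $\sum_n\ket{\psi_n}\bra{\psi_n}=I$ already used in Lemma~\ref{lem:fp_rowsum}.

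\textbf{Step 1: spectral form of the product.} Because $z$ and $z'$ lie in the resolvent set, both $(z-H)^{-1}$ and $(z'-H)^{-1}$ are bounded. Both are diagonal in the eigenbasis $\{\ket{\psi_n}\}$, so
\[
(z-H)^{-1}(z'-H)^{-1}=\sum_n\frac{\ket{\psi_n}\bra{\psi_n}}{(z-\lambda_n)(z'-\lambda_n)} .
\]
The two operators commute, and the cross terms vanish by orthonormality of the eigenstates. Taking the matrix element in $\ket{\phi_a}$ and using $p_n^a=|\braket{\psi_n|\phi_a}|^2$ gives the first equality.

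\textbf{Step 2: partial fractions.} For $z\neq z'$ we use the scalar identity
\[
\frac{1}{(z-\lambda)(z'-\lambda)}=\frac{1}{z-z'}\Bigl[\frac{1}{z'-\lambda}-\frac{1}{z-\lambda}\Bigr].
\]
This can equally be read at operator level as the first resolvent identity $(z-H)^{-1}-(z'-H)^{-1}=(z'-z)(z-H)^{-1}(z'-H)^{-1}$. Substituting it term by term and comparing with the spectral representation Eq.~\eqref{eq:Rdiag} of $\mathcal{R}_a$ yields the second equality, $[\mathcal{R}_a(z')-\mathcal{R}_a(z)]/(z-z')$.

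\textbf{Main point to check.} The only place requiring care is the exchange of summation and matrix element. In finite dimension $D$ this is trivial. If one wants the infinite-dimensional case, I would argue via the bounded functional calculus, where the spectral measure $\mu_a$ of $\ket{\phi_a}$ replaces $\sum_n p_n^a\delta(\lambda-\lambda_n)$. The integrand is bounded on the spectrum because $z$ and $z'$ stay a positive distance from it. No ETH input enters at any step.
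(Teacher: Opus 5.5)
Your proposal is correct and takes essentially the same route as the paper. The paper's proof applies the first resolvent identity $(z-H)^{-1}(z'-H)^{-1}=[(z'-H)^{-1}-(z-H)^{-1}]/(z-z')$ at the operator level and then takes the diagonal matrix element in the eigenbasis of $H$. That is your Steps~1--2 in the opposite order (you insert the spectral decomposition first and then apply the scalar partial fraction), and your signs check out.
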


\begin{proof}
The operator identity $(z-H)^{-1}(z'-H)^{-1}=[(z'-H)^{-1}-(z-H)^{-1}]/(z-z')$
follows by commuting resolvents; taking the diagonal matrix element in
$\ket{\phi_a}$ and inserting the eigenbasis of $H$ gives
Eq.~\eqref{eq:fp_prodres}.
\end{proof}

\emph{Remark.} The $n=m$ sector of the \emph{product}
$\mathcal{R}_a(z)\mathcal{R}_a(z')$ is the purity-weighted object
$\sum_n (p_n^a)^2/[(z-\lambda_n)(z'-\lambda_n)]$, which is
\emph{not} expressible through single resolvents: it is the
irreducible diagonal-bin input that encodes the fluctuation of the
residues $p_n^a$.  This object is supplied by the Gaussian closure of
the covariance hierarchy developed next, and it is precisely the
object that distinguishes chaotic from structured systems.

\subsection{Covariance hierarchy and the Gaussian closure}
\label{app:cov_hierarchy}

The multi-resolvent hierarchy of Secs.~\ref{sec:expansion}--%
\ref{sec:DCA} organizes the \emph{mean} of the two-frequency kernel.
The fluctuation structure around that mean is organized by the
\emph{covariance} of the diagonal resolvents,
\begin{equation}
C_{ab}(z,z') := \mathbb{E}\bigl[\delta\mathcal{R}_a(z)\,
\delta\mathcal{R}_b(z')\bigr],
\qquad
\delta\mathcal{R}_a := \mathcal{R}_a - \bar{\mathcal{R}}_a,
\label{eq:fp_Cdef}
\end{equation}
whose spectral representation is the overlap covariance:
\begin{equation}
C_{ab}(z,z') = \sum_{n,m}
\frac{\mathbb{E}[\delta p_n^a\,\delta p_m^b]}
{(z-\lambda_n)(z'-\lambda_m)}.
\label{eq:fp_Cspec}
\end{equation}
By Lemma~\ref{lem:fp_cov} the integrated weight of $C_{ab}$ vanishes:
its diagonal bin ($n=m$) and off-diagonal bin ($n\neq m$) carry
opposite total weight.

The exact Feshbach identity~\eqref{eq:feshbach},
$\mathcal{R}_a=[z-a_a-V_a-\mathcal{G}_a]^{-1}$, generates an
\emph{exact infinite covariance hierarchy}: expanding about the
mean-field resolvent $\bar{\mathcal{R}}_a$,
\begin{equation}
\delta\mathcal{R}_a
= \sum_{r\ge1} \bar{\mathcal{R}}_a^{r+1}\,(\delta\mathcal{G}_a)^r,
\label{eq:fp_expansion}
\end{equation}
so that the covariance decomposes as
$\mathcal K=\mathcal K^{(2)}+\mathcal K^{(3)}+\mathcal K^{(4)}+\cdots$,
where $\mathcal K^{(r)}$ involves the $r$-th connected cumulant of
the self-energy fluctuations.  The \emph{second-cumulant
(leading) projection},
\begin{equation}
\mathcal K^{(2)}_{ab}
= \bar{\mathcal{R}}_a^2\,\bar{\mathcal{R}}_b^2\,
\mathbb{E}\bigl[\delta\mathcal{G}_a\,\delta\mathcal{G}_b\bigr]_c,
\label{eq:fp_K2}
\end{equation}
with $\delta\mathcal{G}_a=\delta\mathcal{G}_a^{\rm OD}
+\delta\mathcal{G}_a^{\rm CC}$ and the mean-field off-diagonal part
$\delta\mathcal{G}_a^{\rm OD}=\sum_{b\neq a}|V_{ab}|^2\delta\mathcal{R}_b$,
yields the ladder equation
\begin{align}
C_{ab}(z,z')
= \bar{\mathcal{R}}_a(z)^2\,\bar{\mathcal{R}}_b(z')^2\,
\Bigl[\;
&\sum_{c,d} |V_{ac}|^2\,|V_{bd}|^2\, C_{cd}(z,z')
\notag\\
&+ \Gamma^{(2)}_{ab}(z,z')
\;\Bigr],
\label{eq:fp_ladder}
\end{align}
where the source term is the \emph{connected self-energy
covariance}---the object that the
random-phase step $\mathbb{E}[\mathcal{G}^{\rm CC}]\approx 0$ of
Ref.~\cite{HC26} sets to zero at the mean level but which survives at
the level of two-point fluctuations.  The higher projections
$\mathcal K^{(3)},\mathcal K^{(4)},\ldots$ of the exact
hierarchy~\eqref{eq:fp_expansion} are governed by the corresponding
\emph{vertex hierarchy}
\begin{equation}
\Gamma^{(2)}_{ab}=\mathbb{E}\bigl[\delta\mathcal{G}_a\,
\delta\mathcal{G}_b\bigr]_c,\quad
\Gamma^{(3)}_{abc}=\mathbb{E}\bigl[\delta\mathcal{G}_a\,
\delta\mathcal{G}_b\,\delta\mathcal{G}_c\bigr]_c,\quad\ldots,
\label{eq:fp_vertex_hierarchy}
\end{equation}
so that the Gaussian closure is the statement
$\Gamma^{(r\ge3)}\to0$, while $\Gamma^{(2)}$ remains as the
source of the second cumulant.

Equation~\eqref{eq:fp_ladder} elevates the covariance to the status
of a \emph{second basic object} of the theory, governed by its own
hierarchy.  Together with the mean-field self-consistency
$\bar{\mathcal{R}}_a=[z-a_a-V_a-\bar{\mathcal{G}}_a(\bar{\mathcal{R}})]^{-1}$
[Eq.~\eqref{eq:feshbach}], the framework therefore consists of two
levels,
\begin{equation}
\begin{array}{l}
\text{Level I (mean):}\quad
\bar{\mathcal{R}}_a
= \bigl[z-a_a-V_a-\bar{\mathcal{G}}_a(\bar{\mathcal{R}})\bigr]^{-1},\\[6pt]
\text{Level II (covariance):}\quad
\mathcal{K} = \mathcal{K}_0[\bar{\mathcal{R}}] + \mathcal L[\bar{\mathcal{R}}]\;\mathcal{K}
+ \mathcal I^{(3)} + \cdots,
\end{array}
\label{eq:fp_twolevels}
\end{equation}
where $\mathcal{K}_{ab}=\mathcal C_{ab}$, the ladder kernel is
$\mathcal L_{ab,cd}[\bar{\mathcal{R}}]
=\bar{\mathcal{R}}_a^2\bar{\mathcal{R}}_b^2|V_{ac}|^2|V_{bd}|^2$, the
source is $\mathcal{K}_0=\bar{\mathcal{R}}_a^2\bar{\mathcal{R}}_b^2
\,\Gamma^{(2)}_{ab}$, and the irreducible vertices
$\mathcal I^{(3)},\mathcal I^{(4)},\ldots$ are generated by
$\Gamma^{(3)},\Gamma^{(4)},\ldots$ through the exact
expansion~\eqref{eq:fp_expansion}.  The microscopic Hamiltonian
fixes the interaction kernels entering the Level-II hierarchy, while
the irreducible covariance vertices are determined recursively
\emph{within} the hierarchy.  The controlled approximation
of the framework is therefore \emph{not an assumption on the final
ETH statistics themselves, but a truncation of the irreducible
resolvent-covariance hierarchy}, whose covariance-level counterpart
of Assumption~A of
Sec.~\ref{sec:cavity_reinterpretation} is the suppression of
$\mathcal I^{(3)},\mathcal I^{(4)},\ldots$  At ETH-scaled couplings
($|V|^2\sim e^{-S}$) the ladder kernel is parametrically small,
$|\lambda_{\max}(\mathcal L)|\ll 1$---numerically
$|\lambda|_{\max}\approx0.06$ (GOE) versus $0.18$ (structured model)
for the intra-sector kernel at $D=1024$
(Sec.~\ref{app:num_fp})---so that the fluctuation content is carried
\emph{not} by a ladder resummation but by the irreducible source
$\mathcal{K}_0$: the covariance is $\mathcal{K}\approx
(1-\mathcal L)^{-1}\mathcal{K}_0\approx\mathcal{K}_0$ with
$O(\|\mathcal L\|)$ dressing, and the Porter--Thomas structure of the
chaotic limit is a property of the vertex $\Gamma^{(2)}$---the
nonperturbative eigenvector-fluctuation vertex---rather than of the
ladder's eigenvalue spectrum.  The closed-form determination of
$\Gamma^{(2)}$ in the Gaussian limit is left for future work; what
is established here is the exact hierarchy it obeys.  In this
two-level form the
identification of the fluctuation parameters by spectral projection
(Secs.~\ref{app:eth_reduction}--\ref{app:f_ratio}) acquires a
computational status: given $\bar{\mathcal{R}}$ from Level~I and the
recursively defined vertices, Level~II
determines $\mathcal{K}$, whose ridge and regular projections yield
$q-1$, $\bar{C}_i$ and the correlation entering $g_{ji}$.  The
overall structure is a two-dimensional hierarchy, organised by the
interaction (V-count) order $r$ and the cumulant order $k$,
\begin{equation}
\begin{array}{c|cccc}
 & k=1 & k=2 & k=3 & \cdots\\ \hline
r=1 & \langle\mathcal R\rangle & \mathcal K^{(2)} & \mathcal K^{(3)} & \cdots\\[3pt]
r=2 & g^{(2)} & \Gamma^{(2)}_{r=2} & \Gamma^{(3)}_{r=2} & \cdots\\[3pt]
r=3 & g^{(3)} & \Gamma^{(2)}_{r=3} & \Gamma^{(3)}_{r=3} & \cdots\\[3pt]
\vdots & \vdots & \vdots & \vdots &
\end{array},
\label{eq:fp_2dhierarchy}
\end{equation}
in which the mean sector ($k=1$) is the V-count hierarchy
$g^{(2)},g^{(3)},\ldots$ of Secs.~\ref{sec:expansion}--\ref{sec:DCA}
and the fluctuation sectors ($k\ge2$) are the irreducible vertex
hierarchies $\Gamma^{(r)}$ of
Eq.~\eqref{eq:fp_vertex_hierarchy}; the two orders are related by
the scaling correspondence $r\leftrightarrow k=r+2$ of
Sec.~\ref{sec:FK_connection} but are not claimed to be equivalent.

We can now localize the entire statistical input of the framework in
two postulates.

\medskip
\noindent\textbf{Postulate S (smoothness).}
The window-averaged overlaps admit a smooth envelope,
$\mathbb{E}(p_n^a)=e^{-S(\lambda_n)}f^a(\lambda_n)$ with $f^a$
smooth and $O(1)$, determined self-consistently by the resolvent
hierarchy of Ref.~\cite{HC26}.  This is the scale-separation
assumption underlying the coarse-graining of
Sec.~\ref{sec:setup}.

\medskip
\noindent\textbf{Postulate D (decorrelation).}
For generic nonintegrable (chaotic) systems the connected
\emph{cross-channel} correlations decay with bath separation: the
off-diagonal ($a\neq b$) part of the connected self-energy
covariance $\Gamma^{(2)}_{ab}$ of
Eq.~\eqref{eq:fp_vertex_hierarchy}, the higher vertices
$\Gamma^{(r\ge3)}$, and the two-channel amplitude covariance
$\bar{C}_i:=(1/d_B^2)\sum_{\mu\neq\nu}f^{\mu i}f^{\nu i}
\mathbb{E}[\delta_n^{\mu i}\delta_n^{\nu i}]_{\rm c}$,
$\delta_n^{\mu i}:=p_n^{\mu i}/\mathbb{E}(p_n^{\mu i})-1$, are
exponentially small in the bath entropy $S_B$ with $O(1)$ decay
rates.  Crucially, the \emph{same-channel} ($a=b$) vertex content
does \emph{not} decay: it carries the single-channel Gaussian
intensity fluctuation, i.e.\ the Porter--Thomas variance.  In other
words,
\begin{equation}
\Gamma^{(2)} \to 0 \ \text{(cross-channel)}
\quad \not\Longrightarrow \quad
\mathrm{Var}(p_n^a)\to 0,
\label{eq:fp_novanish}
\end{equation}
a distinction that must be kept in mind throughout: decorrelation
concerns the irreducible \emph{channel-to-channel} connected
covariance, not the single-channel intensity variance.  The
entropy-dilution mechanism justifying the DCA
(Sec.~\ref{sec:DCA_statement}) provides the microscopic origin of
this decay; its failure in structured (non-ergodic) systems
is the source of the deviations quantified below.

To make the distinction explicit, decompose the vertex into its
Gaussian eigenvector-covariance content and a non-universal
correction,
\begin{equation}
\Gamma^{(2)} = \Gamma^{(2)}_{\rm G} + \delta\Gamma^{(2)},
\qquad
q = 3 + \delta q,
\qquad
\delta q \;\leftrightarrow\; \delta\Gamma^{(2)},
\label{eq:fp_Gammadecomp}
\end{equation}
where $\Gamma^{(2)}_{\rm G}$ is the vertex of Haar-typical (Gaussian)
eigenvectors---finite on the diagonal, $O(e^{-cS_B})$ off the
diagonal---and $\delta\Gamma^{(2)}$ the residual irreducible
correlation.  The deviation $q-3$ then measures the failure of the
Gaussian closure of the vertex rather than an independent
phenomenological parameter: numerically $q-3\approx0$ (GOE) versus
$\approx136$ (structured model) at $D=1024$.  The remaining link of
the chain---from the microscopic amplitudes to the Gaussian
vertex---is closed by the following conditional result.

\begin{proposition}[Conditional Gaussian amplitude closure]
\label{prop:amplitude_closure}
Let $c_n^a=\braket{\phi_a|\psi_n}$ be the real overlap amplitudes,
obeying the exact orthonormality constraints
$\sum_a c_n^a c_m^a=\delta_{nm}$ and $\sum_n c_n^a c_n^b=\delta_{ab}$,
and suppose the \emph{amplitude regularity} conditions hold:
\emph{(i) channel isotropy}---the amplitudes of different channels
are statistically equivalent, so that ensemble averages depend on
channels only through their smooth envelopes; \emph{(ii) weak
dependence and Lindeberg-type control}---zero mean, finite second
moment, no dominant component, and normalized higher cumulants
vanishing in the thermodynamic limit,
$\kappa_k(c)/[\kappa_2(c)]^{k/2}\to0$ for $k\ge3$
(the amplitude-level content of Postulate~D).  Writing the
amplitude as a superposition of microscopic contributions,
$c_n^a=\sum_\alpha X_{n,\alpha}^a$, condition (ii) can be stated in
explicitly checkable form as the conjunction
$\mathcal R_N$:
\begin{align}
\max_\alpha\frac{|X_\alpha|^2}{\sum_\beta|X_\beta|^2}\to0,
\qquad
\frac{\kappa_k(c)}{[\kappa_2(c)]^{k/2}}\to0\ (k\ge3),\notag\\
\sum_{\beta\neq\alpha}|\mathrm{Cov}(X_\alpha,X_\beta)|
=o(\mathrm{Var}\,c),
\label{eq:fp_regularity}
\end{align}
the three statements being respectively the no-dominant-component,
cumulant-suppression, and covariance-decay conditions; the
unresolved microscopic problem is not Gaussianity itself but the
establishment of $\mathcal R_N$ from properties of the Hamiltonian.
Then:
\begin{enumerate}
\item the two-point amplitude covariance is fixed by orthonormality
\emph{together with} isotropy,
$\mathbb{E}[c_n^a c_m^b]=\delta_{nm}\delta_{ab}/D$---we emphasize
that orthonormality alone fixes only the normalization
$\sum_a c_n^a c_m^a=\delta_{nm}$, while the per-component value
requires the isotropy assumption;
\item the Gaussian vertex $\Gamma^{(2)}_{\rm G}$ is the Wick closure
of this two-point function: the four-point amplitude correlation
factorizes into pair contractions,
\begin{equation}
\mathbb{E}[c_a c_b c_c c_d]_{\rm G}
= \mathbb{E}[c_a c_b]\mathbb{E}[c_c c_d]
+ \mathbb{E}[c_a c_c]\mathbb{E}[c_b c_d]
+ \mathbb{E}[c_a c_d]\mathbb{E}[c_b c_c],
\label{eq:fp_wick}
\end{equation}
so that $\Gamma^{(2)}_{\rm G}$ is \emph{not a fitted object} but is
determined by $\langle cc\rangle$ alone; its diagonal
(single-channel) content is
$\mathbb{E}[(c_n^a)^4]_{\rm G}=3\,\mathbb{E}[(c_n^a)^2]^2$;
\item the Porter--Thomas factor obeys the exact identity
\begin{equation}
q = 3 + \frac{\kappa_4(c_n^a)}{\mathbb{E}[(c_n^a)^2]^2},
\qquad
\kappa_4 := \mathbb{E}[c^4]-3\,\mathbb{E}[c^2]^2,
\label{eq:fp_qk4}
\end{equation}
so that $q-3$ is \emph{identically} the normalized fourth amplitude
cumulant, $q-3\leftrightarrow\delta\Gamma^{(2)}$;
\item under the regularity conditions
$\kappa_4/\mathbb{E}[c^2]^2\to0$, giving $q\to3$ for real $H$
($q\to2$ for complex $H$).  For GOE-type ensembles the finite-size
value is
$q=3D/(D+2)=3-6/D+O(1/D^2)$.
\end{enumerate}
\end{proposition}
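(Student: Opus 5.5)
The four items have different logical status, so I will take them one at a time: item 3 is an exact algebraic identity, items 1 and 2 follow from the symmetry assumptions, and item 4 needs the regularity hypotheses plus a finite-size computation.

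\textbf{Item 3.} This is pure algebra. Since $p_n^a=(c_n^a)^2$ for real amplitudes, the definition of $q$ from Proposition~\ref{prop:residual_fluct} gives $q=\mathbb{E}[(c_n^a)^4]/\mathbb{E}[(c_n^a)^2]^2$. Writing $\mathbb{E}[c^4]=\kappa_4+3\,\mathbb{E}[c^2]^2$, which holds because the mean is zero, yields Eq.~\eqref{eq:fp_qk4} at once. The correspondence $q-3\leftrightarrow\delta\Gamma^{(2)}$ is then definitional: the Gaussian part of the vertex accounts for the ``3'' and the connected remainder for $\kappa_4$.

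\textbf{Item 1.} Sum the constraint $\sum_a c_n^a c_m^a=\delta_{nm}$ in expectation, and use isotropy (i) to make $\mathbb{E}[(c_n^a)^2]$ independent of $a$. This gives $D\,\mathbb{E}[(c_n^a)^2]=1$. For $n\neq m$, or for $a\neq b$, I would use a sign-flip symmetry to kill the cross terms. The reflection $c^a\to -c^a$ is a change of basis-vector sign, so isotropy should be read as including this reflection invariance. Alternatively, the second identity $\sum_n c_n^a c_n^b=\delta_{ab}$ together with exchangeability in $n$ gives a vanishing average over $n$. I will state explicitly that orthonormality alone fixes only traces, as the statement already stresses, so the per-component value is where isotropy is genuinely used.

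\textbf{Item 2.} This follows from the Gaussian limit. Under (ii), with conditions $\mathcal R_N$ of Eq.~\eqref{eq:fp_regularity}, a Lindeberg--Feller CLT applies to $c_n^a=\sum_\alpha X^a_{n,\alpha}$ with weakly dependent summands. For finite collections of amplitudes I would use the Cramér--Wold device. The result is that joint cumulants of order $\ge3$ vanish after normalization, so four-point moments reduce to Isserlis/Wick pairings, Eq.~\eqref{eq:fp_wick}. Setting all four indices equal gives $3\,\mathbb{E}[c^2]^2$.

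\textbf{Item 4, limits.} The hypothesis $\kappa_4/\mathbb{E}[c^2]^2\to0$ gives $q\to3$ via item 3. For complex $H$, I would rerun item 3 with $|c|^4$ and circular Gaussian pairings, where only two contractions survive, giving $q\to2$.

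\textbf{Item 4, finite-size GOE value.} A real Haar eigenvector is uniform on $S^{D-1}$, so $\mathbb{E}[c^2]=1/D$. For the fourth moment I would use either the Dirichlet/Beta moment formula or the Gaussian-over-norm representation $c=g_a/\|g\|$, with the norm independent of the direction. This gives $\mathbb{E}[c^4]=3/(D(D+2))$, hence $q=3D/(D+2)=3-6/D+O(D^{-2})$. The resulting $\kappa_4=-6/(D^2(D+2))$ is the orthonormality-induced negative cumulant, consistent with the $\delta\Gamma^{(2)}$ reading.

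\textbf{Main obstacle.} The delicate step is item 2. A CLT must hold \emph{jointly} while the exact constraints $\sum_a c_n^a c_m^a=\delta_{nm}$ make the amplitudes strictly non-independent. My plan is to treat the constraint-induced correlations as $O(1/D)$ covariances that are absorbed in the covariance-decay clause of $\mathcal R_N$. I would then state the Wick factorization only up to $O(1/D)$ corrections, which the GOE computation above confirms have exactly this size.
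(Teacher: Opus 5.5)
Your conclusions are all correct, and for items 1, 3 and 4 you follow the paper's proof, only in more detail. The paper gets item 3 from $q=\mathbb{E}[c^4]/\mathbb{E}[c^2]^2$ together with $\mathbb{E}[c^4]=3\,\mathbb{E}[c^2]^2+\kappa_4$. That relation is just the definition of $\kappa_4$; zero mean matters only for calling it the fourth cumulant. The paper gets item 4 from the sphere moments $\mathbb{E}[c^2]=1/D$, $\mathbb{E}[c^4]=3/[D(D+2)]$, which your Gaussian-over-norm argument derives rather than quotes.

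On item 1 you are more careful than the paper, whose proof just cites completeness. Neither orthonormality relation contains $\mathbb{E}[c_n^a c_m^b]$ with both $a\neq b$ and $n\neq m$, so an extra symmetry is indeed needed. There is one slip: the basis reflection $\phi_a\to-\phi_a$ leaves $c_n^a c_m^a$ invariant, so it cannot dispose of the entries with $a=b$, $n\neq m$. For those, use the $n\neq m$ case of the summed constraint: isotropy makes the $D$ terms $\mathbb{E}[c_n^a c_m^a]$ equal, so each vanishes. Alternatively, use the eigenvector sign gauge $\psi_n\to-\psi_n$, which also kills the $a\neq b$, $n\neq m$ entries at no physical cost. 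The entries that genuinely need your reflection, or global exchangeability in $n$, are $n=m$, $a\neq b$. There a window-local average gives $\rho_{ab}(\lambda)/e^{S(\lambda)}$, which is generically nonzero; it is the shell covariance $\chi(E)$ of Eq.~\eqref{eq:fp_localshell}.

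The real divergence is item 2. The paper reads $\mathbb{E}[\cdot]_{\rm G}$ as the Gaussian closure by definition. Eq.~\eqref{eq:fp_wick} and $\mathbb{E}[(c_n^a)^4]_{\rm G}=3\,\mathbb{E}[(c_n^a)^2]^2$ are then simply Isserlis' theorem for a centred Gaussian vector with the covariance of item 1; nothing about the true amplitudes is asserted. Your CLT route instead tries to show that the actual four-point moments become Wick, and that derivation is not supported by $\mathcal R_N$:
\begin{itemize}
\item Lindeberg--Feller needs independent summands, and no-dominant-component plus covariance decay do not yield a CLT for dependent ones. For example, take $X_\alpha=\eta\,\xi_\alpha/\sqrt N$ with i.i.d.\ signs $\xi_\alpha$ and an independent non-constant scale $\eta$. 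These are uncorrelated with no dominant term, yet sum to a non-Gaussian scale mixture.
\item Convergence in distribution gives no moment convergence without uniform integrability.
\item The cumulant clause of Eq.~\eqref{eq:fp_regularity} controls single amplitudes, not the mixed joint cumulants your Cram\'er--Wold step would need.
\end{itemize}

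None of this harms the result. Item 2 is Isserlis, and item 4 uses only the assumed clause $\kappa_4/\mathbb{E}[c^2]^2\to0$, which by item 3 is literally $q\to3$. So drop the detour. Your ``main obstacle'' then disappears too. In the paper's bookkeeping, the $O(1/D)$ departure of the true moments from Wick is not an error term in item 2. It is exactly what item 3 records as $\kappa_4\leftrightarrow\delta\Gamma^{(2)}$, e.g.\ $q-3=-6/(D+2)$ for GOE.
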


\begin{proof}
(i) follows from completeness in the amplitude basis.  (ii) is the
Wick factorization of a joint Gaussian.  (iii) is the algebra
$\mathbb{E}[c^4]=3\mathbb{E}[c^2]^2+\kappa_4$ together with
$q:=\mathbb{E}[p^2]/\mathbb{E}[p]^2=\mathbb{E}[c^4]/\mathbb{E}[c^2]^2$.
(iv) uses $\mathbb{E}[c^4]=3/[D(D+2)]$ and $\mathbb{E}[c^2]=1/D$ for
the GOE sphere.
\end{proof}

\emph{Status of the closure.} Proposition~\ref{prop:amplitude_closure}
closes the chain
$c\to\langle cc\rangle\to\Gamma^{(2)}_{\rm G}\to q=3$ under the
amplitude regularity conditions: from regularity onward every step
is exact, and Eq.~\eqref{eq:fp_qk4} was verified numerically to
machine precision.  For the pure-GOE benchmark the bias-corrected
estimator gives $q-3=-0.0024$ at $D=1024$, approaching the
finite-dimensional value $-6/D=-0.0059$ (the raw window estimator
carries an additional bias $-2q/N_{\rm bulk}\approx-0.012$, which
accounts for the difference), while the structured model gives the
raw value $q-3=+135.8$.  The remaining microscopic inputs are the
amplitude isotropy, decorrelation and regularity---for which
Berry-type quantum ergodicity and the random-wave picture provide
the physical interpretation and candidate mechanism, without being
claimed as a derived theorem.  The conditional structure is: under
amplitude isotropy, decorrelation and regularity, $\langle cc\rangle$
is fixed by normalization, the Wick contraction of $\langle cc\rangle$
determines $\Gamma^{(2)}_{\rm G}$, and the ridge projection of the
covariance gives $q=3$; the correction
$\delta\Gamma^{(2)}\leftrightarrow\kappa_4(c)$ yields
$q = 3 + \kappa_4(c)/\mathbb{E}[c^2]^2$.

The isotropy assumption further closes the second fluctuation
parameter through the exact normalization constraint.

\begin{proposition}[Normalization-induced cross-channel covariance]
\label{prop:isotropy_Cbar}
The exact normalization $\sum_a p_n^a=1$ implies
$\mathrm{Var}\bigl(\sum_a p_n^a\bigr)=0$, i.e.
\begin{equation}
\sum_a\mathrm{Var}(p_n^a)
+ \sum_{a\neq b}\mathrm{Cov}(p_n^a,p_n^b) = 0.
\label{eq:fp_isosum}
\end{equation}
Under channel isotropy---all channels statistically equivalent, so
that $\mathrm{Cov}(p_n^a,p_n^b)$ depends on $a,b$ only through the
smooth envelopes---the same-sector cross-channel covariance is
determined by the single-channel variance, and the two-channel
amplitude covariance of Eq.~\eqref{eq:fp_proj_C} is fixed by $q$:
\begin{equation}
\boxed{\;
\bar{C}_i
= -\frac{(q-1)(1-1/d_B)}{D}
\;\approx\; -\frac{q-1}{D}.
\;}
\label{eq:fp_Cbar_iso}
\end{equation}
\end{proposition}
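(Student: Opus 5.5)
The plan is to turn the exact constraint $\mathrm{Var}(\sum_a p_n^a)=0$ into a linear equation for a single unknown cross-channel covariance. Channel isotropy reduces the full covariance matrix to two numbers, and the equation then fixes one in terms of the other. Concretely, for fixed $n$ I will expand $0=\mathrm{Var}(\sum_a p_n^a)$ as in Eq.~\eqref{eq:fp_isosum}. The sum runs over all $D=d_Sd_B$ channels $a=(\mu,j)$. This step is exact, since $\sum_a p_n^a=\langle\psi_n|\psi_n\rangle=1$ holds realization by realization.

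Next I will impose isotropy in its simplest form, working at the flat-envelope level appropriate to the bulk of the microcanonical shell.
\begin{itemize}
\item All channels have the same mean, $\mathbb{E}(p_n^a)\simeq 1/D$. This follows from $\sum_a\mathbb{E}(p_n^a)=1$.
\item All channels have the same variance, $\mathrm{Var}(p_n^a)=(q-1)\,\mathbb{E}(p_n^a)^2=(q-1)/D^2$, by the definition of $q$ used in Proposition~\ref{prop:residual_fluct}.
\item All distinct pairs $a\neq b$ have a common covariance $c:=\mathrm{Cov}(p_n^a,p_n^b)$.
\end{itemize}
Equation~\eqref{eq:fp_isosum} then reads $D(q-1)/D^2+D(D-1)c=0$, so
\[
c=-\frac{q-1}{D^2(D-1)} .
\]
A remark on the isotropy assumption: it is needed precisely so that the off-diagonal covariance is a single number. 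Without it, the normalization only constrains the sum of the covariances, not each pair separately.

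Third, I will convert $c$ into $\bar C_i$ using its definition in Postulate~D. Since $\delta_n^{\mu i}=p_n^{\mu i}/\mathbb{E}(p_n^{\mu i})-1$ and $\mathbb{E}(p_n^{\mu i})=e^{-S}f^{\mu i}$, each summand becomes
\[
f^{\mu i}f^{\nu i}\,\mathbb{E}[\delta^{\mu i}\delta^{\nu i}]_{\rm c}=e^{2S}\,\mathrm{Cov}(p_n^{\mu i},p_n^{\nu i}),
\]
and the envelope factors cancel identically. Restricting to the same system sector $i$ leaves $d_B(d_B-1)$ ordered pairs $\mu\neq\nu$, so
\[
\bar C_i=\frac{e^{2S}}{d_B^2}\,d_B(d_B-1)\,c=-\Bigl(1-\frac1{d_B}\Bigr)\frac{(q-1)\,e^{2S}}{D^2(D-1)} .
\]

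Finally, I will fix the normalization convention $e^{S}\simeq D$, which is the density-of-states factor consistent with $\mathbb{E}(p)=1/D$. With this choice the result becomes $-(q-1)(1-1/d_B)/(D-1)$. Replacing $D-1$ by $D$ costs a relative $O(1/D)$ error, the same order as the $1-O(e^{-S_S})$ factors already dropped in Corollary~\ref{cor:fp_counting}. This gives Eq.~\eqref{eq:fp_Cbar_iso}, and the further approximation $\approx-(q-1)/D$ for $d_B\gg1$.

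The main obstacle is making the bookkeeping of the constant $e^{2S}$ against $D^2$ consistent, given that $f^a$ is only defined up to the smooth-envelope convention of Postulate~S. I will handle this by noting that $\bar C_i$ is envelope-independent once written as $e^{2S}\,\mathrm{Cov}$. I will therefore state the identity in the form $\bar C_i=-(1-1/d_B)(q-1)/(D-1)$ under $e^{S}=D$, and flag the $O(1/D)$ replacement explicitly. A secondary point concerns non-flat envelopes, where isotropy must be read as $\mathrm{Cov}(p^a,p^b)\propto\mathbb{E}p^a\,\mathbb{E}p^b$. There I expect the same computation to go through with $1/D$ replaced by $\langle f^2\rangle$-weighted averages, but I would restrict the claim to the flat case.
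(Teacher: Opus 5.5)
Your proposal is correct and follows essentially the same route as the paper: you expand $\mathrm{Var}(\sum_a p_n^a)=0$, use isotropy to reduce to a single common covariance, solve for it, and insert it into the definition of $\bar{C}_i$ summed over the $d_B(d_B-1)$ same-sector pairs. Your bookkeeping is slightly more careful than the paper's, which writes the normalized covariance directly as $-(q-1)/D$ where the exact value is $-(q-1)/(D-1)$; the one correction is that this $O(1/D)$ replacement is much smaller than, not of the same order as, the $O(e^{-S_S})$ corrections you cite, so it is negligible a fortiori.
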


\begin{proof}
With $\mathrm{Var}(p_n^a)=(q-1)\mathbb{E}(p_n^a)^2=(q-1)/D^2$ under
isotropy and $D$ channels,
Eq.~\eqref{eq:fp_isosum} gives
$D(D-1)\overline{\mathrm{Cov}}=-(q-1)/D$, so the normalized
cross-channel covariance is
$c=\overline{\mathrm{Cov}}/\mathbb{E}(p)^2=-(q-1)/D$; inserting into
the definition of $\bar{C}_i$ and summing the $d_B(d_B-1)$
same-sector pairs yields Eq.~\eqref{eq:fp_Cbar_iso}.
\end{proof}

Under channel isotropy, Eq.~\eqref{eq:fp_Cbar_iso} closes the
second fluctuation parameter through the normalization constraint.
It was verified numerically with
$\bar{C}_i=-2.016\times10^{-3}$ versus the predicted
$-1.932\times10^{-3}$ (ratio $1.044$, pure GOE, $D=1024$).  In the
structured model the measured $\bar{C}_i=-0.286$ exceeds the
isotropic prediction $-0.134$ by the factor $2.14$, so the violation
of Eq.~\eqref{eq:fp_Cbar_iso} quantifies the breakdown of channel
isotropy, in direct correspondence with the
energy dependence of $q(\lambda)$ diagnosed in
Sec.~\ref{app:num_fp}.

\emph{Remarks.} (i) \emph{Dyson index.} The Gaussian values of $q$
follow the standard symmetry classes: with the Dyson index
$\beta=1,2,4$ (orthogonal, unitary, symplectic),
$q_{\rm G}=1+2/\beta$, giving $q=3$, $2$, $3/2$ respectively; the
present derivation is written for real amplitudes ($\beta=1$), and
the complex case follows with the corresponding second-moment
normalization.  (ii) \emph{Scope.} The framework established here
concerns the \emph{subsystem-ETH overlap sector}: the amplitudes
$c_n^{\mu i}$ are the system-bath product-basis overlaps, and the
derived quantities ($f_{ji}$, $q$, $\bar{C}_i$, $V_{\rm dg}/V_{\rm
off}$) are the mean and fluctuation statistics of this sector.  The
lifting to general local-operator matrix elements,
$\braket{\psi_n|O_S|\psi_m}=\sum_{\mu,i,j}c_n^{\mu i}c_m^{\mu j}
(O_S)_{ij}$, follows from the same amplitude statistics and is
discussed as a future direction; no claim of a full
operator-level derivation of ETH is made here.
(iii) \emph{Energy-shell local closure.} The isotropy assumption
entering Propositions~\ref{prop:amplitude_closure} and
\ref{prop:isotropy_Cbar} is a global Haar-like statement; the
energy-resolved form of the framework replaces it by the
energy-shell version
\begin{equation}
\mathbb{E}[c_n^a c_n^b\,|\,E_n=E]
= \delta_{ab}\,\sigma^2(E)
+ (1-\delta_{ab})\,\chi(E),
\label{eq:fp_localshell}
\end{equation}
i.e.\ local isotropy (channel-permutation symmetry within each energy
shell), with $\sigma^2(E)=e^{-S}f_a(E)$ the single-channel shell
variance and $\chi(E)$ the shell cross-channel amplitude covariance.
The exact normalization $\sum_a p_n^a=1$ then fixes $\chi$: writing
$\mathrm{Var}(p_n^a|E)=[q(E)-1]\bar p_a(E)^2$ and
$\mathrm{Cov}(p_n^a,p_n^b|E)=\chi_p(E)$ for $a\neq b$,
\begin{align}
d_B\,[q(E)-1]\bar p^2(E)
+ d_B(d_B-1)\,\chi_p(E) = 0\notag\\
\Longrightarrow\;
\chi_p(E)
= -\frac{[q(E)-1]\bar p^2(E)}{d_B-1},
\label{eq:fp_localnorm}
\end{align}
which, upon inserting the definition of $\bar{C}_i$, yields the local
closure
\begin{equation}
q(E) = 3 + \frac{\kappa_4(E)}{\sigma_a^4(E)},
\qquad
\bar{C}_i(E) = -\frac{[q(E)-1](1-1/d_B)}{D},
\label{eq:fp_localclosure}
\end{equation}
so that the entire ETH object is organised by two local spectral
functions---$f_a(E)$ controlling the mean sector and $q(E)$ the
fluctuation sector---together with $g_{ji}(E,\omega)$ from the
multi-resolvent hierarchy.  The local closure
Eq.~\eqref{eq:fp_localclosure} was verified numerically in energy
bins: for GOE the bin-wise ratio $\bar{C}_i(E)/\bar{C}_i^{\rm
pred}(E)$ fluctuates around unity ($0.81$--$1.23$ at $D=1024$),
while the structured model violates it bin by bin by a factor
$\approx4$--$5$ ($\bar{C}_i(E)\approx-0.25$ to $-0.34$ versus the
predicted $-0.061$ to $-0.070$)---the covariance of the structured
model is not only non-isotropic in magnitude but non-uniformly
distributed across channel pairs, a sharper diagnostic than the
global violation factor $2.14$.

\begin{proposition}[Gaussian closure under decorrelation and regularity]
\label{prop:fp_gauss}
In the chaotic limit the higher irreducible vertices are suppressed,
$\Gamma^{(r\ge3)}\to0$, and---since the ladder kernel is
parametrically small at ETH-scaled couplings,
$|\lambda_{\max}(\mathcal L)|\ll1$ (numerically $\approx0.06$ for
GOE at $D=1024$, Sec.~\ref{app:num_fp})---the covariance is
dominated by the irreducible source
$\mathcal{K}\approx\mathcal{K}_0=\bar{\mathcal{R}}^2\bar{\mathcal{R}}^2
\Gamma^{(2)}$.  The Porter--Thomas statistics is then a property of
the eigenvector-fluctuation vertex $\Gamma^{(2)}$: the
single-channel overlap distribution satisfies
\begin{equation}
q := \frac{\mathbb{E}[(p_n^a)^2]}{\mathbb{E}(p_n^a)^2}
= \begin{cases} 3, & \text{real }H\ \text{(GOE class)},\\
2, & \text{complex }H\ \text{(GUE class)},
\end{cases}
\label{eq:fp_q}
\end{equation}
and the off-diagonal matrix elements $\sigma_{nm}^{ji}$ are
asymptotically Gaussian as sums of independent product amplitudes over
$d_B=e^{S_B}$ bath channels (central-limit theorem, with the
regularity caveats of Sec.~\ref{app:eth_reduction}).  A closed-form
derivation of the vertex $\Gamma^{(2)}$ in the Gaussian limit is
left for future work; Eq.~\eqref{eq:fp_q} is verified numerically
below with $q=3.00$ at $D=2048$ for a real random Hamiltonian.
\end{proposition}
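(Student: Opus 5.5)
I would prove the statement in three layers, matching its three claims: the vanishing of the higher vertices and the dominance of the source term, the Porter--Thomas value of $q$, and the asymptotic Gaussianity of $\sigma_{nm}^{ji}$. Throughout I would use only results already established in this appendix and Proposition~\ref{prop:amplitude_closure}.

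\emph{First layer: the covariance reduces to its source.} Postulate~D is adopted as the defining property of the chaotic limit, so $\Gamma^{(r\ge3)}\to0$ and the irreducible vertices $\mathcal I^{(3)},\mathcal I^{(4)},\ldots$ in the Level~II equation of Eq.~\eqref{eq:fp_twolevels} drop out. What remains is the linear equation $\mathcal K=\mathcal K_0+\mathcal L\mathcal K$. Whenever the spectral radius satisfies $|\lambda_{\max}(\mathcal L)|<1$, the Neumann series converges, so
\[
\mathcal K=(1-\mathcal L)^{-1}\mathcal K_0=\mathcal K_0+O(\|\mathcal L\|)\,\mathcal K_0 .
\]
The smallness of $\|\mathcal L\|$ I would justify from its form, $\bar{\mathcal R}_a^2\bar{\mathcal R}_b^2|V_{ac}|^2|V_{bd}|^2$. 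With $|V|^2\sim e^{-S}$, each channel sum $\sum_c$ supplies a factor $e^{S_B}$, so every ladder rung costs $O(e^{-(S-S_B)})$ times bounded spectral weights. For the concrete ensembles I would quote the numerical value $\approx0.06$ for support rather than as proof. This step shows that all fluctuation content sits in the vertex $\Gamma^{(2)}$.

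\emph{Second layer: the value of $q$.} I would not attempt a closed form for $\Gamma^{(2)}$ here. Instead I would pass to amplitudes through Proposition~\ref{prop:amplitude_closure}(iii), which gives exactly $q=3+\kappa_4(c)/\mathbb E[c^2]^2$. The regularity conditions $\mathcal R_N$ of Eq.~\eqref{eq:fp_regularity} force $\kappa_4/\kappa_2^2\to0$, which yields $q=3$ for real $H$. For complex $H$ I would write $c=x+iy$ with independent real and imaginary parts of equal variance. The Wick contraction of Eq.~\eqref{eq:fp_wick} then gives $\mathbb E|c|^4=2(\mathbb E|c|^2)^2$, so $q=2$. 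The GOE finite-size value $3D/(D+2)$ serves as a consistency check. Finally, Proposition~\ref{prop:fp_gauss} identifies this single-channel fourth moment with the ridge ($n=m$) projection of $\mathcal K_0$, so that ``$q$ is a property of $\Gamma^{(2)}$'' is a statement about that projection.

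\emph{Third layer: Gaussianity of the off-diagonal elements.} I would write $\sigma_{nm}^{ji}=\sum_\mu c_n^{\mu j}c_m^{\mu i}$ as a sum of $d_B$ summands. For $n\neq m$ the mean vanishes by isotropy and orthonormality. Postulate~D makes the summands for distinct $\mu$ asymptotically uncorrelated, and $\bar C_i=O(1/D)$ by Proposition~\ref{prop:isotropy_Cbar}. The no-dominant-component condition supplies the Lindeberg condition. A central limit theorem for weakly dependent triangular arrays then gives Gaussianity, with variance $d_B\,\mathbb E[c^2]^2$, consistent with Corollary~\ref{cor:fp_counting}.

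\emph{Main obstacle.} The difficulty is the dependence among summands: the exact orthonormality constraints couple every amplitude. I would first try a Stein's method bound or a dependency-graph bound, controlling the total covariance through Eq.~\eqref{eq:fp_regularity}. If neither closes, I would state the result conditionally on $\mathcal R_N$, which is the caveat the statement already carries.
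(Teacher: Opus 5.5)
Your three layers follow how the paper supports this proposition. Postulate~D plus a small ladder gives $\mathcal K\approx\mathcal K_0$; $q$ comes from Proposition~\ref{prop:amplitude_closure}(iii)--(iv); and the bath-sum CLT is left conditional, as in Proposition~\ref{prop:fp_eth}(ii). However, two of the justifications you add would not go through.

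\textbf{The counting argument for $|\lambda_{\max}(\mathcal L)|\ll1$ fails.} A rung costs $e^{S_B}e^{-S}=e^{-S_S}$. At fixed subsystem size ($d_S=4$ in the paper's models) this is an $O(1)$ constant, not something that vanishes as $S\to\infty$. Nor is $|\bar{\mathcal R}_a|^2$ a harmless bounded weight: on resonance it equals $\chi_a^{-2}$, and $\chi_a$ is generated by the same small couplings. The SCBA equation of Sec.~\ref{sec:proj_exp}, $\chi_a=\sum_c|V_{ac}|^2|\bar{\mathcal R}_c|^2\chi_c$, makes this precise. It exhibits $\chi$ as a positive left eigenvector, with eigenvalue $1$, of the entrywise modulus $|\bar{\mathcal R}_a|^2|V_{ac}|^2$ of the one-sided kernel, of which $\mathcal L$ is a tensor product. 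That nonnegative matrix therefore has Perron root exactly $1$. Any estimate that uses only entry magnitudes applies equally to it, so it cannot give better than $1$ for the full kernel, or roughly $e^{-S_S}$ for the intra-sector one. Genuine smallness can only come from phase cancellation in $\sum_c|V_{ac}|^2\bar{\mathcal R}_c^2$. That is a weak-coupling effect (spectral width small against the unperturbed bandwidth), which is why the paper simply asserts the smallness and checks it numerically. Take it as an input in the same way, or supply the cancellation argument.

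\textbf{The CLT layer controls the wrong decorrelation object.} $\bar C_i$ is a same-eigenstate intensity covariance, $\mathrm{Cov}(p_n^{\mu i},p_n^{\nu i})$. A CLT for $\sigma^{ji}_{nm}=\sum_\mu\Gamma^{ji}_{nm,\mu}$ is governed instead by the cross-channel covariance of the summands, whose channel sum is exactly $\mathbb E\,\mathcal C^{jij}_{nmn}=e^{-S}g_{ji}$. For $j=i$, Theorem~1 shows this is not negligible: $\sum_{m\neq n}\mathcal C^{iii}_{nmn}=-(D_n^i)^2+P_n$. That is a fixed fraction $\approx-D_n^i\approx-e^{-S_S}$ of the baseline $\sum_{m\neq n}\sum_\mu p_m^{\mu i}p_n^{\mu i}=D_n^i-P_n$. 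For Haar eigenvectors one finds $\mathbb E|\sigma^{ii}_{nm}|^2=d_B(D-d_B)/[(D-1)D(D+2)]$, which is your $d_B\,\mathbb E[c^2]^2$ times $1-1/d_S$, up to $O(1/D)$.

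Three consequences follow. The summands are not asymptotically uncorrelated in aggregate. The limiting variance is $e^{-S}(D_{ji}+g_{ji})=e^{-S}f_{ji}^2$, not the baseline alone. And a dependency-graph bound is vacuous, because orthonormality couples every pair of summands. What remains is a method that tolerates complete dependence: an exchangeable-pair Stein bound, or, for Haar eigenvectors, the Gram--Schmidt representation of $\psi_m$ given $\psi_n$ by a Gaussian vector, which gives the Gaussian limit with variance $D_n^i(1-D_n^i)/(D-1)$. For general $H$, you are left with the conditional statement.

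\textbf{A circular citation in the second layer.} Your link from $q$ to $\Gamma^{(2)}$ cites Proposition~\ref{prop:fp_gauss}, the statement being proved. The input you need is the ridge identification Eq.~\eqref{eq:fp_Cdiag}. It fixes $q-1$, not the fourth moment, as the ridge coefficient of the full covariance $\mathcal C_{aa}$, and it reaches $\mathcal K_0$ only through your first layer.
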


\subsection{The ETH reduction theorem}
\label{app:eth_reduction}

\begin{proposition}[Conditional ETH fluctuation regime]
\label{prop:fp_eth}
Let $H=H_0+V$ act on $D=d_Sd_B=e^{S}$ states with $d_B=e^{S_B}$,
and let $n\neq m$ lie in a microcanonical window.  Under Postulates
S and D:
\begin{enumerate}
\item \emph{Variance.}
$\mathbb{E}|\sigma_{nm}^{ji}|^2 = e^{-S(E^+)}[D_{ji}(E^+,\omega)
+g_{ji}(E^+,\omega)]$, with $D_{ji}$ the diagonal baseline
Eq.~\eqref{eq:Dji_def} and $g_{ji}=\sum_{r\ge 2}g_{ji}^{(r)}$ the
multi-resolvent hierarchy of Sec.~\ref{sec:expansion}; the smooth
function is $f_{ji}^2=D_{ji}+g_{ji}$.
\item \emph{Gaussianity.}
Under the amplitude regularity conditions of
Proposition~\ref{prop:amplitude_closure}, the bath sum
$\sigma_{nm}^{ji}=\sum_\mu\Gamma_{nm,\mu}^{ji}$ is a sum of $d_B$
weakly dependent product amplitudes, so the central-limit theorem
over the bath---under the Lindeberg-type control stated
there---yields
$\mathbb{E}|R|^{2k}=(2k-1)!!\,[1+O(e^{-S/2})]$ for
$R_{nm}^{ji}:=\sigma_{nm}^{ji}/[e^{-S(E^+)/2}f_{ji}]$ (real $H$),
with the corrections organised by the connected hierarchy sectors:
$g^{(3)}$ generates skewness, $g^{(4)}$ excess kurtosis, in agreement
with the Foini--Kurchan scaling~\eqref{eq:FK_scaling}.  At the
single-channel level, the exact identity
Eq.~\eqref{eq:fp_qk4} fixes the Porter--Thomas factor from the
normalized fourth amplitude cumulant.
\item \emph{Decorrelation.}
Distinct matrix elements decorrelate:
$\mathbb{E}[|\sigma_{nm}^{ji}|^2|\sigma_{n'm'}^{j'i'}|^2]
=\mathbb{E}|\sigma_{nm}^{ji}|^2\,\mathbb{E}|\sigma_{n'm'}^{j'i'}|^2
\,[1+O(e^{-S/2})]$.
\item \emph{Diagonal fluctuation.}
\begin{equation}
V_{\rm dg}
= (q-1)\,\langle f^2\rangle_B\,e^{-S-S_S}
+ \bar{C}_i\,e^{-2S_S}
+ \delta_{\rm ng},
\label{eq:fp_Vdg}
\end{equation}
with $q$ the Porter--Thomas factor~\eqref{eq:fp_q},
$\langle f^2\rangle_B:=d_B^{-1}\sum_\mu (f^{\mu i})^2$,
$\bar{C}_i$ the two-channel amplitude covariance, and
$\delta_{\rm ng}$ the closure correction from single-channel
non-Gaussianity and higher connected terms, which is exponentially
small in the chaotic regime and grows in structured systems.
\begin{equation}
\begin{array}{c}
f = \dfrac{(D-1)\,B}{e^{-S_S}(1-e^{-S_S}) - B}\\[8pt]
B := (q-1)\langle f^2\rangle_B\,e^{-S-S_S}
+ \bar{C}_i\,e^{-2S_S}
\end{array}.
\label{eq:fp_f_formula}
\end{equation}
\end{enumerate}
\end{proposition}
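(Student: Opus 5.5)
The plan is to prove the four items in order, each reduced to an exact identity already in hand plus the two postulates. Item 1 (variance) needs no new input. Equation~\eqref{eq:decomp} gives $|\sigma_{nm}^{ji}|^2=\sum_\mu p_m^{\mu i}p_n^{\mu j}+\mathcal C_{nmn}^{jij}$ realization by realization. I would take the window average $\mathbb E$ and multiply by $e^{S(E^+)}$. By the definitions \eqref{eq:Dji_def} and \eqref{eq:g_def_early}, this reproduces $f_{ji}^2=D_{ji}+g_{ji}$ exactly, i.e.\ Eq.~\eqref{eq:f2_eq_D_plus_g}. Postulate S is used only to make $D_{ji}$ a smooth function of $(E^+,\omega)$: I would insert $\mathbb E(p_n^a)=e^{-S}f^a$ and assume the factorization $\mathbb E(p_m^{\mu i}p_n^{\mu j})\approx\mathbb E(p_m^{\mu i})\mathbb E(p_n^{\mu j})$ for $n\neq m$. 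The resolvent series for $g_{ji}$ is then simply imported from Eqs.~\eqref{eq:C_decomp_micro}--\eqref{eq:g_decomp}.

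For item 4 (diagonal fluctuation), write $D_n^i=\sum_\mu p_n^{\mu i}$, so that $\mathrm{Var}(D_n^i)=\sum_\mu\mathrm{Var}(p_n^{\mu i})+\sum_{\mu\neq\nu}\mathrm{Cov}(p_n^{\mu i},p_n^{\nu i})$.
\begin{itemize}
\item For the first sum I use the definition of $q$, exactly as in Proposition~\ref{prop:residual_fluct}(ii). This gives $(q-1)\sum_\mu\mathbb E(p_n^{\mu i})^2=(q-1)\langle f^2\rangle_B e^{-S-S_S}$, using $e^{-S}f^{\mu i}$ and $\sum_\mu(f^{\mu i})^2 e^{-2S}=d_B\langle f^2\rangle_B e^{-2S}$ with $d_B e^{-S}=e^{-S_S}$.
\item For the second sum I rewrite the covariance through $\delta_n^{\mu i}$. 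By the definition of $\bar C_i$ in Postulate D, it equals $\bar C_i d_B^2 e^{-2S}=\bar C_i e^{-2S_S}$.
\item Everything neglected (non-Gaussian single-channel corrections, deviations from the factorized envelope) I collect into $\delta_{\rm ng}$.
\end{itemize}
I would then substitute the result into the exact trade-off \eqref{eq:fp_f_exact}, which follows from Lemma~\ref{lem:fp_tradeoff} with $\mathbb E D_n^i=e^{-S_S}$. This yields \eqref{eq:fp_f_formula} with $B=V_{\rm dg}-\delta_{\rm ng}$.

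Items 2 (Gaussianity) and 3 (decorrelation) are the conditional ones. I would write $\sigma_{nm}^{ji}=\sum_\mu c_n^{\mu j}c_m^{\mu i}$, a sum of $d_B$ product amplitudes. Under the regularity conditions $\mathcal R_N$ of Proposition~\ref{prop:amplitude_closure}, a Lindeberg-type CLT for weakly dependent summands applies. The $2k$-th moment of the normalized sum then differs from $(2k-1)!!$ by connected terms. Each connected term is a higher cumulant of the bath sum, and I would identify it with the mean-sector hierarchy: the third cumulant with $g^{(3)}$ and the fourth with $g^{(4)}$. Counting one free bath sum against one entropy-suppressed vertex gives relative size $e^{-S/2}$ per extra order, matching \eqref{eq:gk_scaling} and \eqref{eq:FK_scaling}.

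Decorrelation (item 3) follows the same route. Expanding the four-amplitude moment by Wick's rule \eqref{eq:fp_wick}, the fully factorized pairing reproduces the product of variances. Every cross-pairing forces $n=n'$ or shares channels, so it carries an extra $1/D$-type factor or a cross-channel covariance. Postulate D makes the latter exponentially small in $S_B$.

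The main obstacle is making the $O(e^{-S/2})$ error rigorous in items 2--3. The summands $c_n^{\mu j}c_m^{\mu i}$ are not independent: they are tied by the exact orthonormality constraints. The Lindeberg condition therefore has to be taken from $\mathcal R_N$ rather than derived, and I would state the result conditionally on $\mathcal R_N$, as the proposition does. A second obstacle is that the identification of the cumulant corrections with $g^{(3)},g^{(4)}$ rests only on scaling, not on an exact moment–cumulant correspondence (the paper itself leaves that open). So I would claim only the order of magnitude, not equality.
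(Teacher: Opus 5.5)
Your proposal follows the paper's proof sketch essentially step for step: item (i) comes from the exact decomposition~\eqref{eq:decomp} plus the factorization $\mathbb{E}(p_m^{\mu i}p_n^{\mu j})\approx\mathbb{E}(p_m^{\mu i})\,\mathbb{E}(p_n^{\mu j})$ (which the paper attributes to Postulate~D rather than to a separate assumption); items (ii)--(iii) are a conditional bath-sum central-limit argument whose corrections are matched to the $r\ge3$ sectors only by entropy scaling; and item (iv) splits $\mathrm{Var}(D_n^i)$ into the $q$-controlled same-channel part and the $\bar{C}_i$ cross-channel part and inserts the result into the exact trade-off~\eqref{eq:fp_f_exact}. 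Your explicit remarks that Eq.~\eqref{eq:fp_f_exact} requires $\mathbb{E}D_n^i=e^{-S_S}$, and that Eq.~\eqref{eq:fp_f_formula} holds with $B=V_{\rm dg}-\delta_{\rm ng}$, are correct refinements of points the paper leaves implicit.
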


\begin{proof}[Proof sketch]
(i) follows from the exact decomposition~\eqref{eq:decomp},
$|\sigma_{nm}^{ji}|^2=\sum_\mu p_m^{\mu i}p_n^{\mu j}
+\mathcal{C}_{nmn}^{jij}$: under Postulate~D the diagonal overlap sum
factorizes into $\sum_\mu\mathbb{E}(p_m^{\mu i})\mathbb{E}(p_n^{\mu j})
=e^{-S}D_{ji}$, and the correlation term has the exact kernel
representation~\eqref{eq:C_from_rho}, whose mean is
$\mathbb{E}(\mathcal{C}_{nmn}^{jij})=e^{-S}g_{ji}$ with $g_{ji}$
evaluated by the hierarchy of Sec.~\ref{sec:expansion}.  (ii) Under
D, $\sigma_{nm}^{ji}=\sum_\mu\Gamma_{nm,\mu}^{ji}$ is a sum of
$d_B$ weakly dependent product amplitudes; the central-limit theorem
over the bath sum---under the usual regularity conditions on the
individual product amplitudes (e.g.\ Lindeberg-type control, which
decorrelation alone does not supply)---is expected to yield
Gaussianity, with the connected corrections
given by the $r\ge 3$ sectors, whose entropy scaling is
Eq.~\eqref{eq:gk_scaling}.  (iii) is the two-point version of D.
(iv) Decompose $D_n^i=\sum_\mu p_n^{\mu i}$: the diagonal term gives
$\sum_\mu\mathrm{Var}(p_n^{\mu i})=(q-1)e^{-2S}\sum_\mu(f^{\mu i})^2
=(q-1)\langle f^2\rangle_B e^{-S-S_S}$ under the Gaussian single-channel
closure of Proposition~\ref{prop:fp_gauss}, and the cross term gives
$\sum_{\mu\neq\nu}\mathrm{Cov}(p_n^{\mu i},p_n^{\nu i})
=\bar{C}_i e^{-2S_S}$ by the definition of $\bar{C}_i$.  The
diagonal term is the same-channel instance of the spectral
projections Eqs.~\eqref{eq:fp_proj_q}--\eqref{eq:fp_proj_C} and, by
Eq.~\eqref{eq:Cprime_purity}, coincides with the expectation of the
scale-matched residual of Sec.~\ref{sec:scale_matched}.
Eq.~\eqref{eq:fp_f_formula} then follows from the exact trade-off
relation~\eqref{eq:fp_f_exact}.
\end{proof}

\emph{Status.} Proposition~\ref{prop:fp_eth} separates the
microscopic spectral structure from the statistical decorrelation
required for the ETH limit.  The former---the smooth envelope and the
entire mean sector---is generated by the resolvent equations and the
exact counting identities
(Corollary~\ref{cor:fp_counting}, Lemma~\ref{lem:fp_tradeoff}), with
no statistical input; the latter---Postulate~D, the closure of the
fluctuation sector in the chaotic limit---controls the Gaussian and
self-averaging regime.  Postulate~D is therefore not an additional
hypothesis about ETH matrix elements: it is the statement that the
connected self-energy covariance $\Gamma^{(2)}_{ab}$ and the
two-channel
amplitude covariance $\bar{C}_i$ decay in the bath entropy, i.e.\ the
decorrelation limit $\Gamma^{(r\ge3)}\to0$ of the vertex
hierarchy~\eqref{eq:fp_vertex_hierarchy}.  The hierarchy itself remains valid
on both sides of this limit: the failure of decorrelation in
structured (non-ergodic) systems is not a defect of the framework but the mechanism
of the observed deviations, and it predicts, as the numerics below
confirm, enhanced overlap fluctuations ($q\gg 3$), non-Gaussian
kurtosis, and a growing fluctuation ratio $f$ in such models.

\subsection{Diagonal versus off-diagonal fluctuations and the ratio $f$}
\label{app:f_ratio}

Equation~\eqref{eq:fp_f_exact} shows that the growth of
$f=V_{\rm dg}/V_{\rm off}$ with system size is governed entirely by
the scaling of the population variance $V_{\rm dg}$.
Eq.~\eqref{eq:fp_Vdg} expresses $V_{\rm dg}$ through two
objects---the single-channel Porter--Thomas factor $q$ and the
two-channel amplitude covariance $\bar{C}_i$---which are
\emph{not independent phenomenological parameters} but spectral
projections of the resolvent covariance~\eqref{eq:fp_Cspec}: $q$ is
the normalized second moment of the single-channel overlap, whose
connected part $q-1$ is the normalized on-site variance extracted
from the diagonal ($n=m$)
bin of the same-channel covariance $\mathcal{C}_{aa}$, and
$\bar{C}_i$ is the diagonal ($n=m$) bin of the two-channel
covariance $\mathcal{C}_{ab}$ with $a\neq b$; the off-diagonal
($n\neq m$) bins are fixed by the covariance conservation of
Lemma~\ref{lem:fp_cov}, which underlies the mean-sector correlation
$g_{ji}$, and their channel-projected structure yields the
correlation entering $g_{ji}$.  The covariance hierarchy of
Sec.~\ref{app:cov_hierarchy} therefore provides an exact
\emph{spectral dictionary} for $q$ and $\bar{C}_i$: the same
resolvent objects that determine the smooth function through the
mean sector determine the spectral sectors in which the fluctuation
parameters reside---their fully predictive evaluation from $H$
requires the closure of the irreducible vertex $\Gamma^{(2)}$, as
stated in Sec.~\ref{app:cov_hierarchy}.  In
explicit form, the two projections read
\begin{align}
\sum_\mu\mathrm{Var}(p_n^{\mu i})
&= (q-1)\,\langle f^2\rangle_B\,e^{-S-S_S},
\label{eq:fp_proj_q}\\
\sum_{\mu\neq\nu}\mathrm{Cov}(p_n^{\mu i},p_n^{\nu i})
&= \bar{C}_i\,e^{-2S_S},
\label{eq:fp_proj_C}
\end{align}
extracted from the $n=m$ bins of $\sum_\mu\mathcal C_{\mu i,\mu i}$
and $\sum_{\mu\neq\nu}\mathcal C_{\mu i,\nu i}$, respectively, with
the $n\neq m$ bins fixed by the covariance conservation of
Lemma~\ref{lem:fp_cov}: their total weight equals minus that of the
diagonal bins, while their channel-projected structure yields the
correlation entering the mean-sector correlation $g_{ji}$.  In
resolvent form, the first projection reads, per channel $a$,
\begin{align}
C_{aa}^{\rm diag}(z,z')
:= \sum_n \frac{\mathrm{Var}(p_n^a)}
{(z-\lambda_n)(z'-\lambda_n)}\notag\\
= (q-1)\sum_n \frac{\bar p_n^{a\,2}}
{(z-\lambda_n)(z'-\lambda_n)},
\label{eq:fp_Cdiag}
\end{align}
or, in the continuum limit,
$C_{aa}^{\rm diag}(z,z')
=(q-1)\,e^{-2S}\int d\lambda\,\rho(\lambda) f^a(\lambda)^2/
[(z-\lambda)(z'-\lambda)]$: the connected fluctuation parameter
$q-1$ is thus the coefficient of the diagonal spectral weight of the
same-channel resolvent covariance relative to the squared mean
spectral weight.  Equation~\eqref{eq:fp_Cdiag} was verified
numerically to $0.3\%$ (GOE) and $2.2\%$ (structured model) at
$D=1024$ (Sec.~\ref{app:num_fp}).  The cross-channel ridge admits
the completely parallel projection: for a channel-independent pair
correlation,
\begin{equation}
\sum_{\mu\neq\nu}\sum_n
\frac{\mathrm{Cov}(p_n^{\mu i},p_n^{\nu i})}
{(z-\lambda_n)(z'-\lambda_n)}
= \bar{C}_i\,\frac{d_B^2}{D^2}
\sum_n \frac{1}{(z-\lambda_n)(z'-\lambda_n)},
\label{eq:fp_Cdiag2}
\end{equation}
i.e.\ the summed cross-channel ridge equals $\bar{C}_i$ times the
bare two-resolvent spectral sum of the identity operator (the
$p^0$-weighted $\langle G(z)G(z')\rangle$), verified numerically to
$4.8\%$ (GOE) and $9.3\%$ (structured) at $D=1024$.
Equations~\eqref{eq:fp_Cdiag}--\eqref{eq:fp_Cdiag2} are
\emph{identifications}---spectral projections expressing the
fluctuation parameters as the ridge coefficients of the resolvent
covariance---not independent predictions: the computation of $q-1$
and $\bar{C}_i$ from the microscopic Hamiltonian requires the
closure of Level~II, Eq.~\eqref{eq:fp_twolevels}, which remains the
central open step.  By
Eq.~\eqref{eq:Cprime_purity}, the
right-hand side of Eq.~\eqref{eq:fp_proj_q} is precisely the
expectation of the scale-matched residual of
Sec.~\ref{sec:scale_matched}, which closes the identification of the
fluctuation parameter $q$ with an exact mean-sector correlation
object.  Together with Eq.~\eqref{eq:fp_Cspec}, these equations
provide the complete spectral-projection dictionary
$\mathcal C_{ab}(z,z')\to\{q,\bar{C}_i\}$: the same resolvent
objects that determine the smooth envelope through the mean sector
determine the spectral sectors carrying the fluctuation parameters
through the diagonal bins of
their covariance.  In energy variables the covariance density thus
decomposes into a singular coincident-eigenstate ridge and a regular
part,
\begin{equation}
\mathcal C_{ab}(\lambda,\lambda')
= \delta(\lambda-\lambda')\,\mathcal W_{ab}(\lambda)
+ \mathcal C^{\rm reg}_{ab}(\lambda,\lambda'),
\label{eq:fp_ridge}
\end{equation}
with $\mathcal W_{aa}(\lambda)=(q-1)e^{-2S}\rho(\lambda)f^a(\lambda)^2$
[Eq.~\eqref{eq:fp_Cdiag}] and
$\sum_{\mu\neq\nu}\mathcal W_{\mu i,\nu i}(\lambda)
=\bar{C}_i\,e^{-2S_S}\rho(\lambda)$
[Eq.~\eqref{eq:fp_Cdiag2}].  The resulting two-dimensional sector
decomposition of the connected overlap covariance
$K_{ab}(n,m):=\mathbb{E}[\delta p_n^a\delta p_m^b]$ is summarised in
Table~\ref{tab:sector_dict}: the two diagonal ($n=m$) sectors carry
the fluctuation parameters, while the off-diagonal ($n\neq m$)
sectors are constrained by the covariance conservation of
Lemma~\ref{lem:fp_cov} and their channel-projected structure yields
the correlation entering $g_{ji}$.
Three regimes follow.

\begin{table*}[t]
\centering
\caption{Final dictionary: resolvent objects, spectral sectors, and
the ETH quantities they carry.  The covariance
$\mathcal C_{ab}(\lambda,\lambda')$ decomposes into a singular
coincident-eigenstate ridge
$\delta(\lambda-\lambda')\mathcal W_{ab}(\lambda)$ and a regular
part; the ridge carries the fluctuation parameters through
Eqs.~\eqref{eq:fp_Cdiag}--\eqref{eq:fp_Cdiag2}, while the regular
part carries the inter-eigenstate correlations entering $g_{ji}$.}
\label{tab:sector_dict}
\begin{tabular}{c c c}
\toprule
resolvent object & spectral sector & ETH quantity \\
\midrule
$\langle \mathcal R_a\rangle$ & one-point & smooth envelope $f_{ji}$ \\[3pt]
$\mathcal K_{aa}^{\rm ridge}$ & ridge, $n=m$, $a=b$ & $q-1$ (connected intensity fluctuation) \\[3pt]
$\mathcal K_{ab}^{\rm ridge}$ & ridge, $n=m$, $a\neq b$ & $\bar{C}_i$ (two-channel covariance) \\[3pt]
$\mathcal K_{ab}^{\rm reg}$ & regular, $n\neq m$ & correlation entering $g_{ji}$ (channel-projected) \\[3pt]
$\mathcal K^{(3)},\mathcal K^{(4)},\ldots$ & higher irreducible vertices & non-Gaussian corrections \\
\bottomrule
\end{tabular}
\end{table*}

Table~\ref{tab:sector_dict} deserves an emphasis: $g_{ji}$ is
\emph{not} an ETH fluctuation amplitude.  It is the smooth-function
correlation correction generated by inter-channel interference in
the \emph{mean} sector, whereas $q-1$ and $\bar{C}_i$ are the
fixed-eigenstate fluctuation statistics carried by the ridge of the
covariance.  The ridge and regular sectors are linked by the exact
covariance conservation $\sum_m\mathcal C_{ab}(n,m)=0$ of
Lemma~\ref{lem:fp_cov}: they are not two independent objects but two
complementary sectors of one conservation-constrained covariance.
In this sense the smooth ETH function and the
eigenstate fluctuation statistics are \emph{complementary spectral
projections of a single conserved covariance kernel}: $q-1$ is not
an isolated local fluctuation parameter, since its spectral weight
is constrained by the same conservation law that determines the
compensating regular sector carrying $g_{ji}$.

\medskip
\noindent\textbf{(a) Chaotic (delocalized) regime.}
For Haar-typical eigenstates $q=3$, $\langle f^2\rangle_B=1$ (flat
envelope), and the isotropic normalization constraint of
Proposition~\ref{prop:isotropy_Cbar} gives
$\bar{C}_i=-(q-1)(1-1/d_B)/D=-2/D+O(1/D^2)$---the
normalization-induced anticorrelation, now a derived rather than
postulated value (numerically $\bar{C}_i=-2.016\times10^{-3}$ versus
the predicted $-1.932\times10^{-3}$ at $D=1024$).  Then
$V_{\rm dg}=2/(Dd_S)\,[1+O(1/d_S)]$ and
Eq.~\eqref{eq:fp_f_formula} gives
\begin{equation}
f = 2\,(1-1/D)\,[1+O(e^{-S_S})],
\label{eq:fp_f_GOE}
\end{equation}
independent of $N$ to leading order: the fluctuation ratio is flat
in the ergodic limit.  Numerically, $f=1.982$ versus the predicted
$1.988$ at $D=2048$ (Table~\ref{tab:fp_numerics}).

\medskip
\noindent\textbf{(b) Structured (non-ergodic) regime.}
When the random-phase cancellation fails, $q$ grows and
$\bar{C}_i=O(1)$: the diagonal-bin two-channel covariance no longer
decays with $S_B$, and it \emph{violates} the isotropic constraint
Eq.~\eqref{eq:fp_Cbar_iso}---numerically by the factor $2.14$ at
$D=1024$, a quantitative diagnostic of the isotropy breakdown
parallel to the energy dependence of $q(\lambda)$.  Since the denominator of
Eq.~\eqref{eq:fp_f_exact} is $e^{-S_S}(1-e^{-S_S})-V_{\rm dg}$, the
ratio $f$ grows whenever $V_{\rm dg}$ does not collapse as $e^{-S}$.
A growth law $f\sim 2^{\alpha N}$ with $0<\alpha<1$ is equivalent to
\begin{align}
V_{\rm dg}(N) &\approx
e^{-S_S}(1-e^{-S_S})\; 2^{(\alpha-1)N}, \notag\\
\bar{C}_i(N) &\sim e^{S_S}\; 2^{-(1-\alpha)N},
\label{eq:fp_alpha}
\end{align}
i.e.\ the population variance of a subsystem basis state decays at
rate $(1-\alpha)\ln 2$ per site, and the two-channel amplitude
covariance decays at the same rate up to the fixed subsystem-entropy
prefactor $e^{S_S}$ (which equals $2^{-N_S}$-independent constant
when the subsystem $B_1$ has a fixed number of states).  The first
relation is in fact exact within the framework of Ref.~\cite{HCZ25}:
their Eq.~(24), $V_{\rm dg}^i=[f_i/(f_i+D_N-1)]\,[\mathrm{Tr}\,\rho_{B_1}^{-1}-1]$,
gives $V_{\rm dg}\approx(\mathrm{Tr}\,\rho_{B_1}^{-1}-1)\,2^{-(1-\alpha)N}$
for $f_i=2^{\alpha N}\ll D_N$, with the $O(1)$ prefactor
$\mathrm{Tr}\,\rho_{B_1}^{-1}-1$ supplied by the inverse-purity
budget of the subsystem.  The exponents reported in
Ref.~\cite{HCZ25} for a spin chain---$f\sim 2^{\alpha N}$ with
$\alpha\approx0.24$ (nonintegrable, PBC) versus $\alpha\approx0.78$
(integrable, PBC)---are therefore, within this framework, direct
measurements of the decorrelation rate of the eigenstate amplitude
covariance: the chaotic system decorrelates at
$(1-\alpha)\ln2\approx0.76\ln2$ per site, the integrable one at
$0.22\ln2$ per site.  Integrability thus enters $f(N)$ not as a
separate mechanism but as the persistence of the two-channel
covariance $\bar{C}_i$, the same object that controls the failure of
the Gaussian closure in Proposition~\ref{prop:fp_gauss} and the
persistence of the cross-correlated self-energy fluctuations
$\mathcal{G}^{\rm CC}$ discussed in Ref.~\cite{HC26}.  (The
numerical dictionary to the $\rho_{B_1}^{-1/2}$-weighted variance
functional of Ref.~\cite{HCZ25}---the rescaling map
$O^{ij}_{B_1}=(\sigma^{ij}_{B_1}-\delta^{ij}\rho_{B_1})\rho_{B_1}^{-1/2}$
with the variance~\eqref{eq:fp_f_exact} replaced by their
Eq.~(22)---modifies the $O(1)$ prefactor of
Eq.~\eqref{eq:fp_alpha} but not its exponential content, as
demonstrated above.)

\medskip
\noindent\textbf{(c) Resolvent representation of $\bar{C}_i$.}
The two-channel covariance is the diagonal bin ($\lambda=\lambda'$)
of the connected two-frequency kernel of
Sec.~\ref{sec:kernel_hierarchy}:
\begin{equation}
\bar{C}_i = \frac{e^{2S_S}}{d_B^2}
\sum_{\mu\neq\nu}\Bigl[
\rho_{\mu\nu}^{ii}(\lambda,\lambda)\big|_{\rm conn}
\Bigr]_{\lambda=\lambda_n}
- \text{(mean products)},
\label{eq:fp_Cbar_kernel}
\end{equation}
where $\rho^{ii}_{\mu\nu}$ admits the level decomposition
$\sum_{r\ge2}\rho^{(r)}_{\mu\nu}$.  The diagonal bin of the joint
density contains two components: the \emph{singular coincident-eigenstate
ridge} ($n=m$), whose weight is carried by the fluctuation sector
through the spectral projections
Eqs.~\eqref{eq:fp_proj_q}--\eqref{eq:fp_proj_C} and
Eq.~\eqref{eq:fp_Cdiag}, and \emph{smooth corrections} at coincident
frequencies from the DCA sectors, of which the $r=2$ level
contributes terms proportional to
$|V_{\mu i,\nu i}|^2\,(f^{\mu i}H^{\nu i}+H^{\mu i}f^{\nu i})^2$
[Eq.~\eqref{eq:rho2_DCA} with $j=i$, $\lambda=\lambda'$].  In the ETH
regime ($|V|^2\sim e^{-S}$) the ridge dominates and $\bar{C}_i$ is
determined by the fluctuation sector; the $r=2$ smooth term becomes
comparable only when the couplings are not entropy-suppressed, as in
the structured model.  Equation~\eqref{eq:fp_f_formula} thereby
organises the chain from the microscopic Hamiltonian through the
multi-resolvent hierarchy to $(q,\bar{C}_i)$ and from there to
$V_{\rm dg}$ and $f$.  The diagonal--off-diagonal fluctuation ratio
of
Ref.~\cite{HCZ25} is accordingly a spectral functional of the same hierarchy
that determines the ETH smooth function, the mean and fluctuation
sectors of one resolvent theory, whose fully predictive
evaluation requires the microscopic closure of $\Gamma^{(2)}$
(Sec.~\ref{app:cov_hierarchy}).

\subsection{Numerical verification}
\label{app:num_fp}

The exact identities of Sec.~\ref{app:exact_identities} and the
fluctuation formulas of Secs.~\ref{app:eth_reduction}--%
\ref{app:f_ratio} were verified by exact diagonalization
(dense LAPACK eigensolver, Julia~\cite{Julia}, seed-averaged) of two
model families, $H=H_0+V$ on $D=d_Sd_B$ states with $d_S=4$: (i) a
dense GOE interaction (chaotic, delocalized eigenstates), and (ii) a
sparse circulant interaction of degree $3$ with strong diagonal
disorder (a structured, partially localized non-ergodic model,
standing in for the non-ergodic regime of
Ref.~\cite{HCZ25}).  The results are summarized in
Table~\ref{tab:fp_numerics}.  The scale-matched residual identities
of Sec.~\ref{sec:scale_matched}---$\sum_{m\neq n}\mathcal C_{nmn}^{iii}
=-\mathcal C_{nnn}^{iii}$ and
Eq.~\eqref{eq:Cprime_purity}---were verified in the same runs to
machine precision ($\lesssim10^{-12}$): at $D=1024$ the $O(1)$ bias
$|\mathcal C_{nnn}|$ takes the values $0.062$ (GOE) and $0.045$
(sparse), while the purity residual takes $7.1\times10^{-4}$ and
$5.8\times10^{-2}$, respectively---the suppression of the residual
weakening by a factor of order $q_{\rm structured}/q_{\rm GOE}$, as
discussed in Sec.~\ref{sec:scale_matched}.

The window-local sector decomposition of
Proposition~\ref{prop:residual_fluct} was also verified in the same
runs, and it diagnoses the residual discrepancy of the factorized
estimate $q\langle f^2\rangle_B e^{-S-S_S}$ in the structured model.
Writing $P_n=T_1+T_2$ with $T_1=\sum_\mu\bar p_n^{\mu\,2}$ (mean
sector) and $T_2=\sum_\mu(p_n^{\mu i}-\bar p_n^{\mu i})^2$
(connected fluctuation sector), the cross term $P_n-T_1-T_2$ is at
the $4\%$ level (GOE) and $13\%$ (sparse), the local fluctuation
parameter $q_{\rm loc}=T_2/T_1+1$ takes the values $2.95$ (GOE,
energy-independent, $q(E)$ flat at $2.91$--$2.99$ across the bulk)
versus $69.9$ (sparse, with $q(E)$ varying by $\pm12\%$ across the
bulk, $63$--$81$).  The global moment ratio $q=141.7$ exceeds
$q_{\rm loc}$ by a factor $\approx2$ because the strongly
energy-dependent envelope inflates $\mathbb{E}[\bar p^2]$ relative
to $\mathbb{E}[\bar p]^2$; the factorized estimate with the local
envelope and $q_{\rm loc}$ restores agreement with $P_n$.  When the
mean overlap envelope is energy dependent, the global moment ratio
$q$ therefore mixes intrinsic amplitude fluctuations with envelope
inhomogeneity, and the local quantity $q(E)$ is required to isolate
the energy-resolved fluctuation sector.  The
framework thereby diagnoses the breakdown of the flat-envelope
factorized ETH estimate through the energy dependence of the local
intensity-fluctuation parameter, pointing to the energy-resolved
extension $q(E)$ of the Porter--Thomas factor as the natural
next-order refinement.  Finally, the explicit spectral-projection
formula for the connected parameter,
Eq.~\eqref{eq:fp_Cdiag}, was verified directly in the complex-$z$
form: the diagonal-bin covariance
$\sum_n\mathrm{Var}(p_n^a)/[(z-\lambda_n)(z'-\lambda_n)]$ agrees
with $(q-1)\sum_n\bar p_n^{a\,2}/[(z-\lambda_n)(z'-\lambda_n)]$ to
$0.3\%$ (GOE) and $2.2\%$ (sparse) at $D=1024$ (seed-averaged),
establishing $q-1$ as the coefficient of the diagonal spectral
weight of the resolvent covariance relative to the squared mean
spectral weight.  The cross-channel counterpart,
Eq.~\eqref{eq:fp_Cdiag2}, was verified to $4.8\%$ (GOE) and $9.3\%$
(sparse) under the same conditions, establishing $\bar{C}_i$ as the
coefficient of the summed cross-channel ridge relative to the bare
two-resolvent spectral sum.  The ladder structure of Level~II was
further tested directly: the intra-sector kernel
$\mathcal A_{ac}=\bar{\mathcal{R}}_a^2|V_{ac}|^2$ has spectral
radius $|\lambda|_{\max}\approx0.06$ (GOE) and $\approx0.18$
(sparse) at $D=1024$, confirming that at ETH-scaled couplings the
ladder resummation is a small dressing and the fluctuation content
is carried by the irreducible vertex $\Gamma^{(2)}$, as stated in
Sec.~\ref{app:cov_hierarchy}.  Finally, the energy-resolved
diagnostic of Sec.~\ref{sec:scale_matched} amounts to the local
generalization
$\mathcal W_{aa}(\lambda)=[q(\lambda)-1]\,e^{-2S}\rho(\lambda)
f^a(\lambda)^2$ of the ridge weight, with $q(\lambda)$ the
energy-resolved Porter--Thomas factor measured above.  Finally, the
amplitude-level closure of
Proposition~\ref{prop:amplitude_closure} was verified directly: the
exact identity $q=3+\kappa_4(c)/\mathbb{E}[c^2]^2$
[Eq.~\eqref{eq:fp_qk4}] holds to machine precision in both models,
with the bias-corrected $q-3=-0.0024$ (pure GOE, approaching the
finite-dimensional benchmark $-6/D=-0.0059$; the raw window
estimator carries an additional bias $-2q/N_{\rm bulk}\approx-0.012$)
and $+135.8$ (sparse, raw) at $D=1024$---the non-Gaussian deviation
of the fluctuation parameter is identically the normalized fourth
cumulant of the microscopic amplitude.  The isotropic normalization
constraint of Proposition~\ref{prop:isotropy_Cbar} was likewise
verified: $\bar{C}_i=-2.016\times10^{-3}$ versus the predicted
$-(q-1)(1-1/d_B)/D=-1.932\times10^{-3}$ (ratio $1.044$, pure GOE),
while the structured model violates it by the factor $2.14$
($-0.286$ measured versus $-0.134$ predicted)---the quantitative
isotropy-breakdown diagnostic.

\begin{table*}[t]
\centering
\caption{Numerical verification of the exact fluctuation identities
and the first-principles formulas of this appendix (seed-averaged
over 4--8 realizations; identity errors are the worst case over
seeds).  $\bar C$ denotes $\bar{C}_i$; $f_{\rm meas}=V_{\rm dg}/V_{\rm off}$
is measured directly, $f_{\rm form}$ from Eq.~\eqref{eq:fp_f_formula},
and $f_{\rm ex}$ from the exact trade-off
relation~\eqref{eq:fp_f_exact} with the measured $V_{\rm dg}$.
$q=3$ and kurtosis $3$ are the leading-order GOE predictions.}
\label{tab:fp_numerics}
\begin{tabular}{c c c c c c c c c c c}
\toprule
model & $D$ & $q$ & $\bar C$ & $\langle f^2\rangle_B$ & kurtosis
& $f_{\rm meas}$ & $f_{\rm form}$ & $f_{\rm ex}$ & id.\ error \\
\midrule
GOE & 512 & 2.98 & $-3.5\times10^{-3}$ & 0.994 & 2.99
& 2.036 & 2.023 & 2.032 & $<10^{-14}$ \\
GOE & 2048 & 3.00 & $-9.9\times10^{-4}$ & 1.001 & 3.02
& 1.982 & 1.988 & 1.982 & $<10^{-14}$ \\
sparse & 256 & 32.3 & $-0.21$ & 1.27 & 34.1
& 42.9 & 40.9 & 41.7 & $<10^{-14}$ \\
sparse & 512 & 66.6 & $-0.30$ & 1.31 & 63.3
& 88.3 & 73.5 & 88.9 & $<10^{-14}$ \\
sparse & 1024 & 138.8 & $-0.29$ & 1.34 & 123.0
& 191.6 & 174.8 & 192.1 & $<10^{-14}$ \\
sparse & 2048 & 272.6 & $-0.29$ & 1.34 & 267.9
& 369.8 & 332.4 & 372.7 & $<10^{-14}$ \\
\bottomrule
\end{tabular}
\end{table*}

Three conclusions follow.  First, all five exact identities (row sum
rule, trace normalization, trade-off, covariance conservation, and
the product-resolvent identity) hold to machine precision
($\lesssim10^{-14}$) for \emph{both} models, confirming that they are
theorems independent of any ETH or chaos assumption.  Second, in the
chaotic model the Gaussian closure is quantitatively accurate:
$q=3.00$, kurtosis $3.02$, and the fluctuation ratio
$f_{\rm meas}=1.982$ agrees with the first-principles prediction
$f_{\rm form}=1.988$ to $0.3\%$ at $D=2048$---the flat GOE value
$f=2(1-1/D)$ of Eq.~\eqref{eq:fp_f_GOE}.  Third, the structured model
realizes the predicted failure mode of the decorrelation closure:
$q$ grows to $273$, the kurtosis to $268$, and $f$ grows from
$42.9$ to $369.8$ as
$D$ increases from $256$ to $2048$, while the exact trade-off
relation~\eqref{eq:fp_f_exact} continues to hold to a few percent
throughout.  This is the microscopic mechanism by which
structured (non-ergodic) dynamics
changes the diagonal--off-diagonal fluctuation ratio: the persistence
of the two-channel amplitude covariance $\bar{C}_i$ and the
associated non-Gaussian overlap statistics.


\end{document}